\numberwithin{equation}{section}
\numberwithin{table}{section}
\numberwithin{figure}{section}
\appto\appendix{\addtocontents{toc}{\protect\setcounter{tocdepth}{1}}}
\definecolor{highlightColour}{RGB}{0,0,150}
\definecolor{lightGrey}{RGB}{225,225,225}
\newcommand{\vect}[1]{\boldsymbol{#1}}
\newcommand{\R}{\mathbb{R}}
\newcommand{\Z}{\mathbb{Z}}
\newcommand{\N}{\mathcal{N}}
\newcommand{\A}{\mathcal{A}}
\newcommand{\C}{\mathbb{C}}
\newcommand{\X}{\mathcal{X}}
\newcommand{\pl}[1]{\left<#1\right>}
\newcommand{\g}{\vect{g}}
\newcommand{\cltrad}{\,\hat{\oplus}\,}
\newcommand{\trop}{\text{Tr}}
\newcommand{\G}[1]{\operatorname{Gr}({#1})}
\newcommand{\Conf}[1]{\widetilde{\operatorname{Gr}}({#1})}
\newcommand{\PConf}[1]{\widetilde{\operatorname{Gr}}_+({#1})}
\newcommand{\TPTC}[1]{\widetilde{\operatorname{Tr}}_+({#1})}
\newcommand{\pTPTC}[1]{\widetilde{\operatorname{pTr}}_+({#1})}
\newtheorem{theorem}{Theorem}[section]
\newtheorem{lemma}[theorem]{Lemma}
\theoremstyle{definition}
\newcommand{\be}{\begin{equation}}
\newcommand{\ee}{\end{equation}}
\title{Singularities of eight- and nine-particle amplitudes from cluster algebras and tropical geometry}
\author{Niklas Henke, Georgios Papathanasiou}
\affiliation{DESY Theory Group, DESY Hamburg, Notkestraße 85, D-22607 Hamburg}
\emailAdd{niklas.henke@desy.de}
\emailAdd{georgios.papathanasiou@desy.de}
\preprint{DESY 21-018}
\abstract{
	We further exploit the relation between tropical Grassmannians and $\operatorname{Gr}(4,n)$ cluster algebras in order to make and refine predictions for the singularities of scattering amplitudes in planar $\mathcal{N}=4$ super Yang-Mills theory at higher multiplicity $n\ge 8$. As a mathematical foundation that provides access to square-root symbol letters in principle for any $n$, we analyse infinite mutation sequences in cluster algebras with general coefficients. First specialising our analysis to the eight-particle amplitude, and comparing it with a recent, closely related approach based on scattering diagrams, we find that the only additional letters the latter provides are the two square roots associated to the four-mass box. In combination with a tropical rule for selecting a finite subset of variables of the infinite $\operatorname{Gr}(4,9)$ cluster algebra, we then apply our results to obtain a collection of  $3,078$ rational and $2,349$ square-root letters expected to appear in the nine-particle amplitude. In particular these contain the alphabet found in an explicit 2-loop NMHV symbol calculation at this multiplicity.
}
\begin{document}

\maketitle

\section{Introduction}
The study of scattering amplitudes in perturbative quantum field theories has led to a plethora of new insights, not only with regard to its direct application to phenomenology, but also in pure mathematics. Much of this progress did come from the analysis of amplitudes in $\N=4$ super Yang-Mills theory and its planar limit (pSYM). Being the simplest interacting four-dimensional gauge theory, it allows to recognize some of the underlying intricate mathematical structures more easily, which in some cases were also successfully transferred to theories more closely aligned to nature, see e.g.~\cite{Duhr:2012fh,Henn:2020omi}.

Explicit results as well as general arguments~\cite{ArkaniHamed:2012nw} (see however also~\cite{Brown:2020rda} for certain subtleties) suggest that the functions which describe the (appropriately normalised~\cite{Bern:2005iz, Alday:2009dv, Yang:2010as,Dixon:2014iba}) $\N = 4$ pSYM amplitudes in the maximally helicity violating (MHV) or next-to-MHV (NMHV) configurations are restricted to the class of \emph{multiple polylogarithms} (MPLs), a class of functions well-established in the mathematics literature~\cite{Gonch3,Gonch2,Brown:2011ik}. They can be represented as iterated integrals over rational integration kernels, or conversely be defined recursively with respect to their derivatives: An MPL $F_w$ of weight $w$ obeys
\begin{equation}
	dF_w = \sum_{\phi_\alpha} F^{\phi_\alpha}_{w-1} d\ln\phi_\alpha\,,
\end{equation}
with $F^{\phi_\alpha}_{w-1}$ being an MPL of weight $w-1$ and MPLs of weight $1$ being usual logarithms. The \emph{letters} $\phi_\alpha$ encode the branch-cut and singularity structure of the function $F_w$. Based on this definition, one recursively constructs the map $\mathcal{S}$ as
\begin{equation}
	\mathcal{S}\left[F_w\right] = \sum_{\phi_\alpha} \mathcal{S}\left[F^{\phi_\alpha}_{w-1}\right] \otimes \ln\phi_\alpha\,.
\end{equation}
It maps an MPL of weight $w$ to its \emph{symbol}~\cite{Goncharov:2010jf}, an $w$-fold tensor product of logarithms of the letters $\phi_\alpha$. The union of all the letters is called the \emph{symbol alphabet} and is the starting point of the \emph{cluster bootstrap}, see~\cite{Caron-Huot:2020bkp} for a recent review.

Whereas in theory the symbol alphabet and the entire amplitude can be computed via Feynman diagrams~\cite{DelDuca:2009au,DelDuca:2010zg}, this method becomes unwieldy very quickly with increasing loop number. Instead of directly computing the amplitudes, the cluster bootstrap attempts to first obtain the alphabet of the amplitude's symbol by some alternative way. Utilizing the observation that an $L$-loop (N)MHV amplitude is given by a weight $2L$ MPL, the space of all weight $2L$ symbols is then constructed from the alphabet. After fixing the amplitude's symbol from this space using consistency and physical constraints, the symbol can be integrated to obtain the actual function.

A key insight for this boostrap program is the observation that the letters of $n$-particle scattering are cluster $\A$-variables of the \emph{cluster algebra} associated to the Grassmannian $\G{4,n}$~\cite{Golden:2013xva}, following the emergence of these structures at the level of the integrand~\cite{ArkaniHamed:2009dn,ArkaniHamed:2012nw}. Due to the dual conformal symmetry of the theory~\cite{Drummond:2007au,Drummond:2006rz,Bern:2006ew,Bern:2007ct,Alday:2007he}, a certain quotient of the Grassmannian -- the \emph{configuration space} $\Conf{4,n}$ of $n$ points in complex projective space $\mathbb{P}^3$ -- corresponds to the space of kinematics of $n$-particle scattering, which can be conveniently described in terms of momentum twistor variables~\cite{Hodges:2009hk}. 

In cluster algebras, see section~\ref{sec:Background} or~\cite{1021.16017,1054.17024,CAIII,CAIV} for more details, the $\A$-variables are organized in overlapping sets, the \emph{clusters}, which are connected by an operation called \emph{mutation}. Starting from an initial cluster, the cluster algebra and thus all of its $\A$-variables are constructed by performing all possible mutations. In this way, the cluster algebra allows to obtain the amplitude's symbol alphabet and thus ultimately its symbol. 

This boostrap program has been successfully applied to the six-particle amplitude with up to seven loops~\cite{Dixon:2011pw,Dixon:2011nj,Dixon:2013eka,Dixon:2014voa,Dixon:2014xca,Dixon:2014iba,Dixon:2015iva,Caron-Huot:2016owq,Caron-Huot:2019vjl,Caron-Huot:2019bsq} (see also~\cite{Chestnov:2020ifg} for some more recent higher-loop results in its codimension-1 double-scaling limit) and for the seven-particle amplitude with up to four loops~\cite{Drummond:2014ffa,Dixon:2016nkn,Drummond:2018caf,Dixon:2020cnr}. For many years however, two major obstructions prevented expanding the program to higher multiplicity. First of all, the relevant cluster algebras become infinite for $n\geq 8$, that is they contain infinitely many variables. Whereas it is possible that with increasing loop numbers ever more relevant discontinuities of the amplitude and thus new letters appear, it is believed that the amplitude requires only a finite number of letters, in line with finite number of its Landau singularities~\cite{Prlina:2018ukf}, as obtained by the amplituhedron~\cite{Arkani-Hamed:2013jha,Arkani-Hamed:2017vfh}. Furthermore, cluster variables are always rational functions (in momentum twistor variables), whereas also square-root letters are required to describe all amplitudes, as is for example the case for the eight-particle two-loop NMHV amplitude~\cite{Zhang:2019vnm}.

Recently, in~\cite{Drummond:2019qjk, Drummond:2019cxm, Arkani-Hamed:2019rds, Henke:2019hve} it has been proposed that both of these obstructions can be overcome by considering the \emph{tropical} version of the configuration space $\Conf{k,n}$~\cite{Speyer2003,Speyer2005}, or equivalently its dual geometric object~\cite{Arkani-Hamed:2019mrd}. The relevance of tropical Grassmannians for scattering processes was first established in the context of tree-level amplitudes of generalized biadjoint scalar theory~\cite{Cachazo2019,Cachazo:2019apa}, see also~\cite{Sepulveda:2019vrz,Borges:2019csl,Early:2019zyi,Cachazo:2019ble,Cachazo:2019xjx,Early:2020hap,Cachazo:2020wgu} and references therein for recent progress in this direction, as well as~\cite{He:2020ray,Drummond:2020kqg,Lukowski:2020dpn,Arkani-Hamed:2020cig, Arkani-Hamed:2020tuz,Sturmfels:2020mpv,Parisi:2021oql} for work on further connections between (duals of) tropical Grassmannians, cluster algebras, and scattering amplitudes. In essence, the tropical version of the configuration space is obtained by replacing addition with the minimum and multiplication by addition in the polynomials parameterising (the totally positive) $\Conf{k,n}$. The resulting structure is a \emph{fan}, a collection of cones obtained as the positive span of the rays, half lines emanating from the origin.

We can also associate such a fan with the cluster algebra, with each $\A$-variable corresponding to a ray and each cluster to a cone. It turns out that for finite cluster algebras this \emph{cluster fan} triangulates the fan of the positive tropical configuration space~\cite{Speyer2005}, that is the former splits up the cones of the latter into simplicial cones. The cluster fan does so using the rays of the tropical fan as well as \textit{redundant rays} -- additional rays that are not required to describe the tropical fan. Moving to infinite cluster algebras, it is therefore natural to expect that the nature of their infinities can be interpreted as an infinite number of redundant triangulations. With the tropical fan being inherently finite, removing the redundant rays provides a \emph{tropical selection rule} that can be used to obtain a finite subset of $\A$-variables from the infinite cluster algebra, in other words to truncate it. This selection rule is also consistent with the case of finite cluster algebras, e.g. those describing seven particle scattering and below ($k=4$, $n\leq 7$), where it selects all $\A$-variables of the cluster algebra.

The rays of the cluster algebra truncated in this way are only a subset of the tropical rays for $n\geq 8$. However, one of the main ideas behind the aforementioned works~\cite{Henke:2019hve, Drummond:2019cxm, Arkani-Hamed:2019rds}, was that one may also access additional rays, and thus also the generalisations of $\A$-variables or letters associated with them, which turn out to contain square roots, by also considering limits of infinite mutation sequences starting from within the truncated cluster algebra. In particular these ideas were applied to the then first nontrivial case at $n=8$, where sequences of a rank-two affine or $\operatorname{A}^{(1)}_1$ subalgebra prove sufficient for obtaining all limit rays. While there exists a one-to-one mapping between limit rays and square-root letters, in~\cite{Drummond:2019cxm} it was additionally noticed that by assuming a one-to-many mapping that also takes the direction of approach to the ray into account in a certain way, then one astonishingly obtains precisely the 18 square-root letters found in the explicit expression for the two-loop NMHV eight-particle amplitude~\cite{Zhang:2019vnm}.\footnote{Note that in the literature the square-root letters are sometimes referred to as non-rational or algebraic, even though the rational part of the alphabet is of course algebraic as well.}

In this article, building on our previous work, we make an important step towards generalising these exciting developments to arbitrary multiplicity $n$. First, we analyse infinite mutation sequences of rank-two affine cluster algebras with general coefficients, which allow us to trivially obtain predictions for square-root letters for any such subalgebra of $\G{4,n}$. As a cross-check, we then apply our procedure to the known eight-particle case, not only finding agreement with the earlier analysis, but also comparing it to more recent predictions based on the closely related \emph{scattering diagram} approach~\cite{Herderschee:2021dez}. As essentially all proposals for $n$-particle alphabets to date correspond to different compactifications of the region of positive kinematics of $\Conf{4,n}$, concretely this approach amounts to a refinement of the tropical compactification, which at first sight seems to predict another 34 square-root letters on top of the two-loop NMHV ones. Very interestingly, we find that almost all of these naively square-root letters can be combined to yield rational ones that are already contained in the alphabets of~\cite{Henke:2019hve, Drummond:2019cxm, Arkani-Hamed:2019rds}: The only exceptions are the two cyclically inequivalent realisations of the four-mass box square-root, $\sqrt{\Delta_{i,i+2,i+4,i+6}}$ with $i=1,2$ and $i \sim i \mod 8$, formed by eight massless legs, see e.g.~\cite{Bourjaily:2013mma}.

Armed by this almost complete overlap between the two methods, we then move on to apply our results to the nine-particle amplitude. There are several good reasons to do so: First, because the associated cluster algebra $\G{4,9}$ is significantly ``more'' infinite than $\G{4,8}$~\cite{fomin2006cluster}, so it is not a priori certain that methods initially developed on the ground of the latter will have more general applicability. Second, because for $n=8$ there exist many subtleties in properly exploring the very interesting property of cluster adjacency~\cite{Drummond:2017ssj}, dictating how different symbol letters are allowed to sit next to each other in the symbol, see e.g.~\cite{Golden:2019kks}. Last but not least, because the field of amplitudes, and its impact to phenomenology, was shaped by carrying out initially very challenging computations, with the insights gained by the explicit results allowing their subsequent trivialisation.

Combining the tropical selection rule for rational letters with the infinite mutation sequence technology, we thus find a collection of 3,078 rational and 2,349 square-root letters expected to appear in the nine-particle amplitude, associated to 3,078 and 324 tropical rays, respectively. As a nontrivial check of our proposal, we confirm that it also contains the alphabet of the 2-loop NMHV nine-particle amplitude, whose symbol was recently computed in~\cite{He:2020vob}.

As perhaps hinted by the leap in computational complexity between the $\G{4,8}$ and $\G{4,9}$ cluster algebras, by comparing with other ways for obtaining the rays (which however provide no information on the letters associated to them), we notice that there also exist 27 rays of the minimal $\Conf{4,n}$ tropicalisation respecting the symmetries of the amplitude, which are not accessible by our procedure. We nevertheless find it very intriguing that we only fall short by such a small margin. While understanding what kind of generalisations of cluster variables could be associated to these rays, and whether they are relevant for amplitude singularities,\footnote{For example, it is not clear if the missing rays are ``just'' associated to more intricate algebraic letters beyond square roots, or point towards the need for significantly more complicated, elliptic generalisations of MPLs starting to contribute at $n=9$.} are open questions we leave for future work, here we also touch on one possibility towards addressing them. In particular, we consider more general infinite mutation sequences of higher rank ($\operatorname{A}^{(1)}_m$) algebras, and present some preliminary evidence that these may not be accessed by any type of infinite mutation sequence starting from within the cluster algebra. As the scattering diagram approach also relies on these sequences, this therefore seems to suggest that a radically different idea might be needed to tackle these exciting questions, and calls for the generation of explicit new amplitude data that will reveal it to us.

The plan of the rest of this article is as follows. In section~\ref{sec:Background}, we first briefly review some basic notions of the totally positive tropical Grassmannian, as well as the closely related, partial tropicalisation of the configuration space $\Conf{4,n}$ that will be relevant for scattering amplitudes. We also review (Grassmannian) cluster algebras, focusing especially on the formalism of coefficients, which will be advantageous for our purposes. In section~\ref{sec:A11} we present the mathematical foundation of our analysis, the general solution of infinite mutation sequences in the affine rank-$2$ cluster algebra of type $\operatorname{A}^{(1)}_1$. We then apply these results to reobtain the eight-particle alphabet as a check, and also compare with the more recent scattering diagram approach. Section~\ref{sec:NineParticle} is devoted to our main application, new predictions for the letters of the nine-particle amplitude. Section~\ref{sec:Am1} discusses higher-rank generalisations of infinite mutation sequences as well as their inherent limitations, and finally section~\ref{sec:Conclusion} contains our conclusions and outlook.

Results similar to those presented in this article were independently obtained in~\cite{Ren:2021ztg}.

\section{Tropical fans and cluster algebras}
\label{sec:Background}
In this section we introduce the mathematical concepts utilized throughout the article. We begin by reviewing the space of kinematics of $\N = 4$ pSYM, the configuration space $\Conf{4,n}$ of $n$ points in complex projective space $\mathbb{P}^3$, which is constructed as the quotient of the Grassmannian $\G{4,n}$ over the complex torus. Following this, we discuss how the configuration space is tropicalised and review the associated fan structures. Furthermore, we briefly review cluster algebras with coefficients -- a framework required for the analysis of the infinite mutation sequences -- and relate the cluster fan to the tropical fans, allowing us to review the selection rule that will be used in section~\ref{sec:NineParticle} to obtain a finite alphabet from the infinite cluster algebra.

\subsection{Grassmannians and configuration spaces}
\label{sec:BackgroundGrConf}
The \emph{Grassmannian} $\G{k,n}$ can be defined as the space of $k$-dimensional planes through the origin in an $n$-dimensional vector space. Since each of these planes is spanned by $k$ $n$-vectors, $\G{k,n}$ can be realized as $k \times n$ matrices modulo the $GL(k)$ transformations corresponding to a change of basis. The minors of this matrix are the \emph{Plücker variables} $\pl{i_1\dots i_k}$ for $i_j = 1,\dots, n$ which satisfy the Plücker relations
\begin{equation}
	\label{eq:plueckerRelations}
	\pl{i_1\dots i_r [i_{r+1} \dots i_k}\pl{j_1 \dots j_{r+1}] j_{r+2} \dots j_k} = 0\,,
\end{equation}
whereas the square brackets denote antisymmetrisation over the included indices. These relations may also be used as an alternative starting point to contsruct the Grassmannian. Starting with the ring of integer coefficient polynomials in the
\begin{equation}
D=\binom{n}{k}
\end{equation}
Plücker variables, we can identify $\G{k,n}$ with the set of solutions to the Plücker relations, eq.~\eqref{eq:plueckerRelations}, quotiened by the global scaling $\pl{i_1\dots i_k} \rightarrow t\cdot\pl{i_1\dots i_k}$ for $t\in\C\setminus\{0\}$, which leaves the Plücker relations invariant.

Further to this scaling, the Plücker relations are also invariant under the local scaling $\pl{i_1\dots i_k} \rightarrow t_{i_1}\cdots t_{i_k}\cdot\pl{i_1\dots i_k}$ for $t_{i_1},\dots, t_{i_k}\in\C\setminus\{0\}$. If we also quotient by this transformation, we obtain the \emph{configuration space} $\Conf{k,n}$, which for $k=4$ is the space of kinematics of $n$-particle $\N = 4$ pSYM amplitudes considered in this article, expressed in terms of momentum twistors $Z_{i_1},\ldots, Z_{i_n}$ \cite{Hodges:2009hk}. While the Grassmannian has dimension $k(n-k)$, the configuration space has dimension 
\begin{equation}
	d = (k-1)(n-k-1)\,. 
\end{equation}

By further also restricting all ordered Plücker variables to be positive, we obtain the positive configuration space $\PConf{k,n}$. This space can be parameterised in terms of the so-called web-parameterisation, see~\cite{Speyer2005} or the appendix of the authors' previous article~\cite{Henke:2019hve}. In this parameterisation the Plücker variables are polynomials in the $d$ web-variables $x_1,\dots,x_d$, like for example for $\pl{25}$ of $\PConf{2,5}$, whose web-parameterisation is given in terms of the two independent coordinates $x_1, x_2$ by
\begin{equation}
	\label{eq:webParamEx}
	\pl{25} = 1+x_1+x_1x_2\,.
\end{equation}

\subsection{Partially tropicalised configuration space $\pTPTC{4,n}$}
\label{sec:BackgroundTropConf}
\textit{Tropical geometry} is essentially algebraic geometry over the tropical semifield -- the real numbers with taking the minimum as \emph{tropical addition} $\oplus$, and addition as \emph{tropical multiplication} $\otimes$, see e.g. the reviews~\cite{Mikhalkin2006,Maclagan2012,Brugalle2015}. In practice, at least concerning the application of tropical geometry in this article, this means that we start with a geometric object that is described in terms of polynomials and replace addition with the minimum and multiplication with addition.\footnote{In the mathematically precise formulation of tropical geometry, see e.g.~\cite{Speyer2003}, the starting point is a variety attached to a polynomial ideal, whose tropical variety is then constructed. Since we only need the tropical version of the positive configuration space, we will make use of its web-parameterisation and review only the neccessary mathematics following \cite{Speyer2005}.}

Continuing the example of $\PConf{2,5}$, we start with the web-parameterisation of the Plücker variables and tropicalise the parameterisation polynomials, e.g.
\begin{equation}
	\label{eq:tropPolEx}
	\pl{25} = 1+x_1+x_1x_2 \quad \rightarrow \quad \trop \left(\pl{25}\right) = \min\left(0, x_1, x_1 + x_2\right)\,.
\end{equation}
Note that by construction the numerical coefficients of the polynomials are mapped to zero. These so-called \textit{tropical polynomials} are piecewise linear functions whose domains of linearity are cut out by the \textit{tropical hypersurfaces}. They can be computed by setting two of the terms in the minimum equal and smaller or equal than the remaining terms, e.g. for eq.~\eqref{eq:tropPolEx} we have
\begin{equation}
	\label{eq:tropHSEx}
	0 = x_1 \leq x_1 + x_2\,,\quad	0 = x_1 + x_2 \leq x_1\,,\quad	x_1 = x_1 + x_2 \leq 0\,.
\end{equation}
The regions cut out by these tropical hypersurfaces are actually \textit{convex cones} -- subsets of $\R^d$ closed under linear combination with positive coefficients. The regions where $d-1$ linearly independent hypersurfaces intersect are called \textit{rays} -- lines emanating from the origin -- which span the convex cones. Together, the rays and cones form a \textit{fan}, the main object of interest in this paper. For the example of $\pl{25}$ of $\PConf{2,5}$, we obtain three tropical hypersurfaces in $\R^2$ from eq.~\eqref{eq:tropHSEx}, which, due to the low dimension, are identical to the rays. These three rays span three different cones. The fan is illustrated in fig.~\ref{fig:fanExample}.
\begin{figure}[ht]
	\centering
	\includegraphics[width=0.35\textwidth]{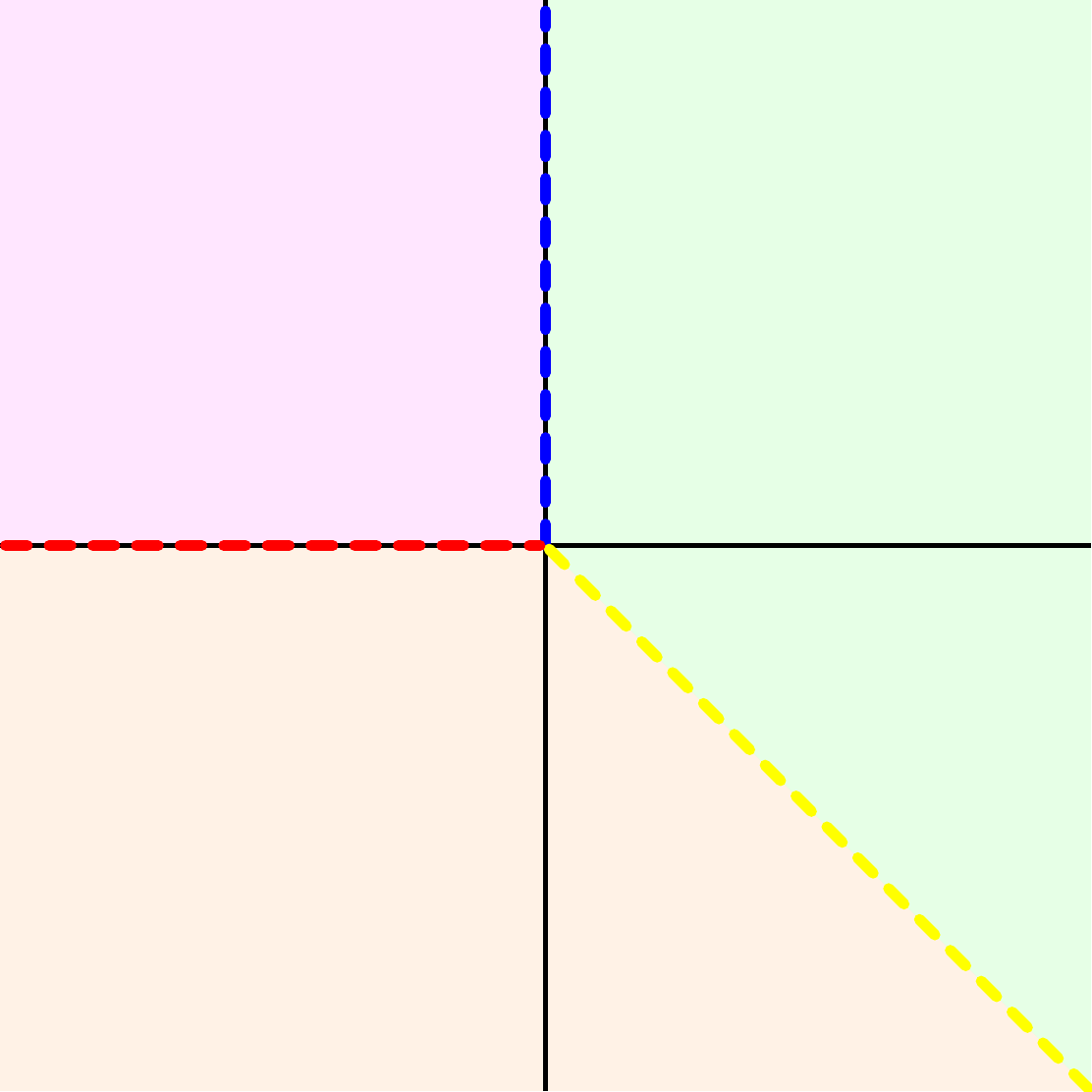}
	\caption{Fan associated to the tropical polynomial $\trop \left(\pl{25}\right) =  \min\left(0, x_1, x_1 + x_2\right)$. The rays are depicted as dashed lines in blue, red and yellow, respectively. The cones are depicted in the composite color of the two rays by which they are spanned.}
	\label{fig:fanExample}
\end{figure}

To construct the entire positive tropical configuration space $\TPTC{k,n}$, we tropicalise all paramterised Plücker variables and obtain their tropical hypersurfaces resulting in a fan in $\R^d$ for each of the Plücker variables. The fan $F_{k,n}$ of $\TPTC{k,n}$ is then given as the common refinement -- essentially the union -- of the individual fans.\footnote{Note that while the tropical version of a variety is actually obtained as the intersection of all tropical hypersurfaces, we have to take the common refinement here due to working with a parameterisation of the variety in question, see~\cite{Speyer2005}.} Note that the resulting fan is not just the union of rays and cones of the individual fans since the tropical hypersurfaces of one fan might cut a cone of another fan into several cones. For more details, see also~\cite{Henke:2019hve}. By using the tropicalised Plücker parameterisation, eq.~\eqref{eq:tropPolEx}, we can alternatively obtain an embedding of the $d$-dimensional fan in $\R^D$.

Whereas this construction is the canonical way to tropicalise the positive configuration space, we may choose to only tropicalise a subset of all Plücker variables. The resulting fan is the common refinement of the corresponding subset of fans associated to the Plücker variables and thus a coarser version of the fully tropicalised fan. This means that the partial fan consists of less rays and cones with some of the cones of the full refining those of the partial fan. In this paper we will be almost exclusively be focusing on the following \emph{partial tropicalisation} of the configuration space $\PConf{4,n}$,
\be
\pTPTC{4,n}: \begin{array}{c}
\pl{ii+1jj+1}\to \trop \left(\pl{ii+1jj+1}\right)\,,\\ \pl{ij-1jj+1}\to \trop \left(\pl{ij-1jj+1}\right)\,,
\end{array}\quad i=1,\ldots,n\,,
\ee
namely we only tropicalise the Plücker variables with indices either pairwise adjacent, or forming an adjacent triplet. The associated fan will be denoted by $pF_{4,n}$. This choice is believed to be the most relevant for $n$-particle amplitudes in $\N = 4$ pSYM, as it leads to predictions for their singularities that agree with the known $n=6,7$ cases, and more generally respects the parity symmetry of MHV amplitudes in a minimal way \cite{Drummond:2019cxm,Arkani-Hamed:2019rds,Drummond:2020kqg}. In contrast, $\TPTC{4,n}$ is not parity invariant.

\subsection{Cluster algebras with coefficients}
\label{sec:BackgroundCACoefs}
Another remarkable property of the Grassmannian $\G{k,n}$ is that its coordinate ring carries the structure of a \emph{cluster algebra}. A rank $r$ cluster algebra consists of the so-called $\A$-variables -- rational functions in $r$ arguments -- organized in overlapping sets of $r$ variables, the \emph{clusters}, which are connected to each other by a birational transformation, the \emph{mutation}. Finally, for each cluster we have the adjacency matrix $B$, a $r\times r$ antisymmetrisable matrix encoding the connection among the variables within the cluster. If the adjacency matrix is antisymmetric, we can equivalently represent the cluster by a quiver, where nodes correspond to cluster variables, and the absolute value and sign of the entries of $B$ corresponds to the number of arrows between nodes and their direction, respectively. 

If we can generate only a finite number of distinct clusters and cluster variables with mutation, the cluster algebra is called finite (and otherwise infinite). Finite cluster algebras are completely classified in terms of the Cartan-Killing classification of semisimple Lie algebras. In practice, this means that if a cluster algebra has a cluster whose quiver is equivalent to a Dynkin diagram, the cluster algebra is of the corresponding type. For more details, see e.g.~\cite{1021.16017,1054.17024,CAIII,CAIV}.

In the physics literature~\cite{Golden:2013xva}, it is common to also consider additional \emph{frozen variables}, associated to \emph{frozen nodes} in the quiver. The frozen variables behave like the $\A$-variables except that they are never mutated and that there are no arrows between them. The $\A$-variables and frozen variables together are referred to as \emph{$\A$-coordinates}. If we consider $M$ frozen nodes, the adjacency matrix is extended to a $(r+M)\times r$ matrix with the $r\times r$ part encoding the connections between the $r$ $\A$-variables being the \emph{principal part}. As an example, the quiver of the initial cluster of $\G{2,5}$ is  depicted in figure~\ref{fig:clusterAlgebraExample}. Note that the $\A$-variables are sometimes referred to as \emph{unfrozen variables}.
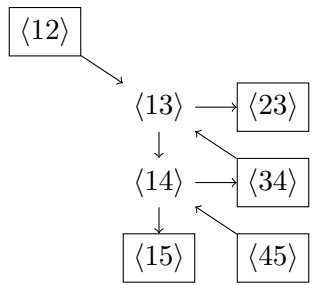
\begin{figure}[ht]
	\centering
	\begin{tikzpicture}[scale=1.0]
		\node at (0,0) (a1) {$\pl{13}$};
		\node at (1.5,0) [draw, rectangle] (a4) {$\pl{23}$};
		
		\node at (0,-1) (b1) {$\pl{14}$};
		\node at (1.5,-1) [draw, rectangle] (b4) {$\pl{34}$};
				
		\node at (0,-2) [draw, rectangle] (e1) {$\pl{15}$};
		\node at (1.5,-2) [draw, rectangle] (e4) {$\pl{45}$};
		
		\node at (-1.5,1) [draw, rectangle] (f) {$\pl{12}$};
		
		\draw[->] (f) -- (a1);
		
		\draw[->] (a1) -- (a4);
		\draw[->] (b1) -- (b4);		

		\draw[->] (a1) -- (b1);
		\draw[->] (b1) -- (e1);
		
		\draw[->] (b4) -- (a1);
		\draw[->] (e4) -- (b1);		
	\end{tikzpicture}
	\caption{Initial seed of the rank-2 cluster algebra of $\G{2,5}$. Each node corresponds to an $\A$-coordinate. The nodes surrounded by a box are associated to frozen variables, whereas the unboxed nodes are associated to $\A$-variables.}
	\label{fig:clusterAlgebraExample}
\end{figure}

In the mathematics literature, there is another equivalent description using the so-called \emph{coefficients}~\cite{CAIV}. As we will detail momentarily, in this formalism all frozen variables connected to a given $\A$-variable are grouped into a single coefficient, which also changes under mutation and is associated to the $\A$-variable. The main advantage of cluster algebras with general coefficients is that they can be constructed once, and then specialized to \emph{any} particular choice of frozen variables at the very end. This allows for a unified treatment of what would be several distinct computations in the language of frozen variables, and it will be crucial in obtaining the singularities of eight- and nine-particle amplitudes, as described in sections~\ref{sec:A11} and~\ref{sec:NineParticle}, in essentially one go.

We now consider a rank-$r$ cluster algebra with $r$ $\A$-variables $a_1,\dots,a_r$ and $M$ frozen variables $a_{r+1},\dots,a_{r+M}$ in the initial cluster. Denoting the components of the adjacency matrix of the initial cluster -- a $(r + M)\times r$ integer matrix -- by $b^0_{ij}$, we attach a \emph{coefficient} $y_i$ to each $\A$-variable $a_i$ in the initial cluster via\footnote{Since by construction there are no edges between frozen variables, the $(r+M)\times r$ \emph{extended adjacency matrix} is sufficient to describe all adjacencies in the quiver. In the framework of cluster algebras with coefficients, we instead only consider the $r\times r$ \emph{principal part} of that matrix and may use eq.~\eqref{eq:coeffs} as the definition of the extended adjacency matrix in any cluster.}
\begin{equation}
	\label{eq:coeffs}
	y_i = \prod_{j=r+1}^{r+M} a_j^{b^0_{ji}}\,.
\end{equation}
The special case that a cluster algebra of rank $r$ has $r$ frozen variables such that $y_i = a_{r+i}$ is referred to as \emph{principal coefficients}. As can already be seen from eq.~\eqref{eq:coeffs}, the coefficients are closely related to the $\X$-variables of Fock and Goncharov \cite{FG03b}, $x_i$ with $i=1,\dots,r$. The latter are defined, in any cluster of the cluster algebra, by
\begin{equation}
	\label{eq:XVars}
	x_i = \prod_{j=1}^{r} a_j^{b_{ji}} \cdot y_i\,.
\end{equation}

When mutating the cluster at one node, we obtain another cluster with mutated variables and coefficients. Hence, as is usually done in the formalism of cluster algebras with coefficients, we label clusters by an index $t$ and denote the variables of such cluster by $a_{i;t}$, whereas the index $i$ labels the position of the variable within the cluster. It is common to label the initial cluster by $t=0$ or to just drop the index, if it is clear from context. Consider now the mutation of cluster $t$ at node $j$ resulting in the cluster $t'$. The mutation rule for the $\A$-variables is given by
\begin{equation}
	\label{eq:AMutationRule}
	a_{j;t'} = \frac{y_{j;t}\prod_{i=1}^ra_{i;t}^{\left[b^t_{ij}\right]_+}+\prod_{i=1}^ra_{i;t}^{\left[-b^t_{ij}\right]_+}}{\left(1\cltrad y_{j;t}\right)a_{j;t}}\,,
\end{equation}
where $b^t_{ij}$ are the components of the adjacency matrix of cluster $t$, $\left[x\right]_+ = \max\left(0,x\right)$,  and with all other $\A$-variables remaining unchanged. In this formula, we have used the \emph{cluster-tropical addition}\footnote{Note that although closely related, cluster-tropical and tropical addition are not quite the same and hence denoted by $\cltrad$ and $\oplus$, respectively. Essentially, cluster-tropical addition on the monomials of frozen variables is given by tropical addition on the exponents of these variables.} $\cltrad$, which is defined on the frozen variables as
\begin{equation}
\label{eq:cltrad}
\prod_{i=r+1}^{r+M} a_i^{c_i} \cltrad \prod_{i=r+1}^{r+M} a_i^{d_i} = \prod_{i=r+1}^{r+M} a_i^{\min\left(c_i,d_i\right)}\,.
\end{equation}
Similar to the $\A$-variables, we also have a mutation rule for the coefficients, which is given by
\begin{equation}
	\label{eq:YMutationRule}
	y_{l;t'} = 
	\begin{cases}
		y_{j;t}^{-1}\quad&\text{if}\,\,l=j\,,\\
		y_{l;t}y_{j;t}^{\left[b^t_{jl}\right]_+}\left(1\cltrad y_{j;t}\right)^{-b^t_{jl}}\quad&\text{if}\,\,l\neq j\,.
	\end{cases}
\end{equation}
Since this mutation relation implies that the coefficients are always monomials in the frozen variables, which are the same in all clusters, cluster-tropical addition on the coefficients as given by eq.~\eqref{eq:cltrad} is well defined for all clusters. Note that the mutation rule for the $\X$-variables is the same as that of the coefficients except with normal addition instead of the cluster-tropical addition. The mutation rule for the adjacency matrix remains unchanged in comparison to the more familiar framework of frozen and unfrozen variables, and is given by
\begin{equation}
	\label{eq:BMutation}
	b^{t'}_{il} = \begin{cases}
		- b^t_{il}\quad&\text{if}\,\,i=j\,\,\text{or}\,\,l=j\,,\\
		b^t_{il}+\operatorname{sign}(b^t_{ij})[b^t_{ij}b^t_{jl}]_+\quad&\text{otherwise.}
	\end{cases}
\end{equation}
All mutation rules presented here are equivalent to those more commonly used in the physics literature, as can be easily seen by inserting the definition of the coefficients in terms of the frozen variables. 

Another advantage of considering cluster algebras with coefficients instead of frozen variables is that it makes the \emph{separation principle} manifest. Using eq.~\eqref{eq:XVars} to express $y_{j;t}$ in terms of the $\A$- and $\X$-variables of the cluster $t$, we can rewrite the mutation rule~\eqref{eq:AMutationRule} as
\begin{equation}
	\label{eq:AMutationRuleX}
	a_{j;t'} = \left(a_{j;t}\right)^{-1}\prod_{i=1}^ra_{i;t}^{\left[-b^t_{ij}\right]_+}\cdot\frac{1+x_{j;t}}{1\cltrad y_{j;t}}\,.
\end{equation}
The consequence of this factored form of the mutation relation is that, in the cases relevant to this article, any $\A$-variable can be written in such a way: a monomial in the initial $\A$-variables times some rational function in the initial $\X$-variables divided by its cluster-tropical version, that is we have
\begin{equation}
	\label{eq:AVarRayForm}
	a = \prod_{i=1}^r a_{i;0}^{g_i} \cdot \frac{F\left(x_{1;0},\dots,x_{r;0}\right)}{F_{\mathbb{T}}\left(y_{1;0},\dots,y_{r;0}\right)}\,,
\end{equation}
for some $\A$-variable $a$ and whereas $F_{\mathbb{T}}$ denotes the function obtained by replacing addition with cluster-tropical addition in the rational function $F$. In this way, we can associate a unique $\g$-vector, an integer vector in $\Z^r$ whose components are the $g_i$, to each $\A$-variable, for which we also can obtain a mutation rule from eq.~\eqref{eq:AMutationRule}, see e.g.~\cite{CAIV}.

However, for our purposes it is better to work with a modified version thereof. In order to more closely align the rays associated to the $\A$-variables to the rays of the totally positive tropical configuration space, we use a modified mutation rule to compute these \emph{cluster rays}, see also~\cite{Drummond:2019qjk,Henke:2019hve}. To construct this relation, we first attach a coefficient matrix $C$ to each cluster. In the initial cluster, it is given by $C_0 = \mathbf{1}_r$, whereas $\mathbf{1}_r$ is the $r\times r$ identity matrix. The mutation rule is given by
\begin{equation}\label{eq:CMutation}
	c^{t'}_{il} = \begin{cases}
		- c^t_{il}\quad&\text{if}\,\,i=j\,\,\text{or}\,\,l=j\,,\\
		c^t_{il} - [c^t_{ij}]_+b^t_{jl} + c^t_{ij}[b^t_{jl}]_+\quad&\text{otherwise.}
	\end{cases}
\end{equation}
Finally, we introduce the ray matrix $G$ for each cluster, whose columns are the cluster rays. By construction, in the initial cluster it is given by $G_0 = \mathbf{1}_r$. The mutation rule is given by
\begin{equation}
	\label{eq:rayMutation}
	g^{t'}_{il} = \begin{cases}
		g^t_{il}\quad&\text{if}\,\,i=j\,\,\text{or}\,\,l=j\,,\\
		-g^t_{il} + \sum_{m=1}^r\left(g^t_{im}[-b^t_{mj}]_+ + b^0_{im}[c_{mj}]_+\right)\quad&\text{otherwise.}
	\end{cases}
\end{equation}
The cluster fan, consisting of the cones spanned by the cluster rays in each cluster, is combinatorially equivalent the one obtained from the actual $\g$-vectors.

\subsection{Fans of cluster algebras}
\label{sec:BackgroundCAFan}
Besides the tropical fans reviewed above, we may also associate a fan to any cluster algebra. First, we construct the \emph{cluster polytope}, which is closely related to the \emph{exchange graph} of the algebra,\footnote{The exchange graph of the cluster algebra is the 1-skeleton of the cluster polytope.} by associating each cluster of a rank-$r$ cluster algebra to a vertex. If two clusters are related by mutating one of their variables, they are connected by a line. This $1$-dimensional face of the polytope may be alternatively described by fixing the $r-1$ $\A$-variables that are unchanged in the mutation and that are thus shared between the two clusters. Similarly, we obtain a $l$-dimensional face of the polytope by fixing $r-l$ variables that appear in a cluster together. The face is then bordered by all vertices that contain all these $r-l$ variables. 

The fan of the cluster algebra is taken to be the normal fan of this polytope, that is its rays are the inward-pointing normals to the codimension-$1$ faces of the polytope. In this way, a $l$-dimensional face of the polytope becomes a $(r-l)$ dimensional or codimension-$l$ face of the fan. For example, the $0$-dimensional vertices of the polytope correspond to the $r$-dimensional cones and the $r-1$ dimensional faces associated to each $\A$-variable correspond to the rays of the fan. We thus have a one-to-one association of $\A$-variables and rays of the cluster fan, see also table~\ref{tab:clusterFanRelations}. These rays are closely related to the $\g$-vectors associated to the variable and are obtained by a mutation rule, as sketched in the previous section.
\begin{table}[ht]
	\centering
	\begin{tabular}{|c|c|c|c|c|}
		\hline \hline
		\multirow{2}{*}{Algebra} & \multicolumn{2}{|c|}{Polytope} & \multicolumn{2}{|c|}{Fan} \\ \cline{2-5}
		& Dim. & Type & Dim. & Type \\ \hline \hline
		Cluster & $0$ & Vertex & $d$ & Cone  \\ \hline
		Mutation & $1$ & Line & $d-1$ & Facet \\ \hline
		\multicolumn{1}{;{3pt/2pt}c;{3pt/2pt}}{$\vdots$} & \multicolumn{2}{;{3pt/2pt}c;{3pt/2pt}}{$\vdots$} & \multicolumn{2}{;{3pt/2pt}c;{3pt/2pt}}{$\vdots$} \\ \hline 
		$\A$-variable & $d-1$ & Facet & $1$ & Ray \\ \hline 
		\hline
	\end{tabular}
	\caption{Comparison of the faces of a cluster algebra of rank $d$, its polytope and the cluster fan.}
	\label{tab:clusterFanRelations}
\end{table}

Remarkably, as first observed in~\cite{Speyer2005}, for finite cluster algebras of $\G{k,n}$ the cluster fan is a refinement of the fan of the totally positive tropical configuration space $\TPTC{k,n}$. That is, the cones of the cluster fan are all contained within cones of the tropical fan. Since the cones of the cluster fan are all simplicial, it triangulates $\TPTC{k,n}$. 

In this triangulation, however, the cluster algebra sometimes introduces \emph{redundant rays} -- rays of the cluster fan that are not tropical rays. Geometrically, redundant rays are not on a $1$-dimensional but a higher dimensional intersection of tropical hypersurfaces. Instead they are the positive linear combination of two tropical rays spanning some cone. This is illustrated on the left hand side of figure~\ref{fig:infiniteRedundantTriangulation}
\begin{figure}[ht]
	\centering
	\begin{subfigure}[b]{0.3\textwidth}
		\includegraphics[width=1\textwidth]{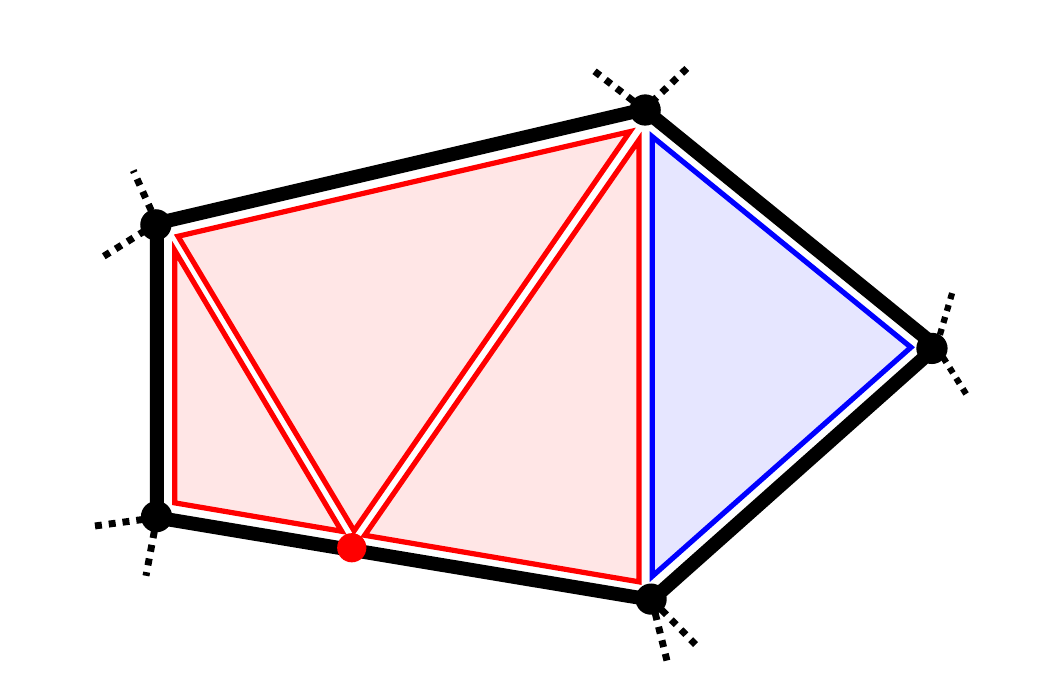}
		\caption{Finite case}
	\end{subfigure}
	\hskip 32pt
	\begin{subfigure}[b]{0.3\textwidth}
		\includegraphics[width=1\textwidth]{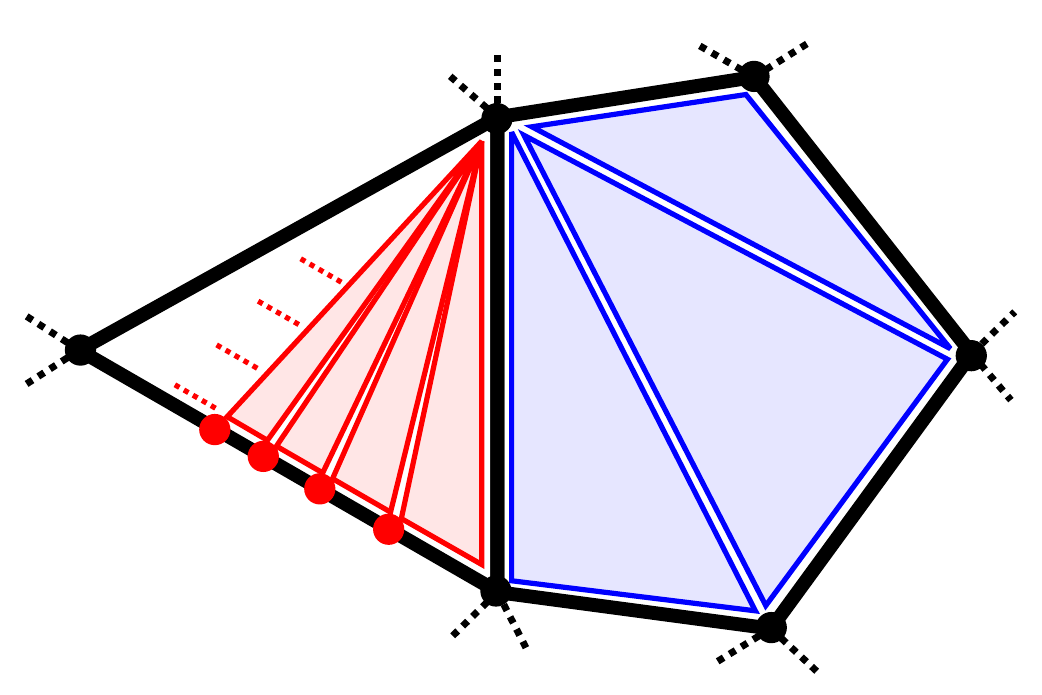}		
		\caption{Infinite case}
	\end{subfigure}
	\caption{Illustrative examples of the (redundant) triangulation of a tropical fan by a (a) finite and (b) infinite cluster algebra. Each of the figures depicts two cones of a $3$-dimensional fan intersected with the unit sphere $S^2$ in black. The cones and the redundant rays from the redundant triangulation are drawn in red, those from the non-redundant triangulation in blue.}
	\label{fig:infiniteRedundantTriangulation}
\end{figure}

If the cluster algebra is infinite, e.g. in our case of $\G{4,n}$ for $n\geq 8$, it consists of infinitely many $\A$-variables and thus also rays. In the regions of the ambient space $\R^d$ that are covered by the cluster fan, it again refines the fan $F_{k,n}$ of the tropical configuration space. Since the latter is by construction always finite, almost all of the cluster rays are redundant. It is therefore natural to expect that the nature of the infinities of the cluster algebra can be interpreted as a redundant triangulation of $F_{k,n}$ with infinitely many cones of the cluster fan containing redundant rays, as is illustrated on the right hand side of figure~\ref{fig:infiniteRedundantTriangulation}. Note that this (redundant) triangulation property also applies to the fan of any partial tropicalisation of $\PConf{k,n}$, such as $pF_{4,n}$, which is a coarser version of the fully tropicalised fan and hence also triangulated by the cluster fan.

To tame the infinity of the cluster algebra, we utilize this relation between the cluster fan and the fan $pF_{4,n}$ of the partially tropicalised positive configuration space and introduce the following selection rule: whenever we encounter a cluster containing a redundant ray we stop mutation in this direction and discard all such redundant clusters. Starting from the initial cluster, which by construction does not contain redundant rays, and mutating in this way, we obtain a finite subset of the infinite cluster algebra -- the \emph{truncated cluster algebra}. Each of the rational $\A$-variables in this subset is then by construction in one-to-one correspondence to a tropical ray of $pF_{4,n}$.

\section{Infinite mutation sequences and square-root letters}
\label{sec:A11}
In the previous section, we reviewed how the relation between (partially) tropicalised Grassmannians and $\G{4,n}$ cluster algebras always allows one to select a finite subset of $\A$-variables of the latter, even in the $n\ge 8$ case where they are infinite. This selection rule is then expected to yield the rational letters of the $n$-particle amplitude in $\N = 4$ pSYM.

In the $n\geq 8$ case, non-rational letters are also expected to appear, and a conceptual advance for obtaining them was achieved in~\cite{Henke:2019hve,Arkani-Hamed:2019rds,Drummond:2019cxm}, building on earlier mathematical developments~\cite{Canakci2018,Reading2018b}: The main idea, applied more concretely to $\G{4,8}$, was to also consider infinite mutation sequences of a rank-2 affine subalgebra, conventionally denoted as $\operatorname{A}^{(1)}_1$ in the corresponding Dynkin diagram classification, starting from the clusters singled out by the aforementioned selection rule. For certain of these mutation sequences, the limiting cluster ray does yield a ray of $\pTPTC{4,8}$ that was not previously accessible by the selection rule, as well as associated square-root letters.

When analysing these infinite mutation sequences, all but two nodes of the cluster we start from may be considered as frozen.  In order for this analysis to be able to cover $\pTPTC{4,n}$ for any $n$, where the frozen nodes will have different structure, it is therefore necessary to work out $\operatorname{A}^{(1)}_1$ sequences with general coefficients. We carry out this task, which also has its intrinsic mathematical merit, in subsection~\ref{sec:A11Theory}. 

As a check of our formalism, in subsection \ref{sec:A11Application} we reapply it to the $\pTPTC{4,8}$ case, and confirm that it provides two rays associated to 18 square-root letters, as was previously found in~\cite{Drummond:2019cxm}. Then, in subsection~\ref{sec:SDAlphabet} we compare these results with a more recent refinement of the works~\cite{Henke:2019hve,Arkani-Hamed:2019rds,Drummond:2019cxm}, based on the framework of wall-crossing and scattering diagrams~\cite{Herderschee:2021dez}. While this approach naively predicts a large number of additional non-rational letters, very interestingly we find that these are in fact only two: The inequivalent realisations of the four-mass box Gram determinant by eight cyclically ordered massless legs, $\Delta_{1,3,5,7}$ and $\Delta_{2,4,6,8}$. In the next section, we will further apply the methods of sec.~\ref{sec:A11Theory} to $\pTPTC{4,9}$, and thus obtain new predictions for the singularities of nine-particle amplitudes.

\subsection{Mutation sequence of type $\operatorname{A}^{(1)}_1$ with general coefficients}
\label{sec:A11Theory}
In this subsection, we study the infinite mutation sequence of type $\operatorname{A}^{(1)}_1$ with general coefficients, depicted in figure~\ref{fig:mutSequ}. The coefficient-free and principal coefficients case was first studied in~\cite{Canakci2018} and~\cite{Reading2018b}, and they were later rediscovered in the context of amplitude alphabets, either as above or with a special case of frozen variables in~\cite{Henke:2019hve,Arkani-Hamed:2019rds,Drummond:2019cxm}. For the convenience of the reader, here we sketch the essential steps for the solution of the corresponding recurrence relation, and defer the remaining proofs and detailed calculations to appendix~\ref{sec:Am1Proofs}.
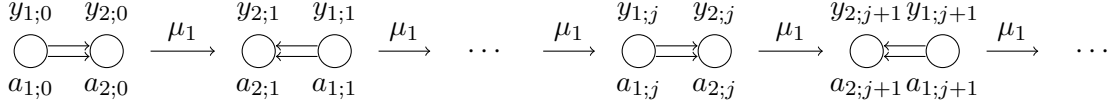
\begin{figure}[ht]
	\centering
	\begin{tikzpicture}
		\node[label={above:$y_{1;0}$},label={below:$a_{1;0}$},scale=1.2] at (0,0) [circle,draw] (a) {};
		\node[{label=above:$y_{2;0}$},label={below:$a_{2;0}$},scale=1.2] at (1,0) [circle,draw] (b) {};
		
		\node[label={above:$y_{2;1}$},label={below:$a_{2;1}$},scale=1.2] at (3,0) [circle,draw] (c) {};
		\node[label={above:$y_{1;1}$},label={below:$a_{1;1}$},scale=1.2] at (4,0) [circle,draw] (d) {};
		
		\node at (6,0) (e) {$\cdots$};
		
		\node[label={above:$y_{1;j}$},label={below:$a_{1;j}$},scale=1.2] at (8,0) [circle,draw] (f) {};
		\node[label={above:$y_{2;j}$},label={below:$a_{2;j}$},scale=1.2] at (9,0) [circle,draw] (g) {};
		
		\node[label={above:$y_{2;j+1}$},label={below:$a_{2;j+1}$},scale=1.2] at (11,0) [circle,draw] (h) {};
		\node[label={above:$y_{1;j+1}$},label={below:$a_{1;j+1}$},scale=1.2] at (12,0) [circle,draw] (i) {};
		
		\node at (14,0) (j) {$\cdots$};
		
		\draw[->] ([yshift=2pt]a.east) -- ([yshift=2pt]b.west);
		\draw[->] ([yshift=-2pt]a.east) -- ([yshift=-2pt]b.west);
		
		\draw[->,shorten >=10pt,,shorten <=10pt](b) edge node[above] {$\mu_1$} (c);
		
		\draw[->] ([yshift=2pt]d.west) -- ([yshift=2pt]c.east);
		\draw[->] ([yshift=-2pt]d.west) -- ([yshift=-2pt]c.east);
		
		\draw[->,shorten >=10pt,,shorten <=10pt](d) edge node[above] {$\mu_1$} (e);
		
		\draw[->,shorten >=10pt,,shorten <=10pt](e) edge node[above] {$\mu_1$} (f);
		
		\draw[->] ([yshift=2pt]f.east) -- ([yshift=2pt]g.west);
		\draw[->] ([yshift=-2pt]f.east) -- ([yshift=-2pt]g.west);
		
		\draw[->,shorten >=10pt,,shorten <=10pt](g) edge node[above] {$\mu_1$} (h);		
		
		\draw[->] ([yshift=2pt]i.west) -- ([yshift=2pt]h.east);
		\draw[->] ([yshift=-2pt]i.west) -- ([yshift=-2pt]h.east);
		
		\draw[->,shorten >=10pt,,shorten <=10pt](i) edge node[above] {$\mu_1$} (j);
		
	\end{tikzpicture}
	\caption{Infinite mutation sequence in the affine rank-$2$ cluster algebra of $\operatorname{A}^{(1)}_1$ Dynkin type.}
	\label{fig:mutSequ}
\end{figure}

Consider the rank-2 affine cluster algebra of $\operatorname{A}^{(1)}_1$ Dynkin type with 2 $A$-variables and $M$ frozen variables. The coefficients $y_{1;0}$ and $y_{2;0}$ are given in terms of the frozen variables $a_3,\dots,a_M$ and the adjacency matrix $b^0_{ij}$ of the initial cluster as
\begin{equation}
	\label{eq:InitialCoeffs}
	y_{j;0} = \prod_{i=3}^{M+2}a_i^{b^0_{ij}}\,,\quad j=1,2\,.
\end{equation}

As depicted in figure~\ref{fig:mutSequ}, repeated mutation at node 1 gives rise to sequences of $\A$-variables $a_{i;j}$ and coefficients $y_{i;j}$ with initial values $a_{1;0}, a_{2;0}$ and $y_{1;0}, y_{2;0}$, respectively.\footnote{Recall that according to the notation introduced in section~\ref{sec:BackgroundCACoefs}, we have found convenient to label all variables with a pair of indices indicating their position $i$ in a given cluster $j$, such that the mutation along the sequence is always on position $i=1$.} Furthermore, we may also consider the sequence of associated $\X$-variables, which are given by $x_{1;j} = a_{2;j}^{-2}y_{1;j}$ and $x_{2;j} = a_{1;j}^{2}y_{2;j}$. Using the general mutation rules, eqs.~\eqref{eq:AMutationRule} and~\eqref{eq:YMutationRule}, we obtain the following recursion relations for these sequences for $j\in \Z$
\begin{align}
	a_{2;j+1} &= \frac{a_{2;j}^2}{a_{1;j}}\frac{1+x_{1;j}}{1\cltrad y_{1;j}}\,,\quad a_{1;j+1} = a_{2;j} \,, \label{eq:ASequence} \\
	y_{1;j+1} &= \frac{y_{2;j}\,y_{1;j}^2}{\left(1\cltrad y_{1;j}\right)^2}\,,\quad y_{2;j+1} = \left(y_{1;j}\right)^{-1} \,, \label{eq:YSequence} \\
	x_{1;j+1} &= \frac{x_{2;j}\,x_{1;j}^2}{\left(1 + x_{1;j}\right)^2}\,,\quad x_{2;j+1} = \left(x_{1;j}\right)^{-1} \,. \label{eq:XSequence}
\end{align}
Note that the mutation rule for the $\A$-variables presented here is of the form of eq.~\eqref{eq:AMutationRuleX} and again the mutation rule of the $\X$-variables is the same as that of the coefficients with cluster-tropical addition replaced by normal addition. Further to these sequences, we introduce the sequence $\beta_j$ of ratios of consecutive $\A$-variables
\begin{equation}
	\beta_{j} = \frac{a_{2;j}}{a_{1;j}} \,,
\end{equation}
which, by eq.~\eqref{eq:ASequence}, can equivalently be expressed as $a_{1;j+1}/a_{1;j}$ or $a_{2;j}/a_{2;j-1}$. 

Having obtained the recursion relations for the sequence, we will now turn to its solution by linearizing eq.~\eqref{eq:ASequence}. Similar to~\cite{Arkani-Hamed:2019rds,Fordy:2009qz}, we define the two quantities $K_{1,j}$ and $K_{2,j}$ for each cluster. They are given by
\begin{align}
	K_{1,j} &= \left(\gamma_0\gamma_{j}^{-1}\beta_0^{-1}\beta_{j}\right)\left[1+x_{1;j}+x_{1;j}\left(x_{1;j-1}\right)^{-1}\right] \,,\\
	K_{2,j} &= \left(\gamma_0\gamma_{j}^{-1}\beta_0^{-1}\beta_{j}\right)^2\left[x_{1;j}\left(x_{1;j-1}\right)^{-1}\right]\,,
\end{align}
where we have included a factor $\gamma_0\beta_0^{-1}$ so as to normalize these quantities at $j=0$ for later convenience, and the auxiliary sequence $\gamma_j$ is -- up to the prefactors -- the cluster-tropical version of $K_{1,j}$, defined as
\begin{equation}
	\gamma_{j} = 1\cltrad y_{1;j}\cltrad y_{1;j}\left(y_{1;j-1}\right)^{-1}\,.
\end{equation}

Note that $K_{1,j}$, $K_{2,j}$ and $\gamma_j$ depend only on the data of the cluster $j$, as can be seen by noting that due to eqs.~\eqref{eq:YSequence}--\eqref{eq:XSequence}, $\left(x_{1;j-1}\right)^{-1} = x_{2;j}$ and $\left(y_{1;j-1}\right)^{-1} = y_{2;j}$. As follows from the mutation relations, $K_{1,j}$ and $K_{2,j}$ are actually invariant along the sequence, see appendix~\ref{sec:Am1Proofs}. From now on they will be denoted by $K_1$ and $K_2$, respectively, and in terms of the  initial $\X$-variables they are explicitly given by
\begin{equation}
	\label{eq:A11_invariants}
	K_1 \equiv K_{1,0} = 1+x_{1;0}+x_{1;0}x_{2;0}\,,\quad K_2 \equiv K_{2,0} = x_{1;0}x_{2;0}\,.
\end{equation}

Using the invariants, we can linearize the recursion relation of the $\A$-variables, eq.~\eqref{eq:ASequence}, to obtain
\begin{equation}
	\label{eq:A11LinRecAs}
	\gamma_{j}^{-1}\gamma_{j+1}^{-1} a_{1;j+2} - \gamma_0^{-1}\beta_0 K_1\cdot \gamma_{j}^{-1}  a_{1;j+1} + \gamma_0^{-2}\beta_0^2 K_2 \cdot a_{1;j} = 0\,.
\end{equation}
By using $a_{1;j+1} = a_{2;j}$, this recursion can be recast to give an equation for the variables of cluster $j+1$ in terms of variables of cluster $j$ only. While this is a linear recurrence relation, the fact that it does not have constant coefficients makes its solution more intricate. However, by considering the new sequence $\alpha_j$, defined by
\begin{equation}
	\label{eq:A11Alpha}
	\alpha_j = \gamma_0^{-1}\gamma_1^{-1}\dots \gamma_{j-1}^{-1} \cdot a_{1;j}
\end{equation}
for $j\geq 0$ we obtain another recurrence relation given by
\begin{equation}
	\label{eq:A11AlphaRecRel}
	\alpha_{j+2} - \gamma_0^{-1}\beta_0 K_1\cdot \alpha_{j+1} + \gamma_0^{-2}\beta_0^2K_2\cdot \alpha_j = 0\,.
\end{equation}
Being a linear recurrence with constant coefficients, it may now be solved by standard methods based on its characteristic polynomial, 
\begin{equation}
	P_1\left(t\right) = t^2 - \gamma_0^{-1}\beta_0 K_1 \cdot t + \gamma_0^{-2}\beta_0^2K_2\,.
\end{equation}
The latter has two roots $\beta_\pm$ that are given by
\begin{equation}
	\label{eq:beta}
	\beta_\pm = \frac{a_{2;0}}{a_{1;0}}\frac{K_1 \pm \sqrt{K_1^2 - 4 K_2}}{2\gamma_0}\,,
\end{equation}
and which correspond to the $j\to\infty$ limit of both the ratio $\alpha_{j+1}/\alpha_j$ and $\beta_j$, as can be seen by first observing that $\alpha_{j+1}/\alpha_j$ = $\gamma_j^{-1}\beta_j$. Using that $\gamma_j\to 1$ for $j\to \infty$, as proven in appendix~\ref{sec:Am1Proofs}, it follows that, assuming convergence, the ratio $\alpha_{j+1}/\alpha_j$ has the same limit as $\beta_j$. Furthermore, dividing the recurrence~\eqref{eq:A11AlphaRecRel} by $\alpha_j$ and taking the limit $j\rightarrow \infty$, we see that the limit of the $\alpha$-ratio, and thus that of $\beta_j$, is given by the roots of $P_1$.

Using the roots of the characteristic polynomial, we obtain the most general solution for the sequence $a_{1;j}$ as\footnote{Note that this is in fact the general solution of the quantity $\alpha_j$ of eq.~\eqref{eq:A11Alpha}: We have not attempted to also find the general solution of the $\gamma_j$ prefactor, since for our purposes only ratios where this prefactor cancels will be needed.}
\begin{equation}
	\label{eq:A11Sol1}
	a_{1;j} = \left(\gamma_0\gamma_1\cdots\gamma_{j-1}\right)\left[C_+\left(\beta_+\right)^j + C_-\left(\beta_-\right)^j\right]\,,
\end{equation}
whereas we have used eq.~\eqref{eq:A11Alpha} to express $a_{1;j}$ in terms of $\alpha_j$. Note that since $\gamma_j$ becomes $1$ for $j\geq J$ for some integer $J$, the prefactor becomes constant at some point. The coefficients $C_\pm$ can be fixed via the initial conditions $a_{1;0}$ and $a_{1;1} = \gamma_0^{-1}a_{2;0}$ to be
\begin{equation}
	\label{eq:A11C}
	C_\pm = a_{1;0}\frac{\pm2\mp K_1+\sqrt{K_1^2-4K_2}}{2\sqrt{K_1^2-4K_2}}\,.
\end{equation}

In addition to the infinite mutation sequence of repeatedly mutating $a_{1;j}$, we can also consider the opposite direction, which amounts to repeatedly mutating $a_{2;j}$. As is explained in more detail in appendix~\ref{sec:Am1Proofs}, the solution to this recurrence can be obtained in a similar way and is given by
\begin{equation}
	\label{eq:A11Sol2}
	a_{2;-j} = \left(\gamma_0\gamma_{-1}\cdots\gamma_{-j+1}\right)^{-1}\left[\tilde{C}_+\left(\beta_-\right)^{-j} + \tilde{C}_-\left(\beta_+\right)^{-j}\right]\,,
\end{equation}
for $j\geq 0$. As is demonstrated in appendix~\ref{sec:Am1Proofs}, $\gamma_{-j}$ also becomes $1$ for $j\geq J$ for some integer $J$. The coefficients $\tilde{C}_\pm$ can again be obtained from the initial conditions and are given by
\begin{equation}
	\label{eq:A11CT}
	\tilde{C}_\pm = a_{2;0}\frac{\pm2K_2\mp K_1+\sqrt{K_1^2-4K_2}}{2\sqrt{K_1^2-4K_2}}\,.
\end{equation}

Finally, let us briefly comment on the associated limit ray. In the $j\to\infty$ limit, the ratio of consecutive cluster variables, eq.~\eqref{eq:beta}, obeys a generalized form of the separation principle of eq.~\eqref{eq:AVarRayForm}. That is, it factorizes into a monomial in the initial $\A$-variables times a ratio of algebraic function with a cluster-tropical sum, which can be interpreted as the (generalized) cluster-tropical version of the algebraic function. In particular, as we have explained above eq.~\eqref{eq:beta}, $\gamma_0$ is the tropical version of $K_1$, and it is natural to consider it also as the  (generalized) cluster-tropical version of $\sqrt{K_1^2-4K_2}$. Analogously to the definition of $g$-vectors from the exponents of the $\A$-coordinate representation of eq.~\eqref{eq:AVarRayForm}, from eq.~\eqref{eq:beta} we may thus associate  $\g_\beta = \left(-1,1\right)$ to the limit. Indeed, considering the sequence of $\g$-vectors associated to the sequence of $\A$-variables, we find that it does converge to $\g_\beta$. Note, however, that this is the limit ray with reference to the $\operatorname{A}^{(1)}_1$ cluster algebra only. In practice, to obtain the limit ray of some embedding of such a cluster algebra, we use the mutation relation for the cluster rays given by eq.~\eqref{eq:rayMutation}.

\subsection{Application: $\pTPTC{4,8}$ and the eight-particle alphabet}
\label{sec:A11Application}
The remarkable feature of the infinite mutation sequences considered in the previous subsection is that they yield quantities containing square roots, see eqs.~\eqref{eq:A11Sol1},~\eqref{eq:A11C}, and~\eqref{eq:A11Sol2},~\eqref{eq:A11CT}. The main idea of the works~\cite{Henke:2019hve, Drummond:2019cxm, Arkani-Hamed:2019rds} was that these quantities thus provide natural candidates for the non-rational letters of amplitudes, focusing on the then-unknown frontier of multiplicity eight, related to the $\G{4,8}$ cluster algebra. Note that while the values of the above-mentioned quantities differ depending on the choice of coefficients, square roots are always present.

In more detail, all three aforementioned papers identified $\operatorname{A}^{(1)}_1$ as an affine rank-2 subalgebra of the cluster algebra in question, and~\cite{Henke:2019hve} analysed this subalgebra with principal coefficients as a proof of concept. References~\cite{Drummond:2019cxm} and \cite{Arkani-Hamed:2019rds} additionally found the generating functional of the mutation sequences, roughly equivalent to the general solutions~\eqref{eq:A11Sol1} and~\eqref{eq:A11Sol2}, for the particular case of frozen variables required to analyse the $\G{4,8}$ cluster algebra, from the tropical geometry and stringy canonical form approach, respectively.

If one assumes a one-to-one correspondence between tropical rays and letters of the symbol alphabet, then the natural choice also respecting the natural symmetry of the latter is the ratio $\beta_+/\beta_-$. This possibility cannot be currently excluded if one restricts to MHV amplitudes, which are technically speaking the only ones having $\Conf{4,n}$ as their space of kinematics (beyond MHV, the analysis of the kinematic space is complicated by the existence of rational functions on top of the transcendental ones studied here). Indeed, through the currently known loop order $L=2$, the eight-particle MHV amplitude only contains rational letters~\cite{CaronHuot:2011ky,Golden:2021ggj}, so it could be that the only additional square-root letters starting to appear at $L\ge 3$ are those uniquely associated to tropical rays.

Nevertheless, in~\cite{Drummond:2019cxm} it was further noticed that if the direction of approach to the limit ray is also taken into account, such that many square-root letters are associated to each limit ray in a particular fashion, then one in fact obtains the complete alphabet of the 2-loop NMHV amplitude~\cite{Zhang:2019vnm}, which the unique association of letters to rays cannot account for. Along with a complementary analysis based on plabic graphs~\cite{Mago:2020kmp,He:2020uhb,Mago:2020nuv}, see also~\cite{Mago:2021luw} appearing simultaneously with this paper, this seems to suggest that despite the apparent complications mentioned above for non-MHV amplitudes in $\N=4$ pSYM, the symbol alphabet may be independent of the helicity configuration, at least at multiplicity $n=8$.

In this paper, we will adopt the prescription of~\cite{Drummond:2019cxm} for associating many square-root letters to a given limiting ray associated to a given $\operatorname{A}^{(1)}_1$ subalgebra of  the truncated $\G{4,n}$ cluster algebra, which in the conventions of the previous subsection is given by\footnote{More precisely, when specialized to $n=8$ the formulas below reduce to the negative inverse of those provided in the latter reference, with this difference being immaterial at the level of symbol letters.}
\begin{equation}
	\label{eq:algLet}
	\phi_0\equiv \frac{{C}_+}{{C}_-} = \frac{2 - K_1 + \sqrt{K_1^2 - 4K_2}}{-2 + K_1 + \sqrt{K_1^2 - 4K_2}}\,,\quad \tilde{\phi}_0\equiv \frac{\tilde{C}_+}{\tilde{C}_-}=  \frac{2K_2 - K_1 + \sqrt{K_1^2 - 4K_2}}{-2K_2 + K_1 + \sqrt{K_1^2 - 4K_2}}\,,
\end{equation}
where the quantities appearing here have been defined in eqs.~\eqref{eq:A11_invariants},~\eqref{eq:A11C} and~\eqref{eq:A11CT}.

The great merit of the analysis we carried out in the previous section, and of the above formula, is that it can be directly applied to  \emph{any} such subalgebra for \emph{any} $n$, not necessarily equal to eight. All we need as input is the data of a given \emph{origin cluster}, namely the cluster containing a $\operatorname{A}^{(1)}_1$ cluster subalgebra from which the infinite mutation sequence starts, such as for example the one depicted in figure \ref{fig:OctagonOriginCluster} for $n=8$.\footnote{More concretely, to apply the above formulas in this example, we only need to evaluate eq.~\eqref{eq:A11_invariants} with $x_{1;0} \to x_{1}$ and $x_{2;0} \to x_{9}$.}

So our results can in principle be specialized to yield predictions for the symbol alphabet of scattering amplitudes at any multiplicity $n$, and in the next section we will indeed apply them to the $n=9$ case. As a first cross check however, in the remainder of this subsection we will use our method to confirm the results reported in~\cite{Drummond:2019cxm} for the eight-particle alphabet. Let us also comment that while the prescription~\eqref{eq:algLet} may currently seem ad-hoc and only justified by the agreement of its symbol alphabet predictions with explicit computations, in the next subsection we will provide further evidence about its correctness by comparing it with a more recent approach based on scattering diagrams and wall-crossing~\cite{Herderschee:2021dez}.

Focusing now on the case of eight-particle scattering, we find a total of 3,600 origin clusters with a $\operatorname{A}^{(1)}_1$ cluster subalgebra in the 121,460 clusters of the cluster algebra of $\G{4,8}$ truncated by $\pTPTC{4,8}$. The rays of the variables mutated in the infinite mutation sequences starting at these origin clusters converge to four different limit rays. Interestingly, only two of these limit rays are contained in $\pTPTC{4,8}$, whereas the other two are contained only in $\TPTC{4,8}$ and not its partially tropicalised version. Similar to the truncation rule used to obtain a finite subset of rational cluster variables, we discard the limits of the origin quivers whose limit rays are not contained in $\pTPTC{4,8}$. This leaves us with a truncated set of 2,800 origin clusters of which 56 for each of the two limit rays are unique,\footnote{The 800 origin clusters whose limit rays are only contained in $\TPTC{4,8}$ reduce to 32 different origin clusters for each of the rays.} since many of these clusters only differ in parts of the quiver that do not affect the $\operatorname{A}^{(1)}_1$ cluster subalgebra and thus also not the limit of its infinite mutation sequence.

In this way, we find a total of 112 different origin clusters giving rise to 224 square-root letters via eqs.~\eqref{eq:algLet}. However, these 224 letters are not multiplicatively independent. From the perspective of the alphabet -- essentially the set of logarithms of these letters -- this means that not all of the letters are linearly independent and hence are redundant. While the presence of square roots complicates the elimination of these redundancies, this can be done using an approach similar to that of~\cite{Mitev:2018kie}, which we will describe in more detail in section~\ref{sec:NineParticleAlgebraic}, where it is used again. For the case at hand, we find that the aforementioned 224 letters reduce to 18 multiplicatively independent letters, which are equivalent to the square-root letters reported in~\cite{Drummond:2019cxm} and previously known to appear in the two-loop NMHV eight-particle amplitude, as computed in~\cite{Zhang:2019vnm}.\footnote{Note that the $18$ square-root letters can alternatively be obtained by solving polynomial equations associated to certain plabic graphs~\cite{Mago:2020kmp,He:2020uhb}. As soon as one attempts to also incorporate rational letters in this approach, however, non-plabic graphs are required as well~\cite{Mago:2020nuv}. In this case the solution space includes all cluster variables of $\G{4,8}$, that is the alphabet becomes infinite again.}

To summarise, in total we obtain all $274$ tropical rays of $\pTPTC{4,8}$ -- $272$ rays associated to one rational letter of the truncated cluster algebra each and $2$ rays associated to $9$ multiplicatively independent square-root letters obtained as the limits of infinite mutation sequences each. This $290$-letter alphabet contains all letters previously known to appear in the eight-particle MHV and NMHV amplitudes.

\subsection{Comparison with the scattering diagram approach}
\label{sec:SDAlphabet}
A common element of the majority of different efforts to predict the symbol alphabet of $n$-particle amplitudes in $\N=4$ pSYM, initially based on Grassmannian cluster algebras~\cite{Golden:2013xva}, and more recently on tropical Grassmannians~\cite{Henke:2019hve,Drummond:2019cxm} or stringy canonical forms~\cite{Arkani-Hamed:2019rds}, is that they correspond to different compactifications of the positive part of the space of kinematics $\Conf{4,n}$. More recently, another such compactification refining the aforementioned works, and relying on the concepts of wall-crossing and scattering diagrams, has been proposed~\cite{Herderschee:2021dez}.

Having discussed the predictions of the tropical geometry approach for the eight-particle alphabet in the previous subsection, here we will compare them with those of the scattering diagram approach. In a nutshell, while the latter yields 72 letters (36 per limit ray of $\pTPTC{4,8}$), out of which 56 are naively non-rational, we will show that in fact \emph{all} of these letters are contained in the 290-letter octagon alphabet of the previous subsection, \emph{except} for the square roots of the two Gram determinants associated to the four-mass box,
\begin{align}
	\Delta_{1,3,5,7} &= \left(1 - \frac{\pl{1234}\pl{5678}}{\pl{1256}\pl{3478}} - \frac{\pl{1278}\pl{3456}}{\pl{1256}\pl{3478}}\right)^2 - 4 \frac{\pl{1278}\pl{1234}\pl{3456}\pl{5678}}{\left(\pl{1256}\pl{3478}\right)^2}\,,\label{eq:Delta8}\\
	\Delta_{2,4,6,8} &= \left(1 - \frac{\pl{2345}\pl{1678}}{\pl{2367}\pl{1458}} - \frac{\pl{1238}\pl{4567}}{\pl{2367}\pl{1458}}\right)^2 - 4 \frac{\pl{1238}\pl{2345}\pl{4567}\pl{1678}}{\left(\pl{2367}\pl{1458}\right)^2}\,,\label{eq:Delta8_2}
\end{align}
whereas $\Delta_{2,4,6,8}$ is related to $\Delta_{1,3,5,7}$ by the cyclic shift $\pl{ijkl}\to\pl{i+1\,j+1\,k+1\,l+1}$.

We view the almost complete overlap of the two approaches at multiplicity $n=8$ as a strong indication of their correctness, and will further comment on the presence or absence of the extra letters, eqs.~\eqref{eq:Delta8} and~\eqref{eq:Delta8_2}, from amplitudes and Feynman integrals. To provide more general backing to this conclusion, later in this section we will also show that the square-root letters obtained from the tropical geometry approach are always contained in those of the scattering diagram approach for any $n$. But before this, let us briefly provide some background information on scattering diagrams.

\noindent {\bf Basics of scattering diagrams}. A scattering diagram can be thought of as a generalisation of the $\g$-vector fan of the cluster algebra, as defined by eq.~\eqref{eq:AVarRayForm}, that also contains the limits of infinite mutation sequences. In the $\g$-vector fan of a cluster algebra, each $\A$-variable is associated to one of the rays in the fan. The rays of all variables in a cluster then form a cone, which intersects other cones that share some of the variables. A codimension-1 intersection, or \emph{wall}, between cones that share all but one variable corresponds to the mutation of the variable that is not shared.

In the scattering diagram, we associate a variable $x_{\gamma_i}$ to each of the rays in a cone. These \emph{cone variables} are related to the $\X$-variables of the cone in the following way. Consider a wall of the cone and the $\X$-variable $x_j$ that is mutated when passing through the wall. Denote the (appropriately normalized) vector perpendicular to the wall and pointing into the cone by $\gamma^\perp_j = c^j_i \gamma_i$, whereas the $\gamma_i$ denote the canonical basis vectors. We then have the relation
\begin{equation}
	\label{eq:XandSDvars}
	x_j = \prod_i \left(x_{\gamma_i}\right)^{c^j_i}\,.
\end{equation}
Alternatively, we can use the inverse of eq.~\eqref{eq:XandSDvars} to express the variables $x_{\gamma_i}$ in terms of the $\X$-variables $x_i$ of the cone. Note that the labelling is such that the wall denoted by $j$ is that which is spanned by the rays associated to all $\A$-variables $a_i$ except that of $a_j$. As we will see shortly, the advantage of using the cone variables is that they remain finite and have a well-defined limit in the relevant infinite mutation sequences.

The mutation, or \emph{wall crossing}, of the cone variables is implemented by multiplying them with powers of a function $f(x_{\gamma_j^\perp})$ which is attached to each of the walls, whereas the argument $x_{\gamma_j^\perp}$ is equal to $x_j$ or its inverse, depending on the side from which the wall is approached. For walls that are part of the $\g$-vector fan, this function is given by $f(x_{\gamma_j^\perp}) = 1 + x_{\gamma_j^\perp}$. Together with eq.~\eqref{eq:XandSDvars}, this reproduces the mutation rule for the $x_i$. Extending the cluster algebra framework, the wall crossing function for walls that are not part of the cluster algebra can be obtained by self-consistency conditions.

\noindent {\bf Eight-particle alphabet predictions and comparison with tropical geometry.} Let us now review and further analyse the predictions of the scattering diagrams framework for the alphabet of the eight-particle amplitude~\cite{Herderschee:2021dez}, as well as compare them to the 290-letter $\pTPTC{4,8}$ alphabet discussed in the previous section. We will only discuss the boundary structure around one of the two limit rays of $\G{4,8}$, since the letters associated to the other can be obtained by the cyclic shift $\pl{ijkl}\to\pl{i+1\,j+1\,k+1\,l+1}$.

In a first step, one mutates from the initial cluster to an origin cluster containing a $\operatorname{A}^{(1)}_1$ subalgebra. Concretely, performing the mutations $\{1,2,4,1,6,8\}$ leads to the cluster depicted in figure~\ref{fig:OctagonOriginCluster}. The parameterisation of the $\X$-variables $x_i$ in this cluster in terms of Plücker variables as well as all other data required to reconstruct the non-rational alphabet can be found in appendix~\ref{sec:EightParticleAlgebraicAlphabet}.
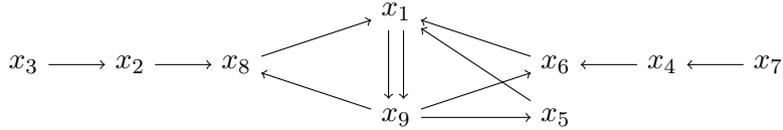
\begin{figure}[ht]
	\centering
	\begin{tikzpicture}[scale=1.4]
		\node at (0,0.5) (6) {$x_3$};
		\node at (1,0.5) (9) {$x_2$};
		\node at (2,0.5) (2) {$x_8$};
		\node at (3.5,1) (5) {$x_1$};
		\node at (5,0.5) (1) {$x_6$};
		\node at (6,0.5) (7) {$x_4$};
		\node at (7,0.5) (8) {$x_7$};
		
		\node at (3.5,0) (3) {$x_9$};
		\node at (5,0) (4) {$x_5$};
		
		\draw[->] (6) -- (9);
		\draw[->] (9) -- (2);
		\draw[->] (2) -- (5);
		\draw[->] (1) -- (5);
		\draw[->] (7) -- (1);
		\draw[->] (8) -- (7);
		
		\draw[->] (3) -- (2);
		\draw[->] (3) -- (1);
		\draw[->] (4) -- (5);
		\draw[->] (3) -- (4);
		
		\draw[->] ([xshift=2pt]5.south) -- ([xshift=2pt]3.north);
		\draw[->] ([xshift=-2pt]5.south) -- ([xshift=-2pt]3.north);
	\end{tikzpicture}
	\caption{Principal part of the origin cluster in $\G{4,8}$ utilized to find the square-root letters. From this quiver, it is also evident that $\G{4,8}\simeq E_7^{(1,1)}$ in the extended affine Dynkin diagram classification~\cite{fomin2006cluster}. }
	\label{fig:OctagonOriginCluster}
\end{figure}

Next, one expresses the cone variables along the $\operatorname{A}^{(1)}_1$ sequence originating from this cluster in terms of its $\X$-variables $x_i$ by using the inverse of eq.~\eqref{eq:XandSDvars}. Compared to the $x_i$, the cone variables do converge to a finite function when taking the limit of the infinite sequence. These limits correspond to the cone variables $x^0_{\gamma_i}$ of a cone asymptotically close to the limit ray, also known as an \emph{asymptotic chamber}, and are given by 
\begin{align}
	x_{\gamma_i}^0 &= x_i\,\quad\text{for}\quad i \in \{2,3,4,7\}\,, \nonumber\\
	x_{\gamma_i}^0 &= \frac{x_i}{2}\left(1+x_1\left(1+x_9\right) + \sqrt{\Delta'}\right)\,\quad\text{for}\quad i \in \{5,6,8\}\,, \label{eq:asymWallVars} \\
	x_{\gamma_1}^0 &= \frac{4x_1\Delta'}{\left(1+x_1-x_1x_9+\sqrt{\Delta'}\right)^2}\,,\quad 	x_{\gamma_9}^0 = \frac{x_9}{4}\left(1+\frac{1-x_1(1+x_9)}{\sqrt{\Delta'}}\right)^2\,,\nonumber
\end{align}
where $\Delta' = \left(1+x_1(1+x_9)\right)^2 - 4x_1x_9\,$. In contrast to just considering the cluster algebra itself, one can now utilise wall crossing to find other asymptotic chambers and their variables. This was carried out in~\cite{Herderschee:2021dez} by means of an extensive computer search, yielding a basis of $36$ multiplicatively independent polynomials of the $x_{\gamma_i}^0$, proposed to contain all non-rational letters (in the original $\mathcal{X}$- or Pl\"ucker variables) of the eight-particle amplitude. It was also noticed that $10$ of these polynomials depend only on the rational $\mathcal{X}$-variables $x_i\,$ for $i \in \{2,3,4,7\}$, such that the set of non-rational letters is immediately reduced to (a maximum of) $26$ letters.

Very interestingly, we notice that for another $6$ of these letters the square roots contained in them cancel out, such that they are also secretly rational. What is more, there are $10$ multiplicative combinations of the remaining $20$ letters that turn out to be rational as well,\footnote{We thank Dima Chicherin for pointing out the existence of these additional relations to us.} see appendix~\ref{sec:EightParticleAlgebraicAlphabet} for details. All in all, this implies that the scattering diagram approach in fact predicts 26 rational and 10 square-root letters for one of the two $\G{4,8}$ limit rays, and more concretely the latter ones may be chosen to be
\begin{align}
	f_1 &= \left(x_{\gamma_1}^0\right)^{-1}\left(1-x_{\gamma_1}^0x_{\gamma_9}^0\right)^2\,, &  f_2 &=x_{\gamma_9}^0\left(1-x_{\gamma_1}^0x_{\gamma_9}^0\right)^2\,, & \nonumber \\
	f_3 &= \frac{1+x_{\gamma_5}^0x_{\gamma_1}^0x_{\gamma_9}^0}{1+x_{\gamma_5}^0}\,, & f_4 &= \frac{1+x_{\gamma_8}^0x_{\gamma_1}^0x_{\gamma_9}^0}{1+x_{\gamma_8}^0}\,, & f_5 &= \frac{1+x_{\gamma_2}^0\left(1+x_{\gamma_8}^0x_{\gamma_1}^0x_{\gamma_9}^0\right)}{1+x_{\gamma_2}^0\left(1+x_{\gamma_8}^0\right)}\,, 	\nonumber \\
	f_6 &= \mathrlap{\frac{1+x_{\gamma_3}\left(1+x_{\gamma_2}^0\left(1+x_{\gamma_8}^0x_{\gamma_1}^0x_{\gamma_9}^0\right)\right)}{1+x_{\gamma_3}\left(1+x_{\gamma_2}^0\left(1+x_{\gamma_8}^0\right)\right)}\,,} \nonumber \\
	f_{10} &= x_{\gamma_5}^0\left(1-x_{\gamma_1}^0x_{\gamma_9}^0\right)\,,\label{eq:basisS}
\end{align}
together with $f_7,f_8,f_9$ obtained from replacing $x_{\gamma_3}^0 \to x_{\gamma_7}^0\,$, $x_{\gamma_2}^0 \to x_{\gamma_4}^0\,$, and $x_{\gamma_8}^0 \to x_{\gamma_6}^0\,$. As already mentioned, another 10 letters associated to the other limit ray may be obtained by a cyclic shift of the momentum twistors.

How about the relation of the scattering diagram letters to the tropical 290-letter eight-particle  alphabet, discussed in the previous section? Starting with the 26 rational scattering diagram letters, we find that they are all contained in the $\pTPTC{4,8}$ alphabet. As far as the square-root letters are concerned, as already pointed out in~\cite{Herderschee:2021dez}, 9 of them (plus cyclic) are also contained in the $\pTPTC{4,8}$ alphabet, and we also confirm this to be the case. In the square-root letter basis~\eqref{eq:basisS}, these in particular correspond to $f_1,\ldots,f_9$. So the final conclusion is that the only scattering diagram letter not contained in the $\pTPTC{4,8}$ alphabet is $f_{10}$, which remarkably can be written as (see again appendix \ref{sec:EightParticleAlgebraicAlphabet})
\begin{equation}
	f_{10} = \frac{\pl{1256}\pl{3478}}{\pl{1278}\pl{3456}}\sqrt{\Delta_{1,3,5,7}}\,,
\end{equation}
together with its cyclic image, where the square-roots associated to the four-mass box have been defined in eqs.~\eqref{eq:Delta8} and~\eqref{eq:Delta8_2}. Given that the factor in front of the square root is a monomial in the rational letters, one could equally well redefine the letter so as to remove it. For the interested reader, we provide the complete candidate eight-particle alphabet consisting of the 292 letters coming from the union of the tropical geometry and scattering diagram approaches in the ancillary file~\texttt{Gr48Alphabet.m}, which is attached to the \texttt{arXiv} submission of this article.

The fact that the two approaches overlap almost completely greatly reinforces the expectation that all singularities of eight-particle amplitudes are contained in the aforementioned candidate alphabet. In a sense, scattering diagrams provide a more systematic framework for taking infinite mutation sequences into account, and especially for taking the direction of approach to a given limit ray into account, thus justifying the particular choice of eqs.~\eqref{eq:algLet} for assigning many symbol letters (or equivalently generalisations of rational cluster variables) to it. On the other hand, while degenerate scattering diagrams have been proposed as an analog of our method for selecting a finite subset of cluster variables with the help of tropical Grassmannians, a stumbling block is currently the significant ambiguity in their construction.\footnote{Note that the set of 26+26 rational letters that come as a byproduct of the scattering diagram analysis is too small to contain the 2-loop (N)MHV eight-particle amplitude.} It would be very interesting to further clarify the relation between the two approaches. While our discussion so far has been restricted to the $n=8$ case, we will shortly show that their similarity extends to any $n$: In particular, that the tropical square-root letters are always a subset of the scattering diagram square-root letters.

Let us also comment on the plausibility of the additional scattering diagram letters, the square roots of the Gram determinants, eqs.~\eqref{eq:Delta8} and~\eqref{eq:Delta8_2}, appearing as letters of the eight-particle amplitude. On the one hand, $\Delta_{1,3,5,7}$ and $\Delta_{2,4,6,8}$ are always positive inside the positive region~\cite{Arkani-Hamed:2019rds}, and so any arguments based on the expectation that amplitudes never have singularities in this region cannot exclude it. On the other hand, we observe that these letters are not present in explicit two-loop results for the (appropriately normalised) eight-particle amplitude in $\N=4$ pSYM. As an additional source of information on this question, one could also consider the relation between the alphabet of the latter, and that of five-particle amplitudes in Lorentz-invariant theories, recently established in~\cite{Chicherin:2020umh}. There, it was pointed out that while analogous square-root letters appear in individual integrals contributing to the two-loop five-point amplitudes, these cancel out in appropriately defined finite remainders, see also~\cite{Badger:2021nhg}. This analogy seems to suggest that at a minimum, $\Delta_{1,3,5,7}$ and $\Delta_{2,4,6,8}$ may contribute to eight-point integrals contributing to the $\N=4$ pSYM amplitude. Settling whether they survive in the final expression for the latter calls for explicit higher-loop computations, however already this discussion points to scattering diagrams as an attractive tool for studying singularities of Feynman integrals. Their potential in this respect will be studied elsewhere~\cite{CHHPZ}.

Finally, it is interesting to note that the discrete symmetry of the eight-particle alphabet respects some of the structure of the infinite cluster algebra of $\G{4,8}$. In particular, the group of automorphisms of the origin quiver, which can be traced back to the group of automorphisms of the initial quiver, is given by the transformation
\be
x_2\leftrightarrow x_4\,,\quad x_3\leftrightarrow x_7\,,\quad x_6\leftrightarrow x_8\,.
\ee
By the general theory~\cite{Assem:2010}, this quiver automorphism extends to an automorphism of the entire (infinite) cluster algebra. When replacing
\be
x_i\to x^0_{\gamma_i}\,,
\ee
in the above equation, this is also a symmetry of the square-root letters. Furthermore, it can be easily verified that the rational part of the alphabet is also symmetric under the same transformation, implying that the truncation procedure as well as the procedure by which we obtained the square-root letters from the scattering diagram are compatible with this symmetry of the infinite cluster algebra. Note that this symmetry is specific to the eight-particle alphabet since this automorphism only exists for $\G{4,8}$.

\noindent {\bf Comparison of algebraic letters at any multiplicity.} We now proceed to show that the tropical square-root letters of eq.~\eqref{eq:algLet} are contained in the alphabet obtained from the scattering diagram approach at any multiplicity $n$. In particular this adds further support to our analysis of the $n=9$ case in the next section, which has been carried out relying on the aforementioned equation.

For simplicity, let us start by considering an $\operatorname{A}^{(1)}_1$ cluster algebra with principal coefficients. The cone variables along the infinite mutation sequence are given by
\begin{equation}
	x_{\gamma_1;j} = \left(x_{1;j}\right)^{1-j}\left(x_{2;j}\right)^{-j}\,, \quad x_{\gamma_2;j} = \left(x_{1;j}\right)^{j}\left(x_{2;j}\right)^{1+j}\,.
\end{equation}
We can now use that $x_{1;j} = a_{2;j}^{-2}y_{1;j}$ and $x_{2;j} = a_{1;j}^{2}y_{2;j}$ to express the cone variables in terms of the $\A$-variables and coefficients along the sequence. Due to working with principial coefficients, it can be shown that $(y_{1;j})^{1-j}(y_{2;j})^{-j} = y_{1;0}$ and $(y_{1;j})^{j}(y_{2;j})^{1+j} = y_{2;0}$ such that we can use eq.~\eqref{eq:A11Sol1} to perform the limit $j\to\infty$, which is given by
\begin{equation}
	\label{eq:asymCVar}
	x_{\gamma_1}^+ \equiv x_{\gamma_1;\infty} = y_{1;0}\left(\tilde{C}_-\right)^{-2}\,,\quad x_{\gamma_2}^+ \equiv x_{\gamma_2;\infty} = y_{2;0}\left(C_+\right)^2 \,,
\end{equation}
where $C_\pm$ and $\tilde{C}_\pm$ have been defined in eqs.~\eqref{eq:A11C} and~\eqref{eq:A11CT}, respectively. These are the variables attached to the asymptotic chamber, which is the cone asymptotically close to the limit ray that we land in when following the infinite mutation sequence in this direction. Note that from the aforementioned equations it follows that these variables are actually algebraic functions in the $\X$-variables $x_{1;0},x_{2;0}$ of the initial cluster only, since $x_{1;0} = a_{2;0}^{-2}\cdot y_{1;0}$ and $x_{2;0} = a_{1;0}^2\cdot y_{2;0}$. 

We can now use wall-crossing to obtain the variables of the asymptotic chamber accessed by following the other direction of the mutation sequence, that is by repeatedly mutating $a_{2;j}$. The function associated to the \emph{limiting wall} of the scattering diagram that separates the two asymptotic chambers accessed by following the two directions of the mutation sequence is given by
\begin{equation}
	\label{eq:wallCrossing}
	f(x_{\gamma^\perp}) = y_{2;0}\frac{\left(C_-\right)^2\tilde{C}_-}{\tilde{C}_+} \equiv \frac{1}{y_{1;0}}\frac{\left(\tilde{C}_-\right)^2C_-}{C_+}\,,
\end{equation}
such that the variables of the other asymptotic chamber can be obtained from
\begin{equation}
	\label{eq:asymCVar2}
	x_{\gamma_1}^+ \, \longrightarrow \, x_{\gamma_1}^- = x_{\gamma_1}^+\cdot f(x_{\gamma^\perp})^2\,, \qquad 	x_{\gamma_2}^+ \, \longrightarrow \, x_{\gamma_2}^- = x_{\gamma_2}^+\cdot f(x_{\gamma^\perp})^{-2}\,.
\end{equation}

From eqs.~\eqref{eq:asymCVar}--\eqref{eq:asymCVar2}, we see that the four variables $x_{\gamma_i}^\pm$ associated to the two asymptotic chambers are made up of the four coefficients $C_\pm,\tilde{C}_\pm$ as well as the initial coefficients $y_{1;0},y_{2;0}$, which are monomials in the rational cluster variables. It then follows immediately that the algebraic letters of eqs.~\eqref{eq:algLet} are given as monomials in the multiplicative basis formed by these four variables. To be precise, we have $(\phi_0)^2 = (x_{\gamma_1}^+x_{\gamma_1}^-)^{-1}$ and $(\tilde{\phi}_0)^2 = x_{\gamma_2}^+x_{\gamma_2}^-(x_{\gamma_1}^+x_{\gamma_1}^-)^{2}$. An equivalent statement also holds true when considering general coefficients, since the generalized versions of eqs.~\eqref{eq:asymCVar}--\eqref{eq:asymCVar2} only differ by a further monomial in the rational variables.

\section{$\pTPTC{4,9}$ and the nine-particle alphabet}
\label{sec:NineParticle}
In this section, we apply the techniques first introduced in~\cite{Henke:2019hve, Drummond:2019cxm, Drummond:2019qjk}, and further developed in the previous sections, in order to obtain predictions for the symbol alphabet of the nine-particle amplitude in $\N=4$ pSYM. In subsection~\ref{sec:NineParticleRational}, we first truncate the infinite $\G{4,9}$ cluster algebra with the help of the inherently finite partially tropicalised positive configuration space $\pTPTC{4,9}$ as reviewed in  section~\ref{sec:BackgroundGrConf}, in order to obtain the rational part of the alphabet, which we find consists of 3,078 letters in one-to-one correspondence to tropical rays of $\pTPTC{4,9}$. Then, in subsection~\ref{sec:NineParticleAlgebraic} we study infinite mutation sequences of $\operatorname{A}^{(1)}_1$ subalgebras of the truncated cluster algebra, and in this fashion determine an additional 324 limit rays of $\pTPTC{4,9}$. It is especially here, that our new results for such subalgebras with general coefficients, presented in subsections~\ref{sec:A11Theory} and~\ref{sec:A11Application}, allow us to associate to these rays square-root letters expected to appear in the amplitude, in particular a total of 2,349 multiplicatively independent such letters. A new feature of the nine-particle case is that the procedure we have described falls short of yielding 27 rays of $\pTPTC{4,9}$. The discussion of alternative ways for accessing these rays, and of their possible significance for amplitudes, are presented in the next section.

Before moving on, let us briefly recall the discrete symmetries of $\N=4$ pSYM amplitudes, which will be useful in what follows.  Using the supersymmetry of the theory and combining the amplitudes with different external states into a superamplitude, the latter can be shown to be invariant under the transformations of the \emph{dihedral group}~\cite{Elvang:2009wd}. This symmetry group consists of the $n$ \emph{cyclic permutations} $i \to i + 1$ of the integer indices of the Plücker variables, which is equivalent to the cyclic permutation of the columns of the $4\times n$ matrix describing $\G{4,n}$, as well as the \emph{dihedral flip} $i \to n + 1 - i$. For MHV amplitudes, the aforementioned symmetries straightforwardly carry over to the transcendental functions appearing in them, and hence also to their alphabet. Note that in the case discussed here $n=9$ and throughout the text the identification $i + n \sim i$ for the indices of the Plücker variables is implied.

\subsection{Rational letters from the truncated cluster algebra}
\label{sec:NineParticleRational}
Similarly to the eight-particle case, we start mutating from the initial cluster of $\G{4,9}$, which is depicted in figure~\ref{fig:gr49Seed}. 
\begin{figure}[ht]
	\centering
	\begin{tikzpicture}[scale=1.3]
	\node at (0,0) (a1) {$\pl{1235}$};
	\node at (1.5,0) (a2) {$\pl{1245}$};
	\node at (3,0) (a3) {$\pl{1345}$};
	\node at (4.5,0) [draw, rectangle] (a4) {$\pl{2345}$};
	
	\node at (0,-1) (b1) {$\pl{1236}$};
	\node at (1.5,-1) (b2) {$\pl{1256}$};
	\node at (3,-1) (b3) {$\pl{1456}$};
	\node at (4.5,-1) [draw, rectangle] (b4) {$\pl{3456}$};
	
	\node at (0,-2) (c1) {$\pl{1237}$};
	\node at (1.5,-2) (c2) {$\pl{1267}$};
	\node at (3,-2) (c3) {$\pl{1567}$};
	\node at (4.5,-2) [draw, rectangle] (c4) {$\pl{4567}$};
	
	\node at (0,-3) (d1) {$\pl{1238}$};
	\node at (1.5,-3) (d2) {$\pl{1278}$};
	\node at (3,-3) (d3) {$\pl{1678}$};
	\node at (4.5,-3) [draw, rectangle] (d4) {$\pl{5678}$};
	
	\node at (0,-4) [draw, rectangle] (e1) {$\pl{1239}$};
	\node at (1.5,-4) [draw, rectangle] (e2) {$\pl{1289}$};
	\node at (3,-4) [draw, rectangle] (e3) {$\pl{1789}$};
	\node at (4.5,-4) [draw, rectangle] (e4) {$\pl{6789}$};
	
	\node at (-1.5,1) [draw, rectangle] (f) {$\pl{1234}$};
	
	\draw[->] (f) -- (a1);
	\draw[->] (a1) -- (b1);
	\draw[->] (b1) -- (c1);
	\draw[->] (c1) -- (d1);
	\draw[->] (d1) -- (e1);
	
	\draw[->] (a2) -- (b2);
	\draw[->] (b2) -- (c2);
	\draw[->] (c2) -- (d2);
	\draw[->] (d2) -- (e2);
	
	\draw[->] (a3) -- (b3);
	\draw[->] (b3) -- (c3);
	\draw[->] (c3) -- (d3);
	\draw[->] (d3) -- (e3);
	
	\draw[->] (a1) -- (a2);
	\draw[->] (a2) -- (a3);
	\draw[->] (a3) -- (a4);
	
	\draw[->] (b1) -- (b2);
	\draw[->] (b2) -- (b3);
	\draw[->] (b3) -- (b4);
	
	\draw[->] (c1) -- (c2);
	\draw[->] (c2) -- (c3);
	\draw[->] (c3) -- (c4);
	
	\draw[->] (d1) -- (d2);
	\draw[->] (d2) -- (d3);
	\draw[->] (d3) -- (d4);
	
	\draw[->] (b2) -- (a1);
	\draw[->] (c3) -- (b2);
	\draw[->] (d4) -- (c3);
	\draw[->] (b3) -- (a2);
	\draw[->] (b4) -- (a3);
	\draw[->] (c2) -- (b1);
	\draw[->] (c4) -- (b3);
	\draw[->] (d2) -- (c1);
	\draw[->] (d3) -- (c2);
	\draw[->] (e2) -- (d1);
	\draw[->] (e3) -- (d2);
	\draw[->] (e4) -- (d3);
	
	\end{tikzpicture}
	\caption{Initial seed of the cluster algebra of $\G{4,9}$. The boxed variables are frozen and hence not mutated.}
	\label{fig:gr49Seed}
\end{figure}
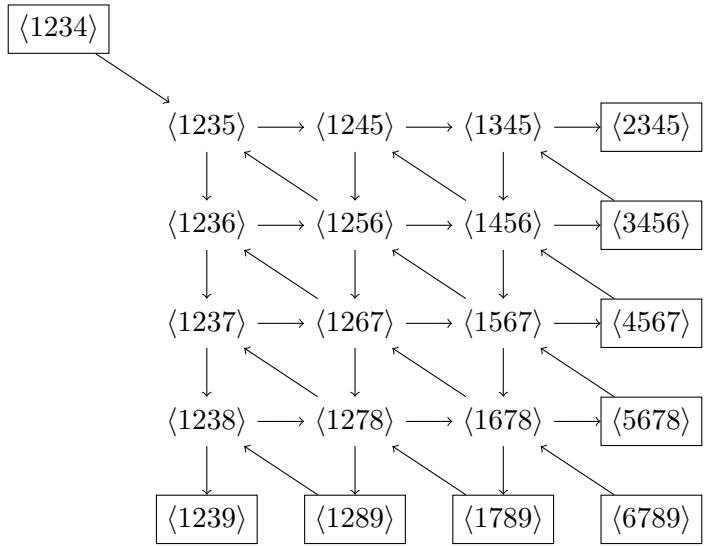
To tame the infinity of the cluster algebra we stop mutating in a given direction whenever the result of this mutation is a cluster variable whose associated ray is redundant with respect to the totally positive (partially) tropicalised configuration space -- that is, the ray does not lie on a maximal intersection of tropical hypersurfaces. By this truncation procedure, we obtain a finite subset of the infinite cluster algebra and thus a finite collection of $\A$-variables.

For computational purposes, we computed the truncated cluster algebra in two steps. First, we performed the aforementioned finite number of mutations only on the adjacency matrix, eq.~\eqref{eq:BMutation}, and cluster rays, eqs.~\eqref{eq:CMutation}--\eqref{eq:rayMutation}. Being only matrix operations, this can be done much more efficiently than factoring the rational expressions of the variables. Having computed the truncated cluster fan, we scanned it for paths of mutations connecting the initial cluster seed with a cluster containing a given ray. Due to the one-to-one correspondence of $\A$-variables and rays, we finally mutated the variables along these paths to obtain all cluster variables.

We find that the truncated cluster algebra obtained from the partially tropicalised totally positive configuration space $\pTPTC{4,9}$ contains 3,078 rational $\A$-variables in 24,102,954 clusters. These variables are all homogeneous polynomials in the Plücker variables of degree up to 6, see table \ref{tab:nineParticleLetterDegrees}. For comparison, we have also carried out the same truncation procedure for the full positive tropical configuration space $\TPTC{4,9}$, this time finding 12,645 $\A$-variables distributed in 55,363,988 clusters, whose multiplicity per degree are also listed in the same table.
\begin{table}[h]
	\centering
	\begin{tabular}{|l|c|c|c|c|c|c|c|c|c|c|c|}
		\hline \hline
		Degree & $1$ & $2$ & $3$ & $4$ & $5$ & $6$ & $7$ & $8$ & $9$ & $10$ & Total \\ \hline \hline
		$\pTPTC{4,9}$ & $117$ & $576$ & $1287$ & $963$ & $126$ & $9$ & - & - & - & - & $3078$ \\ \hline
		$\TPTC{4,9}$ & $117$ & $576$ & $1854$ & $3159$ & $2943$ & $1926$ & $1296$ & $531$ & $180$ & $63$ & $12645$ \\ \hline
		\hline
	\end{tabular}
	\caption{Number of $\A$-variables of the truncated cluster algebra of $\G{4,9}$ grouped by their (homogeneous) polynomial degree in the Plücker variables. At degree 1, all Plücker variables appear when also including the frozen  $\langle i\, i+1\, i+2\, i+3\rangle$ variables.}
	\label{tab:nineParticleLetterDegrees}
\end{table}

As already mentioned in section~\ref{sec:BackgroundTropConf}, existing data and symmetry reasons point to $\pTPTC{4,9}$ as the minimal choice relevant for scattering amplitudes, so unless otherwise stated we will be focusing on the latter. Explicitly, the $\A$-coordinates of degree up to three that make up the rational part of our candidate nine-particle alphabet are schematically given by\footnote{In this list, all types of letters are to be read as disjoint sets, such that e.g. the 8 classes of type $\pl{(\bar{i})\cap(jkl)\cap(mno)\cap(pqr)}$ are meant to not include those of type $\pl{(\bar{i})\cap(jkl)\cap(\bar{m})\cap(nop)}$, etc. We use notations where $(\bar{a})$ corresponds to the plane $(a-1\,a\,a+1)$ in momentum twistor space, the intersection of a line and a plane is given by $\pl{I(ab)\cap(cde)J} = \pl{IaJ}\pl{bcde} + \pl{IbJ}\pl{cdea}\,$, and the intersection of two planes by $\pl{I(abc)\cap(def)J} = \pl{IabJ}\pl{cdef} + \pl{IbcJ}\pl{adef} + \pl{IcaJ}\pl{bdef}\,$ for appropriate index sets $I, J \subset \{1,\dots,9\}$. See e.g.~\cite{Arkani-Hamed:2010pyv} for more details.}
\begin{itemize}
	\item all single Plücker variables $\pl{ijkl}$,
	\item 64 cyclic (37 dihedral) classes of degree two consisting of $\pl{1i(jkl)\cap(mno)}$ with $i\in\left\{1,2,3,5,6,7\right\}$,
	\item 143 cyclic (74 dihedral) classes of degree three consisting of
	\begin{itemize}[label=$\circ$]
		\setlength\itemsep{0.5em}
		\item 8 classes of type $\pl{(\bar{i})\cap(jkl)\cap(mno)\cap(pqr)}$ with $i\in\left\{2,3,4,5,7\right\}$, 
		
		35 of type $\pl{(\bar{i})\cap(jkl)\cap(\bar{m})\cap(nop)}$ with $2\leq i \leq 7$ and $4\leq m \leq 9$,
		
		20 of type $\pl{(\bar{i})\cap(jkl)\cap(\bar{m})\cap(\bar{n})}$ with $i\in\left\{1,2,4,5\right\}$, $4\leq m \leq 7$, $6\leq n \leq 9$,
		
		1 of type $\pl{(\bar{2})\cap(\bar{7})\cap(\bar{5})\cap(\bar{9})}$,

		\item 40 classes of type $\pl{i(12)\cap(jkl)(mno)\cap(pqr)}$ with $3 \leq i \leq 9$,
		
		2 of $\pl{4(13)\cap(896)(895)\cap(\bar{6})}$ and $\pl{8(13)\cap(\bar{5})(549)\cap(679)}$,
		
		2 of $\pl{3(18)\cap(\bar{8})(245)\cap(267)}$ and $\pl{7(18)\cap(235)(236)\cap(\bar{5})}$,

		\item 30 classes of type $\pl{i(12)\cap(klm)(no)\cap(pqr)s}$ with $o=n+1$,
		
		2 of $\pl{4(12)\cap(\bar{8})(68)\cap(\bar{2})5}$ and $\pl{5(79)\cap(\bar{2})(12)\cap(\bar{4})6}$,
		
		2 of $\pl{4(13)\cap(\bar{8})(67)\cap(\bar{2})5}$ and $\pl{6(89)\cap(\bar{2})(13)\cap(\bar{5})7}$,	
		
		1 of $\pl{8(14)\cap(\bar{6})(56)\cap(\bar{3})9}$.
	\end{itemize}
\end{itemize}
We have included these, as well as the remaining higher-degree letters, in the ancillary file~\texttt{Gr49RationalAlphabet.m} attached to the \texttt{arXiv} submission of this article, where the precise ranges of the indices not stated in the text may be found as well. Note that the representation is not unique due to the many ways to equivalently express these polynomials via the Plücker identities.

In total, the rational part of our candidate nine-particle alphabet consists of 3,087 $\A$-coordinates forming 3,078 dual conformally invariant letters, arranged in 342 cyclic classes which always have multiplicity 9. If one also considers the dihedral flip transformation, the alphabet consists of 9, 37, 74, 57, 7, and 1 dihedral classes of rational letters of degree 1 to 6, respectively (the multiplicity of dihedral classes may be either 9 or 18, depending on whether a flip relates two cyclic classes or maps one back to itself). Consequently, the proposed rational alphabet is dihedrally invariant. 

Let us conclude this section with some further comparisons and remarks on the structure of the rational part of our candidate alphabet. First of all, we can compare it with the explicit results for the symbol of the two-loop NMHV nine-particle amplitude, computed recently in~\cite{He:2020vob}. The latter contains 99 square-root letters, which we will discuss in the next section, as well as  522 dual conformally invariant rational letters that are polynomials in the Plücker variables of degree up to three. We find that all of these rational letters are indeed contained in our alphabet, which serves as a first consistency check. Note that the 216 nine-particle MHV letters~\cite{CaronHuot:2011ky} are all rational and contained in the NMHV ones, and hence our proposal trivially covers this helicity configuration as well.

To make the comparison more precise,  the alphabet of~\cite{He:2020vob} contains all degree one letters of the alphabet proposed here except $\pl{1357}$ plus cyclic permutations. Furthermore, several cyclic classes of higher degree letters are missing compared to our alphabet. Among those are 27 cyclic classes of degree two letters, 27 cyclic classes of the type $\pl{(\bar{i})\cap(jkl)\cap(\bar{m})\cap(nop)}$ as well as all but one of the 30 cyclic classes of the type $\pl{i(12) \cap (klm)(no) \cap (pqr)s}$ our proposal includes. Letters of the type $\langle i(jk)\cap(lmn)(opq) \cap(rst)\rangle$ are not at all contained in the alphabet of the two-loop NMHV nine-particle amplitude. Additional evidence in support of our proposal for the rational part of the alphabet is that it agrees with the one obtained by other means in~\cite{Ren:2021ztg}, appearing simultaneously with this article.

Next, in search of interesting patterns, we may look at what part of the infinite $\G{4,9}$ cluster algebra is chosen by our tropical selection rule according to the degree of the $\A$-variable with respect to the Pl\"ucker variables. As shown in~\cite{Chang:2019}, $\G{4,9}$ contains 576 cluster variables of degree two, 2,421 of degree three, 8,622 of degree four, and 27,054 variables of degree five. A comparison with table~\ref{tab:nineParticleLetterDegrees} demonstrates that our alphabet contains all possible quadratic cluster $\A$-variables but only a subset of those of degree three or higher.\footnote{The same statement for degrees two and three in fact holds also for the eight-particle rational alphabet we proposed in~\cite{Henke:2019hve}, which contains all 120 quadratic but not all 174 cubic cluster variables of $\G{4,8}$~\cite{Chang:2019,Mago:2020nuv}} For example, the polynomials
\begin{equation}
\pl{4(56)\cap(\bar{8})(78)\cap(\bar{2})1}\,,\quad\pl{(127)\cap(\bar{5})\cap(13i)\cap(\bar{8})}\,,
\end{equation}
with $i=5$ or $i=6$, are cluster $\A$-variables of degree three but are not associated to tropical rays of $\pTPTC{4,9}$, and hence are not selected by our procedure. 

Also, the new data point we have achieved affords us the possibility to also study the maximal Pl\"ucker degree of the letters as a function of the multiplicity $n$: Considering the cluster algebra truncated by $\pTPTC{4,n}$, we observe that the maximum degree of letters for $n=6,7,8$, and $9$ is given by $1,2,3$, and $6$ for the same values of $n$, which interestingly matches the first few values of the sequence
\begin{equation}
d_{\max}\left(n\right) =
\begin{pmatrix}
n-5\\
\lfloor\frac{n-5}{2}\rfloor
\end{pmatrix}\,,
\end{equation}
where the brackets denote the binomial coefficient and $\lfloor x \rfloor$ is the floor function. 
For $\TPTC{4,n}$, the same count is $1,2,5,$ and $10$, which agrees with $(n-6)^2+1$. Given that the subset of $\A$-variables of fixed degree in the infinite cluster algebras of $\G{4,n}$ with $n\geq 8$ can be computed by other means~\cite{Chang:2019}, knowing the maximal degree of those that are chosen by our tropical selection rule could thus provide a more direct means for their determination also at higher $n$.

As the approach we develop in this paper in principle applies to any $n$, let us conclude this section with some further general predictions. More precisely, the truncation procedure for selecting a finite subset of $\G{4,n}$ cluster variables, as predictions for the rational part of the symbol alphabet, is algorithmic (and similarly for our infinite mutation sequence analysis yielding predictions for the square-root letters). While the constructive determination of this finite subset has to be done separately for each value of $n$, and its computational complexity increases with $n$, the initial cluster of $\G{4,n}$ will always be selected. Hence the cluster variables it contains, namely the Plücker variables
\begin{equation}
	\pl{1234},\pl{123i},\pl{12i-1\,i},\pl{1i-2\,i-1\,i},\pl{i-3\,i-2\,i-1\,i}\quad\text{with}\quad5\leq i\leq n\,,
\end{equation}
as well as their dihedral images (since our truncation procedure respects dihedral symmetry) will always be included in our prediction for the rational part of the symbol of the $n$-particle amplitude.

\subsection{Square-root letters from infinite mutation sequences}
\label{sec:NineParticleAlgebraic}
Having obtained a candidate for the rational part of the alphabet of nine-particle amplitudes in the previous subsection, here we will enlarge it so as to also include square-root letters. The general procedure for doing so has been presented in section~\ref{sec:A11}, and relies on considering infinite mutation sequences starting from any origin cluster of the truncated cluster algebra, that contains an $\operatorname{A}^{(1)}_1$ subalgebra. As with the eight-particle case, discussed in section~\ref{sec:A11Application}, in the first instance we examine the limit of cluster rays along the sequence, and only select it, along with its associated square-root letters (or generalised cluster variables) if this limit coincides with a tropical ray of $\pTPTC{4,9}$. 

This step therefore requires knowledge of all tropical rays. While these may be obtained with the help of dedicated software such as \texttt{polymake}~\cite{polymake:2000}, fortunately most of the work has already been done in~\cite{He:2020ray}. There, the dual fan of the tropical configuration space $\TPTC{4,9}$~\cite{Speyer2005,WilliamsAmplitudes2020}, namely the Minkowski sum of the Newton polytopes $\mathcal{P}\left(4,9\right)$ obtained from the web-parameterisation of the Plücker variables, has been computed. Since all $\pTPTC{4,9}$ rays are contained in $\TPTC{4,9}$, we may thus extract them from the provided $\mathcal{P}\left(4,9\right)$ data. As a technical remark, in this data the rays (or facets, in the language of the dual polytope) are provided in the space of $D=126$  Plücker coordinates $\pl{ijkl}$, however it is easy to obtain its rays in the space of the $d=12$ variables of the (tropicalised) web-parameterisation we review in appendix~\ref{sec:WebParam}: One needs to simply express the Plücker coordinates in terms of the $d$ web-parameters, and solve for the latter after equating the former to the value of the $D$-dimensional rays.

In this fashion, we find that out of the 19,395 $\TPTC{4,9}$ rays, a total of 3,429 is contained in the partially tropicalised, totally positive configuration space $\pTPTC{4,9}$. We have already seen that 3,078 of these rays are associated to $\A$-variables in the truncated cluster algebra thus corresponding to rational letters. Next, we scan the cluster algebra of $\G{4,9}$ truncated by $\pTPTC{4,9}$ and identify a total of 549,180 origin clusters, from which we obtain the limit rays by numerically evaluating eq.~\eqref{eq:rayMutation} for a sufficient number of mutations along the infinite sequence. This yields another 324 rays of $\pTPTC{4,9}$ that are not in the truncated cluster algebra.

After having obtained the limit rays, the next step is to associate two square-root letters to each origin cluster they can arise from, according to eq.~\eqref{eq:algLet}. As already pointed out, the merit of the analysis of $\operatorname{A}^{(1)}_1$ sequences with general coefficients, that we carried out in subsection~\ref{sec:A11Theory}, is that we can immediately obtain the square-root letters in question by simply plugging in the $\X$-coordinates of the $\operatorname{A}^{(1)}_1$ subalgebra of a given origin cluster in eq.~\eqref{eq:A11_invariants}. It is important to bear in mind, however, that many of these letters are identical, since mutating an origin cluster at a node not connected to the subalgebra will not change the data relevant for the limit. Furthermore, the resulting distinct square-root letters are not all multiplicatively independent, such that not all of them are required to describe the symbol of the amplitude. 

To obtain a basis of these letters, we adopt the following approach, which is similar to that of ref.~\cite{Mitev:2018kie}. Given any set of letters, we first express them in terms of 12 independent variables, for example with the help of the web-parameterisation. We then evaluate these variables at some prime values, and find multiplicative relations of numerically evaluated letters by sampling all possible such relations with a fixed total integer power.\footnote{Note that once we find a relation in this numeric evaluation, we can also verify it symbolically.} Having reduced the letters to some smaller set, say of size $m$, we can verify whether no more relations exist by evaluating the logarithm of the letters at $m$ different evaluation points. In this logarithmic form, multiplicative relations among the letters correspond to integer-coefficient linear relations. Hence, if the rank of the $m\times m$ matrix formed by the evaluations of the letters is maximal, no further relations can exist.

For the case of the square-root letters obtained from $\operatorname{A}^{(1)}_1$ sequences, we find that there is a one-to-one correspondence between the radicand of the square-root and the limit ray. This implies that there can only be multiplicative relations among letters obtained from sequences with the same limit ray. In total, we find $2,349$ multiplicatively independent square-root letters in 36 cyclic (21 dihedral) classes associated to the $324$ limit rays. Arranged according to the number of multiplicatively independent sets of letters per ray, or equivalently per radicand, they consist of
\begin{itemize}
	\item 6 cyclic (3 dihedral) classes of sets of 5 multiplicatively independent letters,
	\item 8 cyclic (4 dihedral) classes of sets of 6 independent letters,
	\item 8 cyclic (4 dihedral) classes of sets of 7 independent letters,
	\item 6 cyclic (4 dihedral) classes of sets of 8 independent letters,
	\item 2 cyclic (1 dihedral) classes of sets of 9 independent letters,
	\item 5 cyclic (4 dihedral) classes of sets of 10 independent letters, and
	\item 1 cyclic (1 dihedral) class of a set of 11 independent letters,
\end{itemize}
Since the explicit expressions for these letters are quite complicated, we will refrain from quoting them here, and instead provide them in the ancillary file~\texttt{Gr49AlgebraicAlphabet.m} attached to the \texttt{arXiv} submission of this article.  We now briefly comment on some properties of this alphabet.

First of all, comparing the square-root letters presented in this article to those of the two-loop nine-particle NMHV amplitude reported in~\cite{He:2020vob}, we find that the $9\times 11$ letters put forward in the aforementioned reference precisely correspond to the last cyclic class of $11$ multiplicatively independent letters of our proposed non-rational alphabet. Together with a similar analysis we carried out in the previous section for the rational letters, this implies that we obtain the entire two-loop nine-particle (N)MHV alphabet as part of our approach, which thus passes a quite nontrivial consistency check.

Furthermore, as can be seen from the above presentation of the square-root letters, the alphabet is invariant under the dihedral transformations. The dihedral flip transformation maps a cyclic class with a given number of multiplicatively independent letters to another class of the same size, whereas the letters obtained from $\phi_0$, eq.~\eqref{eq:algLet}, get mapped to those obtained from $\tilde{\phi}_0$ and vice versa.

Let us also comment on the structure of square roots appearing in our algebraic letters. We can rewrite $\phi_0$ and $\tilde{\phi}_0$ into the form of $(a_i\pm\sqrt{a_i^2-4b_i})/2$. As we saw in subsection~\ref{sec:A11Application}, in the case of eight-particle amplitudes the radicand $\Delta = a_i^2-4b_i \equiv K_1^2-4K_2$ is always proportional to one of the square-roots of the eight-point four-mass boxes $\Delta_{1,3,5,7}$ and $\Delta_{2,4,6,8}$, see in particular eqs.~\eqref{eq:Delta8} and~\eqref{eq:Delta8_2}, and e.g.~\cite{Bourjaily:2013mma} for more details on the four-mass boxes. However, in the non-rational alphabet for nine-particle amplitudes suggested here, we find that only the radicands of the last cyclic class of $11$ independent letters each are proportional to the square-roots of the nine-point four-mass boxes, $\Delta_{1,3,5,7}$ and its cyclic permutations (a total of nine), which are the square-root singularities obtained from the Landau analysis at two loops~\cite{Prlina:2017tvx}. For example, we also find square-root letters whose radicand is given by
\begin{align}
	\Delta &\propto A^2-4B\,,\quad\text{with}\\
	A &= 1 - \frac{\pl{6789}\pl{13\left(278\right)\cap\left(246\right)}^2}{\pl{1235}\pl{1289}\pl{3567}\pl{1679}^2} + \frac{\pl{1267}\pl{23\left(146\right)\cap\left(178\right)}\pl{46\left(278\right)\cap\left(129\right)}}{\pl{1235}\pl{1289}\pl{3567}\pl{1679}^2}\,,\\
	B &= \frac{\pl{1267}\pl{23\left(146\right)\cap\left(178\right)}\pl{46\left(278\right)\cap\left(129\right)}}{\pl{1235}\pl{1289}\pl{3567}\pl{1679}^2}\,,
\end{align}
which is not proportional to one of the four-mass boxes. Attributing the additional square roots we find to particular integrals is a very interesting question we leave for future work.

As a further check of the nine-particle singularities we have obtained, the Landau equations also predict that the branch points $b_i = 0$ correspond to the zero loci of some rational letters. And indeed, we confirm that this holds for all the square-root letters of our candidate non-rational alphabet for nine-particle scattering. In fact, this is a general property for all square-root letters obtained by the prescription of eq.~\eqref{eq:algLet}, independent of the particle number $n$, as we now show. Rewriting these in the form discussed in the previous paragraph, we obtain $b_1 = -x_{1;0}$ for $\phi_0$ and $b_2=-x_{1;0}^2x_{2;0}$ for $\tilde{\phi}_0$, whereas $x_{1;0}, x_{2;0}$ are the $\X$-variables corresponding to the $\operatorname{A}^{(1)}_1$ cluster subalgebra in the origin clusters. Since the $\X$-variables are monomials in the cluster $\A$-variables, the branch points $b_i=0$ of all square-root letters are the zero loci of some letters from the rational alphabet.

To summarise, we have obtained 3,078 rational and 2,349 square-root letters. Whereas the rational letters are in one-to-one correspondence to tropical rays of $\pTPTC{4,9}$, the square-root letters are associated to a total of $324$ tropical rays, in sets containing between $5$ and $11$ letters per ray. This is very similar to the eight-particle case, where 9 square-root letters are associated to each of the two limit rays. 

A great qualitative difference between the eight- and the nine-particle case is that we no longer obtain all tropical rays of $\pTPTC{4,9}$ from the Grassmannian cluster algebra by selection or an $\operatorname{A}^{(1)}_1$ mutation sequence: In particular we can access 3,402 out of the 3,429 such rays in this manner, so we fall short of 27 $\pTPTC{4,9}$ rays. Given the great jump in complexity  between the $\G{4,8}$ and $\G{4,9}$ cluster algebras,\footnote{In particular, while the cluster algebras of both $\G{4,8}$ and $\G{4,9}$ are infinite, the former one is of \emph{finite mutation type}, implying that it consists of only a finite number of different quivers, see e.g.~\cite{CAIV,fomin2006cluster}.} perhaps the real surprise is not that we cannot access all rays by our method, but that the number of rays we cannot access is so small.

Nevertheless, in the next section we will explore more general infinite mutation sequences of $\operatorname{A}^{(1)}_m$ Dynkin type as a possible means for obtaining the missing rays, as well as touch on their implications for amplitude singularities. Before concluding, it's also worth mentioning that in its current state of development, neither the scattering diagram approach~\cite{Herderschee:2021dez} that we discussed in detail in subsection \ref{sec:SDAlphabet} can solve the mystery of the 27 missing rays, as it too relies on infinite mutation sequences starting from within the cluster algebra.

\section{One generalisation of infinite mutation sequences}
\label{sec:Am1}
For the case of eight-particle scattering we have seen that all rays of $\pTPTC{4,8}$ can be obtained from the $\G{4,8}$ cluster algebra with the help of infinite mutation sequences of type $\operatorname{A}^{(1)}_1$, however an analogous statement is not true for nine-particle scattering. While the square-root letters associated to the accessible rays agree with explicit two-loop results for both multiplicities, it remains unclear whether the missing rays contribute additional amplitude singularities in the latter case.

For this reason, in this section we will study a generalisation of the considered mutation sequences aiming to access the missing rays. The starting point is the observation that the property allowing us to associate a sequence to a cluster algebra is the periodicity of its quivers \cite{Fordy:2009qz}. A quiver is said to be \emph{cluster-mutation periodic} of period $p$, if there is a sequence of $p$ mutations resulting in a quiver isomorphic to the initial one. Consider for example the quiver of the $\operatorname{A}^{(1)}_1$ cluster algebra, fig.~\ref{fig:mutSequ}, which has period one since mutating any of its two nodes flips the double-arrow resulting in the same quiver up to relabelling the nodes. 

If a quiver is periodic in this sense, we can repeat the same mutation infinitely many times thus giving rise to an infinite mutation sequence. Since the rational functions, eqs.~\eqref{eq:AMutationRule} and~\eqref{eq:YMutationRule}, that realise the mutation are determined by the quiver, the periodicity allows us to write down a recurrence relation for all clusters along the sequence,
\begin{equation}
	a_{i;j+1} = M_i\left(a_{1;j},\dots,a_{r;j};y_{1;j},\dots,y_{r;j}\right)\,.
\end{equation}
Unlike for generic cluster algebras, due to the periodicity of the quiver, the rational function $M_i$ does not depend on $j$ but is the same for all clusters along the sequence.

Note that in general, while we may repeat the mutation infinitely many times, the resulting sequence may be periodic, that is consisting of only a finite set of different $\A$-variables. This is the case whenever the considered cluster algebra is finite, like for example for the cluster algebra of $A_2$ Dynkin type, whose quiver is given by two nodes connected by one arrow. The corresponding cluster algebra is finite with five clusters and five variables.

Periodic clusters have been studied and classified in~\cite{Fordy:2009qz} for period one and two. Using this perspective, in this section we consider cluster algebras of $\operatorname{A}^{(1)}_m$ Dynkin type for $m\in\mathbb{N}$, which are the largest class of period one \emph{primitives}, the building blocks of all period one cluster algebras. To the best of our knowledge the analysis of their infinite mutation sequences is new, and thus may be of intrinsic mathematical interest irrespective of the question of the missing rays. Subsection~\ref{sec:Am1Theory} works them out in analogy to subsection~\ref{sec:A11Theory}, subsection~\ref{sec:Am1Application} explores the possibility of using these sequences to obtain algebraic letters beyond the $\operatorname{A}^{(1)}_1$ singularities, and subsection~\ref{sec:triangulation} discusses the inherent limitations of accessing the limit rays from the $\G{4,n}$ cluster algebra for $n\ge 9$. We again refer to appendix~\ref{sec:Am1Proofs} for details of the proofs which are omitted in the main text.

\subsection{Mutation sequences in $\operatorname{A}^{(1)}_m$ with general coefficients}
\label{sec:Am1Theory}
The cluster algebras of $\operatorname{A}^{(1)}_m$ Dynkin type are rank-$(m+1)$ cluster algebras whose eponymous quivers are depicted in figure~\ref{fig:AM1Seq}. As can be easily seen, mutating at either the source or sink (ie. the node of $a_{1;j}$ or $a_{m+1;j}$) leads to the same quiver with the labels of the nodes rotated clockwise by one position. In this section, we will discuss the repeated mutation at the source, that is we always mutate $a_{1;j}$. For the other direction, see appendix~\ref{sec:Am1Proofs}.
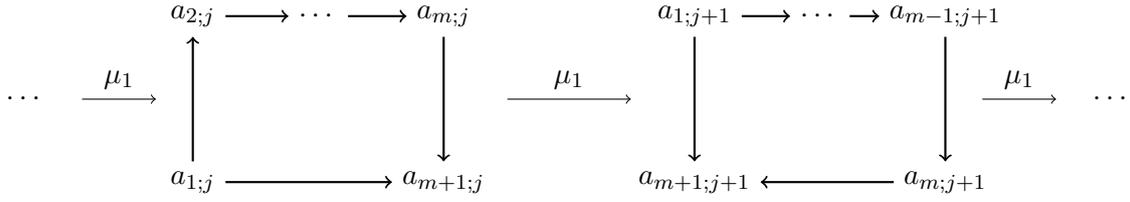
\begin{figure}[ht]
	\centering
	\begin{tikzpicture}[scale=1.1]
		\node at (0,1) (g) {$\cdots$};
		
		\node at (2,1) (h) {};
		\node at (2,0) (i) {$a_{1;j}$};
		\node at (2,2) (j) {$a_{2;j}$};
		\node at (3.5,2) (k) {$\cdots$};
		\node at (5,2) (l) {$a_{m;j}$};
		\node at (5,0) (m) {$a_{m+1;j}$};
		\node at (5,1) (n) {};
		
		\node at (8,1) (o) {};
		\node at (8,0) (p) {$a_{m+1;j+1}$};
		\node at (8,2) (q) {$a_{1;j+1}$};
		\node at (9.5,2) (r) {$\cdots$};
		\node at (11,2) (s) {$a_{m-1;j+1}$};
		\node at (11,0) (t) {$a_{m;j+1}$};
		\node at (11,1) (u) {};
		
		\node at (13,1) (v) {$\cdots$};

		\draw[->,thick] (i) -- (j);
		\draw[->,thick] (i) -- (m);
		\draw[->,thick] (j) -- (k);
		\draw[->,thick] (k) -- (l);
		\draw[->,thick] (l) -- (m);
		
		\draw[->,thick] (q) -- (p);
		\draw[->,thick] (t) -- (p);
		\draw[->,thick] (q) -- (r);
		\draw[->,thick] (r) -- (s);
		\draw[->,thick] (s) -- (t);
		
		\draw[->,shorten >=10pt,,shorten <=10pt] (g) edge node[above] {$\mu_1$} (h);
		\draw[->,shorten >=20pt,,shorten <=20pt] (n) edge node[above] {$\mu_1$} (o);
		\draw[->,shorten >=10pt,,shorten <=10pt] (u) edge node[above] {$\mu_1$} (v);
		
	\end{tikzpicture}
	\caption{Clusters $j$ and $j+1$ along the considered mutation sequence of the cluster algebra of $\operatorname{A}^{(1)}_m$ Dynkin type. The coefficients are omitted in the figure.}
	\label{fig:AM1Seq}
\end{figure}

The $\operatorname{A}^{(1)}_m$ cluster algebras can also be considered as arising from the surface $A(m,1)$, which is the annulus with $m$ marked points on the outer and one marked point on the inner boundary. In this formalism, a cluster of the cluster algebra corresponds to a triangulation of this surface, in which each arc connecting two marked points (or one with itself) corresponds to a variable of the cluster, see fig.~\ref{fig:A21Surface} for an example or~\cite{fomin2006cluster} for details on this relation.

\begin{figure}[ht]
	\centering
	\includegraphics[width=0.25\textwidth]{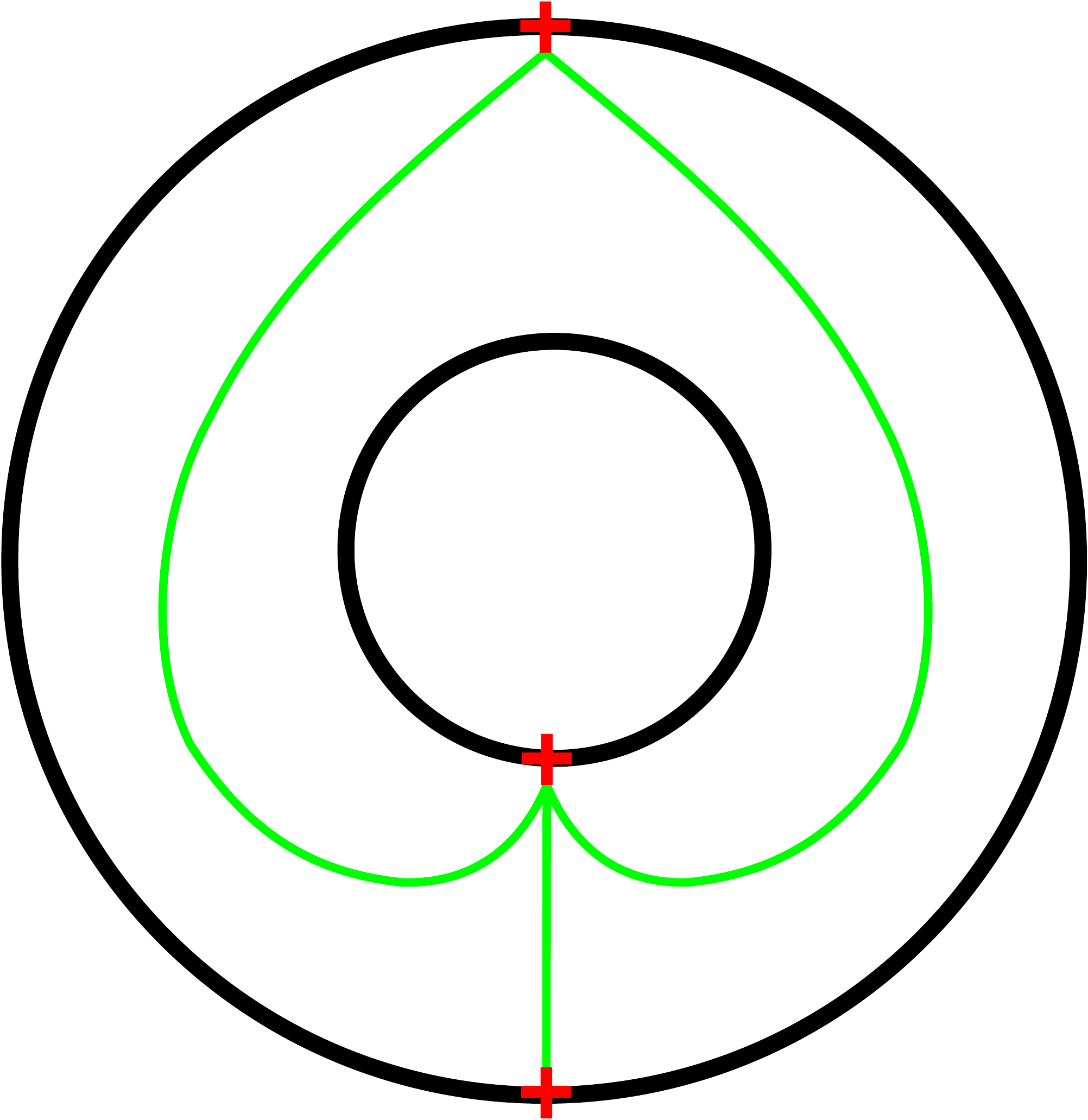}
	\caption{Annulus with two marked points on the outer and one marked point on the inner boundary. Corresponds to the initial cluster of the $\operatorname{A}^{(1)}_2$ cluster algebra.}
	\label{fig:A21Surface}
\end{figure}

The cluster algebras of $\operatorname{A}^{(1)}_m$ Dynkin type are cluster-mutation periodic with period one. From the geometric perspective of the annulus we described, this periodicity corresponds to winding the arcs around the inner boundary, which can be considered as a redundancy in the cluster algebra \cite{NimaEdinburgh2020, NimaAmplitudes2020}. We may remove this redundancy by replacing it with quantities that are invariant under mutation or, geometrically speaking, a winding by $2\pi$ along the sequence. For $m=1$ these invariants are the building blocks of the square-root letters for eight- and nine-particle scattering.

Consider now the infinite mutation sequence depicted in fig.~\ref{fig:AM1Seq}, which starts at the initial cluster $j=0$. We can immediately write down the mutation relations for the variables and coefficients along this sequence by applying eqs.~\eqref{eq:AMutationRule} and~\eqref{eq:YMutationRule} to the depicted clusters. They are given for any $j\in\Z$ by
\begin{gather}
	a_{m+1;j+1} = \frac{a_{2;j}a_{m+1;j}}{a_{1;j}}\frac{1+x_{1;j}}{1\cltrad y_{1;j}} \,, \label{eq:Am1ASequence} \\
	y_{1;j+1} = \frac{y_{2;j}\,y_{1;j}}{\left(1\cltrad y_{1;j}\right)}\,,\quad y_{m;j+1} = \frac{y_{m+1;j}\,y_{1;j}}{\left(1\cltrad y_{1;j}\right)}\,,\quad y_{m+1;j+1} = \left(y_{1;j}\right)^{-1} \,, \label{eq:Am1YSequence} \\
	x_{1;j+1} = \frac{x_{2;j}\,x_{1;j}}{\left(1 + x_{1;j}\right)}\,,\quad x_{m;j+1} = \frac{x_{m+1;j}\,x_{1;j}}{\left(1 + x_{1;j}\right)}\,,\quad x_{m+1;j+1} = \left(x_{1;j}\right)^{-1} \,. \label{eq:Am1XSequence}
\end{gather}
Note that as in the case of $\operatorname{A}^{(1)}_1$ sequences (and, in fact, for any cluster algebra) the mutation rule of the $\X$-variables is that of the coefficients with tropical addition changed to normal addition. Also similar to the case of $\operatorname{A}^{(1)}_1$, we used that the $\X$-variable associated to $a_{1;j}$ is given by $x_{1;j} = \left(a_{2;j}a_{m+1;j}\right)^{-1}y_{1;j}$ to arrive at eq.~\eqref{eq:Am1ASequence}. The other variables and coefficients are not mutated but only shifted in their first index by
\begin{align}
	a_{i;j+1} &= a_{i+1;j}\quad\text{for }i\neq m+1\,, \label{eq:Am1ARels}\\
	y_{i;j+1} &= y_{i+1;j}\quad\text{for }i\notin\left\{1,m,m+1\right\}\,, \label{eq:Am1YRels}\\
	x_{i;j+1} &= x_{i+1;j}\quad\text{for }i\notin\left\{1,m,m+1\right\}\,. \label{eq:Am1XRels}
\end{align}

In~\cite{Fordy:2009qz}, infinite mutation sequences of type $\operatorname{A}^{(1)}_m$ without coefficients\footnote{The cluster algebra without coefficients can be obtained from that with general coefficients by setting the initial coefficients to one. Due to the mutation relations, the coefficients and $(1\cltrad y_{1;j})$ will then be equal to one in every cluster and will not influence the other mutation relations.} were analysed by linearising the recursion relations. Following this approach, which we have also used in section~\ref{sec:A11}, we first introduce the sequence $\beta_j$ of ratios, which is defined as
\begin{equation}
	\beta_{j} = \frac{a_{m+1;j}}{a_{1;j}}\,.
\end{equation}
Again this is the ratio of the sink-variable over the source-variable. Furthermore, we can again express $\beta_j$ also as $a_{1;j+m}/a_{1;j}$ and other equivalent ways by using eq.~\eqref{eq:Am1ARels}. We also define the auxiliary sequence $\gamma_j$ as
\begin{equation}
	\gamma_j = 1\cltrad y_{1;j} \cltrad y_{1;j}\left(y_{1;j-m}\right)^{-1}\,.
\end{equation}
Having defined the generalisations of $\beta_j$ and $\gamma_j$ for any $m$, we continue by also defining the two quantities
\begin{align}
	K_{1,j} &= \left(\gamma_0\gamma_{j}^{-1}\beta_{0}^{-1}\beta_{j}\right)\left[1+x_{1;j} + x_{1;j}\left(x_{1;j-m}\right)^{-1}\right] \,, \label{eq:Am1K1j}\\
	K_{2,j} &= \left(\gamma_0\gamma_{j}^{-1}\beta_{0}^{-1}\beta_{j}\right)\left(\gamma_0\gamma_{j-m+1}^{-1}\beta_{0}^{-1}\beta_{j-m+1}\right)\left[x_{1;j}\left(x_{1;j-m}\right)^{-1}\right]\,, \label{eq:Am1K2j}
\end{align}
whereas we inserted the factor $\gamma_0\beta_0^{-1}$ to normalise these quantities at $j=0$ for later convenience. We also may express $K_{1,j}$, $K_{2,j}$ and $\gamma_j$ in terms of the variables of cluster $j$ only. Using the mutation rules in the opposite direction, ie. mutating at $a_{m+1;j}$, see appendix~\ref{sec:Am1Proofs}, we can express $x_{1;j-m}$ in terms of cluster $j$ as
\begin{equation}
	\label{eq:x1Rel}
	\left(x_{1;j-m}\right)^{-1} = x_{2;j}\left(1+x_{3;j}\left(1+\cdots x_{m;j}\left(1+x_{m+1;j}\right)\right)\right)\,.
\end{equation}
Equivalently, we obtain the same relation for $y_{1;j-m}$ by replacing $x_{i;j}$ by $y_{i;j}$ and addition by cluster-tropical addition. Using the original definition, eq.~\eqref{eq:Am1K1j}, and the mutation rules, eqs.~\eqref{eq:Am1ASequence}--\eqref{eq:Am1XSequence}, it can be shown that $K_{1;j}$ and $K_{2;j}$ are in fact invariant along the sequence and are thus given in terms of the initial cluster variables as
\begin{align}
	K_1 &\equiv K_{1,0} = 1 + x_{1;0}\left(1+x_{2;0}\left(1+x_{3;0}\left(1+\cdots x_{m;0}\left(1+x_{m+1;0}\right)\right)\right)\right)\,, 	\label{eq:Am1K1}\\
	K_2 &\equiv K_{2,0} = x_{1;0}x_{2;0}\left(1+x_{3;0}\left(1+\cdots x_{m;0}\left(1+x_{m+1;0}\right)\right)\right)\,. 	\label{eq:Am1K2}
\end{align}

Using these invariants we can linearise the recursion relation~\eqref{eq:Am1ASequence}. As before, the homogeneous, linear recurrence obtained when considering the sequence $a_{1;j}$, given by
\begin{equation}
	\label{eq:Am1LinearARec}
	\gamma_j^{-1}\gamma_{j+m}^{-1}a_{1;j+2m} - \gamma_0^{-1}\beta_0K_1\cdot \gamma_j^{-1} a_{1;j+m} + \gamma_0^{-2}\beta_0^2K_2\cdot a_{1;j} = 0\,,
\end{equation}
does not have constant coefficients. Hence, we define the new variable $\alpha_j$ for $j\geq 0$ by
\begin{equation}
	\alpha_j = \gamma^{-1}_{j\,\text{mod}\,m}\gamma^{-1}_{(j\,\text{mod}\,m) + m}\cdots\gamma_{j-2m}^{-1}\gamma_{j-m}^{-1}\cdot a_{1;j}\,.
\end{equation}
Note that this is to be read that for $0\leq i < m$ we have $\alpha_{m+i} = \gamma_i^{-1}\cdot a_{1;m+i}$, and $\alpha_{2m+i} =\gamma_i^{-1}\gamma_{m+i}^{-1}\cdot a_{1;2m+i}$, and so on. In terms of this sequence, the recurrence is given by
\begin{equation}
	\label{eq:Am1AlphaRec}
	\alpha_{j+2m} - \gamma_0^{-1}\beta_0K_1 \cdot \alpha_{j+m} + \gamma_0^{-2}\beta_0^2 K_2 \cdot \alpha_j = 0\,,
\end{equation}
with initial values $\alpha_0,\dots,\alpha_{2m-1}$, which in turn can be expressed in terms of the variables and coefficients of the initial cluster via the mutation relations, eqs.~\eqref{eq:Am1ASequence}--\eqref{eq:Am1XSequence}. Observing that $\alpha_{j+m}/\alpha_j=\gamma_j^{-1}\beta_j$ and that $\gamma_j\to 1$ for $j\to \infty$, as proven in appendix~\ref{sec:Am1Proofs}, we see from this recurrence that, assuming convergence, the respective limit of $\beta_j$ is obtained as the solution of the equation
\begin{equation}
	\beta^2 - \gamma_0^{-1}\beta_0K_1\cdot \beta + \gamma_0^{-2}\beta_0^2K_2 = 0\,,
\end{equation}
which has the two solutions $\beta_\pm$ given by
\begin{equation}
	\beta_\pm = \beta_0\frac{K_1 \pm \sqrt{K_1^2 - 4K_2}}{2\gamma_0}\,.
\end{equation}

Similar to before, we now turn to discussing the solution of the recurrence~\eqref{eq:Am1AlphaRec}, using standard methods based on its characteristic polynomial,
\begin{equation}
	P_m(t) = t^{2m} - \gamma_0^{-1}\beta_0K_1\cdot t^m + \gamma_0^{-2}\beta_0^2K_2\,.
\end{equation}
Its $2m$ roots are given by $\beta_\pm^{1/m}\eta_m^{i}$ for $i=0,\dots,m-1$ and whereas $\eta_m$ is the $m$-th root of unity. To see this, note that we may first solve for the roots in terms of $t^m$ resulting in $t^m=\beta_\pm$. Accordingly, the most general solution to the recurrence is given by
\begin{equation}
	\alpha_j = \left[c_0^+ + c_1^+\eta_m^j + \cdots + c_{m-1}^+\eta_m^{(m-1)j}\right]\left(\beta_+\right)^{\frac{j}{m}} + \left[c_0^- + c_1^-\eta_m^j + \cdots + c_{m-1}^-\eta_m^{(m-1)j}\right]\left(\beta_-\right)^{\frac{j}{m}}\,.
\end{equation}
The $2m$ coefficients $c_i^\pm$ for $i=0,\dots,m-1$ can be obtained from the initial values $\alpha_0,\dots,\alpha_{2m-1}$ and can thus ultimately be expressed in terms of the quantities of the initial cluster. Note that since $\eta_m$ is $m$-periodic, the overall coefficients multiplying $\left(\beta_\pm\right)^{j/m}$, denoted by $C_\pm(j)$, only depend on $(j\,\text{mod}\,m)$, implying that they assume a total of $m$ different values.

\subsection{Beyond $\operatorname{A}^{(1)}_1$ singularities?}
\label{sec:Am1Application}
Having obtained the general solution of the infinite mutation sequences of type $\operatorname{A}^{(1)}_m$, let us now discuss how we could attribute algebraic letters, or generalised cluster variables, to their rays.

Similar to the discussion of the $m=1$ case, if we were to take the direction of approach to the ray into account, then the $m\ge 1$ analog of eq.~\eqref{eq:algLet} would mean to assign $2m$ letters defined as $C_+(i)/C_-(i)$ and $\tilde{C}_+(i)/\tilde{C}_-(i)$ for $i=0,\dots,m-1$ to each ray. As is discussed in appendix~\ref{sec:Am1Proofs}, these are given by
\begin{align}
	\phi_i &\equiv \frac{C_+(i)}{C_-(i)} = \left(\frac{K_1 - \sqrt{K_1^2-4K_2}}{K_1 + \sqrt{K_1^2-4K_2}}\right)^{i/m}\frac{2F_i - K_1 + \sqrt{K_1^2-4K_2}}{-2F_i + K_1 + \sqrt{K_1^2-4K_2}}\,, \label{eq:Am1Phi}\\
	\tilde{\phi}_i &\equiv \frac{\tilde{C}_+(i)}{\tilde{C}_-(i)} = \left(\frac{K_1 - \sqrt{K_1^2-4K_2}}{K_1 + \sqrt{K_1^2-4K_2}}\right)^{i/m}\frac{2\tilde{F}_iK_2 - K_1 + \sqrt{K_1^2-4K_2}}{-2\tilde{F}_iK_2 + K_1 + \sqrt{K_1^2-4K_2}} \label{eq:Am1PhiTilde}\,,
\end{align}
for $i=0,\dots,m-1$ and whereas $F_i$ and $\tilde{F}_i$ are rational functions of the $\X$-variables of the initial cluster $j=0$, which are given by
\begin{equation}
	\tilde{F}_i = \frac{K_{m+1-i}}{K_{m+1}}\,,\qquad\qquad
	F_i = 
	\begin{cases}
	1\quad&\text{if}\,\,i=0\,,\\
	K_1 - K_{i+1}\quad&\text{otherwise.}
	\end{cases}
\end{equation}
In the definition of these rational functions, we used $K_i$, which is a generalisation of the invariants, eqs.~\eqref{eq:Am1K1} and~\eqref{eq:Am1K2}, and defined for $1 < i \leq m+1$ as
\begin{equation}
K_i = x_{1;0}\cdots x_{i;0}\left(1+x_{i+1;0}\left(1+\cdots x_{m;0}\left(1+x_{m+1;0}\right)\right)\right)\,.
\end{equation}
Note that since $F_0 = \tilde{F}_0 = 1$, the expressions~\eqref{eq:Am1Phi} and~\eqref{eq:Am1PhiTilde} simplify for $i=0$ to those of eq.~\eqref{eq:algLet}, evaluated with the generalised invariants of eqs.~\eqref{eq:Am1K1} and~\eqref{eq:Am1K2}.

However, the non-rational letters obtained from the above formulas for $m > 1$ qualitatively differ from those with $m=1$. For $m=1$ we observe a one-to-one association of the radicand, $K_1^2-4K_2$, to the limit ray of the sequence. Since such non-rational letters with the same radicand can have multiplicative relations among each other, this allows the reduction of the letters associated to any such ray to a smaller, multiplicatively independent set. For $m>1$, however, this is no longer true, as we observe that these radicands are different for every origin cluster irrespective of the limit ray, implying the multiplicative independence of all such letters. Considering for example just the letters obtained from eqs.~\eqref{eq:Am1Phi} and~\eqref{eq:Am1PhiTilde} with $m=2$ and $i=0$, one would obtain 2912 additional, multiplicatively independent square-root letters for eight-particle scattering, where all $\pTPTC{4,8}$ rays have already been determined.\footnote{We find sequences with up to $m=4$ in the cluster algebra of $\G{4,8}$ truncated by $\pTPTC{4,8}$.}

The fact this large number of additional letters is not encountered in the existing amplitude computations, seems to suggest their irrelevance. We stress again that they arise as a generalisation of the prescription of eq.~\eqref{eq:Am1Phi}, which may not be applicable to higher $m$. Nevertheless, we find it interesting that it is possible to obtain additional algebraic letters in this fashion.

\subsection{The limitations of infinite mutation sequences}
\label{sec:triangulation}

Finally, let us turn to the question of whether infinite mutation sequences more general than $\operatorname{A}^{(1)}_1$ can account for the missing $\pTPTC{4,9}$ rays. Scanning the approximately 24 million clusters of the cluster algebra of $\G{4,9}$ truncated by $\pTPTC{4,9}$, we find that they contain cluster subalgebras of type $\operatorname{A}^{(1)}_m$ with up to $m=5$. Unfortunately, however, the limit rays of these sequences are only a subset of the $324$ limit rays of $\operatorname{A}^{(1)}_1$ ones. In addition, we have checked all primitives of period one with rank up to 6\footnote{In the notations of~\cite{Fordy:2009qz}, these correspond to the quivers labelled by $P^{(2)}_i$ for $i=4,5,6$ and $P^{(3)}_6\,$.} and found that they also do not account for these missing rays. 

These results are in fact in line with an inherent limitation on accessing all tropical rays from within the cluster algebra, as we will now discuss. For infinite cluster algebras, e.g. for those of the Grassmannians with $k=4, n\geq 8$, the cluster fan is not complete (see eg.~\cite[Remark 3.2]{Reading2018a}). This means that the fan does not cover the entire ambient space $\R^d$, with $d$ the rank of the cluster algebra, and thus also cannot triangulate the entire $\pTPTC{k,n}$. An example for such an infinite cluster algebra of rank two, also considered in~\cite{Cordova:2013bza,2016arXiv160300416B,Reading2018b}, is depicted in figure~\ref{fig:infCAEx}.
\begin{figure}[ht]
	\centering
	\begin{tikzpicture}[scale=1.3]
		\node at (0,0) (a) {$a_1$};
		\node at (1,0) (b) {$a_2$};
		
		\draw[->] ([yshift=3pt]a.east) -- ([yshift=3pt]b.west);
		\draw[->] ([yshift=-3pt]a.east) -- ([yshift=-3pt]b.west);
		\draw[->] (a) -- (b);
		
	\end{tikzpicture}
	\caption{Example for an infinite cluster algebra.}
	\label{fig:infCAEx}
\end{figure}

The fan of this cluster algebra is two-dimensional with the rays of the initial clusters being the canonical unit vectors. In it, there are two infinite mutation sequences -- repeatedly mutating at either the sink or the source of the quiver -- which converge to two different rays, as depicted in fig.~\ref{fig:fanGaps}. The two-dimensional gap of the fan is also clearly visible in this figure. If a cluster algebra contains such an algebra as a subalgebra, its fan is expected to also be incomplete.
\begin{figure}[ht]
	\centering
	\includegraphics[width=0.3\textwidth]{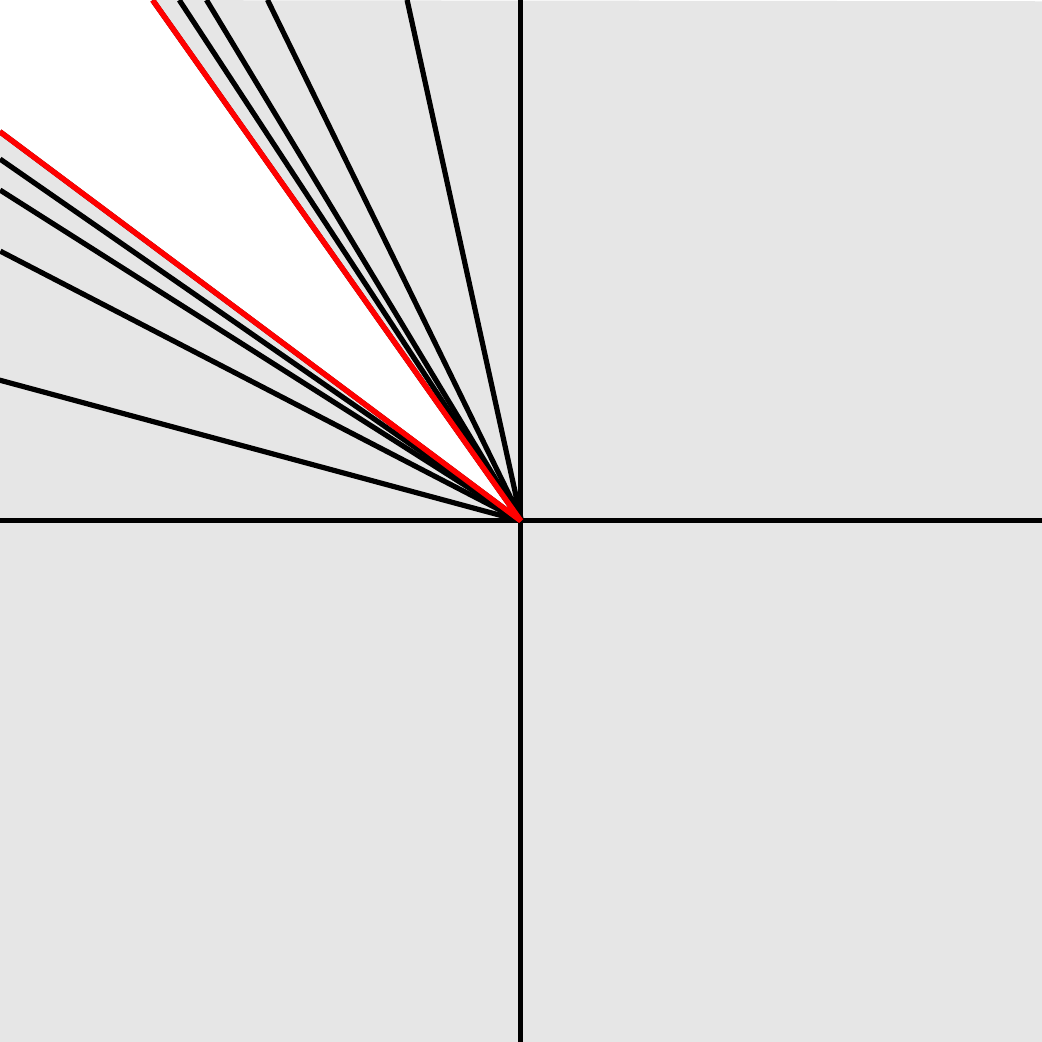}
	\caption{Sketch of the fan of the cluster algebra with two nodes connected by three arrows. The cluster algebra is infinite with two infinite sequences approaching the two rays highlighted in red.}
	\label{fig:fanGaps}
\end{figure}

In the case of eight particles, the truncated cluster algebra only contains clusters with infinite mutation sequences whose fans leave one-dimensional gaps, which could be taken care of by including the limit ray of the sequence. This is no longer the case for nine particles, since the truncated cluster algebra also contains clusters with nodes connected by three arrows, such as the one depicted in fig.~\ref{fig:kroneckerQuiverInNineParticles}.
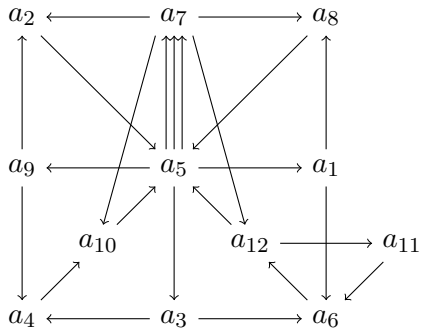
\begin{figure}[ht]
	\centering
	\begin{tikzpicture}[scale=1]
		\node at (0,1) (b) {$a_7$};
		\node at (-2,1) (b1) {$a_2$};
		\node at (2,1) (b2) {$a_8$};

		\node at (0,-1) (a) {$a_5$};
		\node at (-2,-1) (a1) {$a_9$};
		\node at (2,-1) (a2) {$a_1$};
		
		\node at (-1,-2) (c1) {$a_{10}$};
		\node at (1,-2) (c2) {$a_{12}$};
		
		\node at (0,-3) (d) {$a_3$};
		\node at (-2,-3) (d1) {$a_4$};
		\node at (2,-3) (d2) {$a_6$};
		
		\node at (3,-2) (e) {$a_{11}$};
		
		\node at (-3,-2) (f) {};
		
		\draw[->] ([xshift=3pt]a.north) -- ([xshift=3pt]b.south);
		\draw[->] ([xshift=-3pt]a.north) -- ([xshift=-3pt]b.south);
		\draw[->] (a) -- (b);
		
		\draw[->] (b) -- (b1);
		\draw[->] (b) -- (b2);
		\draw[->] ([xshift=-7pt]b.south) -- (c1);
		\draw[->] ([xshift=7pt]b.south) -- (c2);
		
		\draw[->] (a) -- (a1);
		\draw[->] (a) -- (a2);
		\draw[->] (b1) -- (a);
		\draw[->] (b2) -- (a);
		\draw[->] (c1) -- (a);
		\draw[->] (c2) -- (a);
		
		\draw[->] (a1) -- (b1);
		\draw[->] (a2) -- (b2);
		
		\draw[->] (a1) -- (d1);
		\draw[->] (a2) -- (d2);
		\draw[->] (d1) -- (c1);
		\draw[->] (d2) -- (c2);
		
		\draw[->] (a) -- (d);
		\draw[->] (d) -- (d1);
		\draw[->] (d) -- (d2);
		
		\draw[->] (e) -- (d2);
		\draw[->] (c2) -- (e);
	\end{tikzpicture}
	\caption{Example of a cluster in the truncated cluster algebra of $\G{4,9}$ containing nodes connected by three arrows. The $\A$-variables $a_i$ correspond to certain rational nine-particle letters. The frozen variables are omitted in order to avoid clutter.}
	\label{fig:kroneckerQuiverInNineParticles}
\end{figure}

Due to the existence of such clusters, it is expected that the cluster fan for nine particles contains higher-dimensional gaps. This might suggest that (some of) the $27$ missing rays are located in the interior of such gaps, explaining why they could not be reached by any limiting procedure from within the cluster algebra. Note that the truncation of these infinite cluster algebras by the selection rule provided by the partially tropicalised positive configuration space creates further gaps in the cluster fans.

Having motivated an explanation for the inaccessibility of certain tropical rays starting from the $\G{4,n}$ cluster algebra, some the most important open questions that remain include whether there exist alternative ways for obtaining these rays, that also associate some form of generalised cluster variables to them, and whether the latter provide any further information on the singularities of amplitudes. Perhaps the inaccessibility of the missing rays of the cluster algebra is related to the appearance of functions beyond multiple polylogarithms in $\N=4$ pSYM $n$-particle amplitudes: Indeed, while it is known that such functions certainly appear at $n=10$~\cite{CaronHuot:2012ab}, the possibility that these in fact also appear at lower $n$ is currently not excluded. If this turns out to be true, then the appropriate generalisation of cluster algebras may go hand in hand with a corresponding generalisation of the notion of symbol letters along the lines of~\cite{Broedel:2018iwv}. We leave these exciting questions for future work.

\section{Conclusions \& Outlook}
\label{sec:Conclusion}

In this article, we have developed a general procedure for obtaining a finite collection of rational and square-root letters expected to appear in the symbol of $\N=4$ pSYM amplitudes for arbitrary multiplicity $n$, and we have concretely applied it for the first time to the case $n=9$. Our work builds on the earlier observation that the amplitude symbol letters coincide with the variables of the $\G{4,n}$ cluster algebra for $n=6,7$~\cite{Golden:2013xva}, and on the proposal for curing the infinity of the cluster algebra in question for $n\ge 8$ with the help of geometric objects known as (duals of) tropical Grassmannians~\cite{Drummond:2019qjk,Drummond:2019cxm,Arkani-Hamed:2019rds,Henke:2019hve}. In particular, focusing on the then first nontrivial case $n=8$, the latter papers showed that tropical Grassmannians select a finite subset of rational variables of the $\G{4,n}$ cluster algebra, as well as motivate the inclusion of certain generalisations of cluster variables that contain square roots, and are related to infinite mutation sequences of a rank-two ($\operatorname{A}^{(1)}_1$) subalgebra of the cluster algebra. 

The precise form of these generalisations of cluster variables, or equivalently square-root letters, also depends on certain $\G{4,n}$ cluster variables that appear as so-called coefficients of the rank-two subalgebra. Therefore in order for the aforementioned analysis to be applicable to arbitrary multiplicity $n$, it is necessary to work out $\operatorname{A}^{(1)}_1$ sequences with general coefficients. In this work we fill this gap, and in fact we study infinite mutation sequences of larger class of rank-($m+1$) cluster algebras, denoted as $\operatorname{A}^{(1)}_m$ in the affine Dynkin diagram classification, with general coefficients. As a cross-check of our results, after specialising to $m=1$ we first apply them to the known $\G{4,8}$ case, not only finding perfect agreement with the earlier proposal for the symbol alphabet of the eight-particle amplitude, but also comparing them with a more recent, alternative proposal based on the closely related approach of~\cite{Herderschee:2021dez}. Very interestingly, we find that the two approaches have a highly non-obvious, almost complete overlap in their predictions, the only additional letters provided by scattering diagrams being the two square roots associated to the four-mass box, eqs.~\eqref{eq:Delta8} and~\eqref{eq:Delta8_2}.

With the confidence gained by this comparison, we then move on to the main application of our results, the generation of new predictions for the symbol alphabet of the nine-particle amplitude with the help of cluster algebras and tropical geometry. First, our tropical selection rule picks a finite subset of $3,078$ $\G{4,9}$ cluster variables as a candidate for the rational part of the alphabet, arranged in over 24 million clusters. Then, the analysis of infinite rank-two mutation sequences with general coefficients contained in the aforementioned clusters yields another $2,349$ square-root letters expected to appear in the symbol. We have confirmed that our thus obtained collection of nine-particle letters passes all available consistency checks; namely it respects the discrete symmetries of the amplitude, it agrees with requirements on the position of branch points coming from the Landau equations, and it contains all letters found in an explicit 2-loop calculation of the NMHV nine-particle amplitude~\cite{He:2020vob}.

At the same time, our analysis reveals new qualitative features starting at $n\ge 9$, which call for further inquiry. Both the selected  $\G{4,n}$ cluster variables and their square-root generalisations are associated to building blocks of the tropical Grassmannian, known as tropical rays. While all of these could be accessed from the cluster algebra when also including $\operatorname{A}^{(1)}_1$ infinite mutation sequences for $n=8$, this is no longer the case at $n=9$, where 27 out of 3,429 tropical rays are left unaccounted for. In subsection~\ref{sec:triangulation} we presented evidence suggesting the existence of an obstruction independent of the type of infinite mutation sequence chosen, and commented on the potential physical significance of the missing rays for amplitude singularities.

As a complementary direction for addressing some of these open questions, it is interesting to note that all eight-particle square-root letters can be obtained by considering the Schubert problem~\cite{Arkani-Hamed:2010pyv} of the corresponding four-mass box kinematics in momentum twistor space~\cite{NimaPrivate}. Namely, one first considers the four non-intersecting lines formed by the pairs of momentum twistors parameterising these kinematics, and finds the two lines that intersect them. Then, the square-root letters turn out to correspond to cross ratios formed by the four points on any of these six lines. A generalisation of this analysis to nine points could provide yet another means of comparison with our results and provide hints for the (ir)relevance of the missing rays.

Independent of the latter question, our results also raise a practical issue: Predictions for the alphabet of an amplitude have been essential input for actually computing it via the cluster bootstrap programme, which has been very successful at multiplicity six and seven. This is achieved by first constructing the finite-dimensional function spaces expected to contain each loop correction to the amplitude, which arise as solutions to linear systems whose size and sparsity depends on the number of letters and their form as functions of the independent variables parameterising the kinematics, respectively. Based on an alphabet of (at least) 5,427 letters, some of which are for example polynomials with over 50,000 terms in the web-parameterisation,  the nine-particle amplitude bootstrap would pose a serious challenge to current linear algebra technology. Aside from evolutionary progress on the latter, could the size of linear systems be reduced by restrictions on the specific subsets of letters appearing at each slot in the symbol, stemming from adjacency/extended Steinmann relations \cite{Drummond:2017ssj,Caron-Huot:2019bsq} or the $\bar Q$-equation \cite{Caron-Huot:2011dec}? The integration of the latter has been the main source of explicit two-loop amplitude computations at multiplicity $n\ge 8$, is it feasible to push it to higher loops? Alternatively, the Wilson loop OPE predicts amplitudes at any multiplicity as an expansion around the collinear limit \cite{Alday:2010ku,Basso:2013vsa,Basso:2013aha,Papathanasiou:2013uoa,Basso:2014koa,Papathanasiou:2014yva,Basso:2014nra,Belitsky:2014sla,Belitsky:2014lta,Basso:2014hfa,Basso:2015rta,Basso:2015uxa,Belitsky:2016vyq}, and has been successfully evaluated~\cite{Papathanasiou:2013uoa,Papathanasiou:2014yva} and resummed~\cite{Drummond:2015jea,Cordova:2016woh,Lam:2016rel,Bork:2019aud} in the six-particle case. Could we hope for similar progress also for more legs, once the final ingredient of this approach, known as the matrix part, is better understood? It would be very interesting to address these questions in the future.

\section*{Acknowledgements}
The authors would like to thank the organizers of the workshop on \textit{Cluster Algebras and the Geometry of Scattering Amplitudes} for their hospitality as well as its participants for the insightful talks and discussions. We are grateful to Marcus Spradlin and  Jian-Rong Li for helpful correspondence, Ilke Canakci and Aidan Herderschee for valuable discussions, and Nima Arkani-Hamed for comments on the manuscript. We thank Dmitry Chicherin for pointing out the existence of further rational relations in the eight-particle square-root letters to us. NH is grateful for the support of the graduate programme of the Studienstiftung des deutschen Volkes. The authors acknowledge support from the Deutsche Forschungsgemeinschaft (DFG) under Germany’s Excellence Strategy EXC 2121 \emph{Quantum Universe} 390833306.

\appendix

\section{Proofs for mutation sequences of type $\operatorname{A}^{(1)}_m$}
\label{sec:Am1Proofs}
In this appendix we present the calculations and proofs required for the solution to the infinite mutation sequences of type $\operatorname{A}^{(1)}_m$ that have been omitted in the main text. We first discuss the source direction, that is the sequence obtained by repeatedly mutating $a_{1;j}$. To avoid repetition, we often point to the relevant formulas in the main text. Finally, we turn to the sink direction, the repeated mutation of $a_{m+1;j}$.

\subsection{The source direction}
The key observation in the discussion of the infinite mutation sequences of type $\operatorname{A}^{(1)}_m$ is the existence of the invariants, eqs.~\eqref{eq:Am1K1j} and~\eqref{eq:Am1K2j}. Before proving their invariance, we first establish that they and $\gamma_j$ can be written in terms of the quantities of cluster $j$ only, hence also proving eqs.~\eqref{eq:Am1K1} and~\eqref{eq:Am1K2}. For this, it suffices to prove eq.~\eqref{eq:x1Rel}, that is to express $x_{1;j-m}$ in terms of the variables of cluster $j$. To do so, we consider the mutation sequence depicted in figure~\ref{fig:AM1Seq} in reverse. Since mutation is an involution, we can go from cluster $j+1$ to cluster $j$ by mutating the former at node $m+1$. The relevant mutation relations are given by
\begin{gather}
	x_{m+1;j} = x_{m;j+1}\left(1+x_{m+1;j+1}\right)\,,\quad x_{1;j} = \left(x_{m+1;j+1}\right)^{-1}\,,\label{eq:Am1XReverse}\\
	x_{i;j} = x_{i-1;j+1}\quad\text{for }i\notin\left\{1,2,m+1\right\}\,.
\end{gather}
As can be seen from these relations, mutating from the cluster $j$ to the cluster $j-m$ along the sequence automatically results in a parmeterisation of $x_{1;j-m}$ in terms of the variables of cluster $j$. From these, we can immediately conclude that
\begin{equation}
	x_{m;j-m+i} = x_{m-1;j-m+i+1} = \cdots = x_{i;j}\,,
\end{equation}
for $2 \leq i \leq m$. With these relations in place, we can express the $\X$-variable $x_{1;j-m}$ in terms of the variables of cluster $j$ as
\begin{align}
	x_{1;j-m}^{-1} &= x_{m+1;j-m+1} = x_{m;j-m+2}\left(1+x_{m+1;j-m+2}\right) \nonumber\\
	&= x_{2;j}\left(1 + x_{m;j-m+3}\left(1+x_{m+1;j-m+3}\right)\right) = \cdots \label{eq:Am1XShift}\\
	&= x_{2;j}\left(1+x_{3;j}\left(1+\cdots x_{m;j}\left(1+x_{m+1;j}\right)\right)\right)\,,\nonumber
\end{align}
with the equivalent relation for $y_{1;j-m}$ again obtained by replacing the $\X$-variables by coefficients and addition by cluster-tropical addition. Having established this relation, we now turn to the prove of invariance. For this, consider the following lemma.

\begin{lemma}
	The two quantities $K_{1,j}$ and $K_{2,j}$, defined as
	\begin{align}
		K_{1,j} &= \left(\gamma_0\beta_{0}^{-1}\gamma_{j}^{-1}\beta_{j}\right)\left[1+x_{1;j} + x_{1;j}\left(x_{1;j-m}\right)^{-1}\right] \,, \\
		K_{2,j} &= \left(\gamma_0\beta_{0}^{-1}\gamma_{j}^{-1}\beta_{j}\right)\left(\gamma_0\beta_{0}^{-1}\gamma_{j-m+1}^{-1}\beta_{j-m+1}\right)\left[x_{1;j}\left(x_{1;j-m}\right)^{-1}\right]\,,
	\end{align}
	are invariant along the infinite mutation sequence, whose mutation relations are given by eqs.~\eqref{eq:Am1ASequence}--\eqref{eq:Am1XSequence}.
\end{lemma}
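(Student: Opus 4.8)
The plan is to prove invariance of $K_{1,j}$ and $K_{2,j}$ by the standard trick for such conserved quantities: show that they take the \emph{same value} when written in terms of the data of cluster $j$ and in terms of the data of cluster $j+1$, and then invoke induction along the sequence. Since both quantities are, by the relation~\eqref{eq:Am1XShift} just established, expressible purely through the $\X$-variables, coefficients and the ratio $\beta_j$ of cluster $j$, it suffices to carry out the computation at the level of the $\X$-variables and the $\beta$'s; the coefficient pieces (entering through $\gamma_j$) follow by the same manipulation with ordinary addition replaced by cluster-tropical addition $\cltrad$, exactly as noted after eq.~\eqref{eq:Am1XShift}. So the heart of the matter is a purely rational identity.

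Concretely, first I would record the transformation of the building blocks under one step of the sequence. From eqs.~\eqref{eq:Am1ASequence}--\eqref{eq:Am1ARels} one gets $\beta_{j+1} = a_{m+1;j+1}/a_{1;j+1} = \big(a_{2;j}a_{m+1;j}/a_{1;j}\big)\cdot\frac{1+x_{1;j}}{1\cltrad y_{1;j}}\big/a_{2;j} = \beta_j\,\frac{1+x_{1;j}}{1\cltrad y_{1;j}}$, and similarly $\gamma_{j+1}=1\cltrad y_{1;j+1}\cltrad y_{1;j+1}(y_{1;j+1-m})^{-1}$ which, using the $Y$-mutation rules~\eqref{eq:Am1YSequence} and the $\cltrad$-analogue of~\eqref{eq:Am1XShift}, can be rewritten so that the combination $\gamma_0\beta_0^{-1}\gamma_{j+1}^{-1}\beta_{j+1}$ differs from $\gamma_0\beta_0^{-1}\gamma_j^{-1}\beta_j$ by an explicit ratio of $\X$- and $y$-data. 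Next I would substitute $x_{1;j+1},\ldots,x_{m+1;j+1}$ in the bracket $\big[1+x_{1;j+1}+x_{1;j+1}(x_{1;j+1-m})^{-1}\big]$ using eq.~\eqref{eq:Am1XSequence} together with the reverse relation~\eqref{eq:Am1XReverse} (which supplies $(x_{1;j+1-m})^{-1}$ in terms of cluster $j+1$, hence ultimately cluster $j$). After clearing the common denominator $(1+x_{1;j})$, the telescoping nested-bracket structure of~\eqref{eq:Am1XShift} should make the bracket at level $j+1$ collapse onto the bracket at level $j$ up to precisely the factor that compensates the change in $\gamma_0\beta_0^{-1}\gamma^{-1}\beta$; this yields $K_{1,j+1}=K_{1,j}$. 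The argument for $K_{2,j}$ is the same but keeps track of the extra factor $\gamma_0\beta_0^{-1}\gamma_{j-m+1}^{-1}\beta_{j-m+1}$, whose one-step shift is governed by the identical recursion; the only additional input is that $\beta_{j+1-m+1}/\beta_{j-m+1}$ and $\gamma_{j+1-m+1}/\gamma_{j-m+1}$ again follow from~\eqref{eq:Am1ASequence} and~\eqref{eq:Am1YSequence}, so no new phenomenon arises.

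Having shown $K_{i,j}=K_{i,j+1}$ for all $j\in\Z$, invariance along the whole bi-infinite sequence follows, and evaluating at $j=0$ (where $\gamma_0\beta_0^{-1}\gamma_0^{-1}\beta_0=1$, the prefactors were normalised for exactly this reason) together with the single-cluster expression~\eqref{eq:Am1XShift} for $(x_{1;-m})^{-1}$ in terms of $x_{1;0},\ldots,x_{m+1;0}$ gives the closed forms~\eqref{eq:Am1K1} and~\eqref{eq:Am1K2}. I expect the main obstacle to be purely bookkeeping: correctly propagating the nested tower of brackets in~\eqref{eq:Am1XShift} under the two "half" mutation rules~\eqref{eq:Am1XSequence} (forward) and~\eqref{eq:Am1XReverse} (backward), and checking that the $\gamma$-prefactors assemble exactly so as to cancel, rather than any conceptual difficulty. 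Since all steps are rational identities in finitely many variables, each can in principle be verified symbolically; the $m=1$ specialisation (eqs.~\eqref{eq:ASequence}--\eqref{eq:A11_invariants}) provides a useful consistency check on the general manipulation.
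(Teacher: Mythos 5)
Your proposal follows essentially the same route as the paper's proof: establish the one-step relations $\beta_{j+1} = \frac{1+x_{1;j}}{1\cltrad y_{1;j}}\beta_j$ and $\gamma_{j+1} = \left(1\cltrad y_{1;j}\right)^{-1}\gamma_j$, show via the forward mutation rules together with the shift identity $x_{2;j}=x_{m;j-m+2}$ that the brackets $1+x_{1;j}+x_{1;j}\left(x_{1;j-m}\right)^{-1}$ and $x_{1;j}\left(x_{1;j-m}\right)^{-1}$ pick up exactly the compensating factors $\left(1+x_{1;j}\right)^{-1}$ and $\left(1+x_{1;j}\right)^{-1}\left(1+x_{1;j-m+1}\right)^{-1}$, and conclude by induction, with the coefficient/$\gamma_j$ part handled by the cluster-tropical analogue. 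This is precisely the paper's argument, so your plan is correct and essentially identical, up to writing out explicitly the asserted ``collapse'' of the nested brackets (which does go through as you describe).
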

\begin{proof}
	First of all, from eqs.~\eqref{eq:Am1ASequence} it follows that the ratio $\beta_j$ changes as follows
	\begin{equation}
		\beta_{j+1} = \frac{a_{m+1;j+1}}{a_{1;j+1}} = \frac{a_{m+1;j}}{a_{1;j}}\frac{1+x_{1;j}}{1\cltrad y_{1;j}} = \frac{1+x_{1;j}}{1\cltrad y_{1;j}} \beta_j\,,
	\end{equation}
	whereas we used that $a_{1;j+1} = a_{2;j}$. On the other hand, it follows from eqs.~\eqref{eq:Am1XSequence} that
	\begin{align}
		x_{1;j+1}\left(x_{1;j-m+1}\right)^{-1} &= \frac{x_{1;j}x_{2;j}}{x_{1;j-m+1}\left(1+x_{1;j}\right)} = \frac{x_{1;j}x_{m;j-m+2}}{x_{1;j-m+1}\left(1+x_{1;j}\right)} \\
		&= \left(1+x_{1;j}\right)^{-1}\left(1+x_{1;j-m+1}\right)^{-1}\left[x_{1;j}\left(x_{1;j-m}\right)^{-1}\right]\,,
	\end{align}
	whereas we have used eqs.~\eqref{eq:Am1XRels} to write
	\begin{equation}
		\label{eq:x2jRel}
		x_{2;j} = x_{3;j-1} = \cdots = x_{m;j-m+2}\,.
	\end{equation}
	This also implies that
	\begin{equation*}
		1+x_{1;j+1} + x_{1;j+1}\left(x_{1;j-m+1}\right)^{-1} = \left(1+x_{1;j}\right)^{-1}\left[1+x_{1;j}+x_{1;j}\left(x_{1;j-m}\right)^{-1} \right]
	\end{equation*}
	As is a general property of cluster algebras, the corresponding relations for the coefficients can be obtained from those of the $\X$-variables by replacing them with the coefficients and addition by cluster-tropical addition. We thus also have that
	\begin{equation}
		\label{eq:Am1GammaMutation}
		\gamma_{j+1} = \left(1\cltrad y_{1;j}\right)^{-1} \gamma_j\,.
	\end{equation}
	This implies that $\gamma_{j+1}^{-1}\beta_{j+1} = (1+x_{1;j})\gamma_j^{-1}\beta_j$ such that the lemma follows. Note that this holds for all $j \in \mathbb{N}$.
\end{proof}

Having established the invariance of $K_1$ and $K_2$, we now turn to the limit of $\gamma_j$ and prove that it converges to $1$. In order to deal with the cluster-tropical addition, let us remind ourselves how the coefficients are related to the frozen variables. We consider the rank-$(m+1)$ cluster algebra of type $\operatorname{A}^{(1)}_m$ with $M$ frozen variables denoted by $z_i$ for $i=1,\dots,M$. In any cluster, the coefficients are given as a monomial in the frozen variables, see eq.~\eqref{eq:coeffs} and the mutation rule eq.~\eqref{eq:YMutationRule}, which demonstrates that this property holds in all clusters. We thus rewrite the coefficients as
\begin{equation}
	y_j = \prod_{i=1}^{M} z_i^{c^{i}_j}\,.
\end{equation}
Using that cluster-tropical addition, eq.~\eqref{eq:cltrad}, is defined on such monomials in the frozen variables, we rewrite the recursion relation of $y_{1;j}$, eqs.~\eqref{eq:Am1YSequence}, in terms of the new sequences $c^i_j$. For this, consider first that
\begin{equation}
	y_{1;j+1} = \frac{y_{1;j}y_{2;j}}{\left(1\cltrad y_{1;j}\right)} = \frac{y_{1;j}y_{1;j-m+1}}{y_{1;j-m}\left(1\cltrad y_{1;j}\right)\left(1\cltrad y_{1;j-m+1}\right)}\,,
\end{equation}
whereas we have used the equivalent of eq.~\eqref{eq:x2jRel} for the coefficients. This implies that the corresponding relation for the $c^i_j$ is given by
\begin{equation}
	c^i_{j+1} = c^i_j + c^i_{j-m+1} - c^i_{j-m} - \min\left(0,c^i_j\right) - \min\left(0,c^i_{j-m+1}\right)
\end{equation}
Using the notation $[x]_+ = \max(0,x) = -\min(0,-x)$ and $x = [x]_+ - [-x]_+$, this results in the recursion relation
\begin{equation}
	\label{eq:cRec}
	c^i_{j+1} + c^i_{j-m} = \left[c^i_j\right] + \left[c^i_{j-m+1}\right]_+\,.
\end{equation}
While the appearance of $[x]_+$ on the right hand side of this recurrence makes solving it analytically complicated, we can prove the following property of this sequence.
\begin{lemma}
	\label{lem:CProps}
	Fix $m\in\mathbb{N}_{>0}$ and consider the sequence $c_n$ for $n\geq 1$ with initial values $c_1,\dots,c_{m+1}\in\Z$ and recurrence relation
	\begin{equation}
		\label{eq:cProofRec}
		c_{n+1} + c_{n-m} = \left[c_n\right] + \left[c_{n-m+1}\right]_+\,.
	\end{equation}
	There exists a $N\in\mathbb{N}$ such that for all $n\geq N$
	\begin{equation}
		c_n \geq c_{n-m} \geq 0\,.
	\end{equation}
\end{lemma}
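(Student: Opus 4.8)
The plan is to reduce the whole statement to the monotonicity of a single auxiliary integer sequence, the \emph{subsequence increments}
\[
\delta_n := c_n - c_{n-m}\,,\qquad n\geq m+1\,.
\]
First I would rewrite the recurrence in a convenient form. Setting $q_n := [-c_n]_+ = \max(0,-c_n)\geq 0$, so that $[c_n]_+ = c_n + q_n$, the recurrence~\eqref{eq:cProofRec} becomes $c_{n+1} + c_{n-m} = c_n + c_{n-m+1} + q_n + q_{n-m+1}$, which, upon noting that $c_{n+1}-c_{n-m+1}=\delta_{n+1}$ and $c_n-c_{n-m}=\delta_n$, is equivalent to
\[
\delta_{n+1} = \delta_n + q_n + q_{n-m+1}\,,\qquad n\geq m+1\,.
\]
Since $q_n\geq 0$, this immediately shows that $(\delta_n)_{n\geq m+1}$ is a \emph{nondecreasing} sequence of integers, and telescoping gives $\delta_{n+1} - \delta_{m+1} = \sum_{k=m+1}^{n}(q_k + q_{k-m+1})$.

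The argument then splits according to whether $(\delta_n)$ is bounded above. If it is, then being a nondecreasing integer sequence it is eventually equal to some constant $\delta\in\mathbb{Z}$; hence the telescoped sum converges, forcing $q_k = 0$ for all $k$ beyond some $N_0$, i.e.\ $c_k\geq 0$ for $k\geq N_0$. One then checks that $\delta\geq 0$: if $\delta<0$, iterating $c_n = c_{n-m} + \delta$ along a fixed residue class modulo $m$ would drive $c_n\to-\infty$, contradicting $c_k\geq 0$ for large $k$. Finally, taking $n$ large enough that both $\delta_n=\delta$ and $n-m\geq N_0$, we obtain $c_n = c_{n-m} + \delta \geq c_{n-m}\geq 0$, which is exactly the assertion of the lemma with $N$ chosen as this threshold.

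It remains to exclude the case $\sup_n\delta_n = \infty$, and this is the only step requiring a little care. If $\delta_n\to\infty$, then from some index on all $\delta_n\geq 1$, so each of the $m$ subsequences $(c_{r+km})_k$ is eventually strictly increasing and therefore tends to $+\infty$; consequently $c_n\to+\infty$ and in particular $q_n = 0$ for all large $n$. But then the telescoped identity shows that $\delta_n$ is eventually constant, contradicting $\delta_n\to\infty$. Hence only the bounded case occurs, which completes the proof.

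I expect the main obstacle to be purely bookkeeping: keeping track of index ranges so that every instance of the recurrence invoked is legitimate (e.g.\ $n\geq m+1$, and $n-m+1\geq N_0$ when arguing that $q$ vanishes), and treating the degenerate small-$m$ case $m=1$ (where $q_n$ and $q_{n-m+1}$ coincide) consistently. The conceptual content, by contrast, is entirely contained in the single observation that $\delta_n = c_n - c_{n-m}$ is monotone nondecreasing; everything else is elementary consequence.
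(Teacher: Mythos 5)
Your proof is correct and follows essentially the same route as the paper: the same auxiliary sequence $\delta_n=c_n-c_{n-m}$, the same monotonicity identity $\delta_{n+1}=\delta_n+[-c_n]_+ +[-c_{n-m+1}]_+$, a case split on boundedness of $\delta_n$, and sign control by iterating along residue classes modulo $m$. The only differences are cosmetic (telescoping in place of the paper's ``extended monotonicity'' bookkeeping), so nothing further is needed.
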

\begin{proof}
	To prove the lemma, we introduce an auxiliary sequence $\Delta_n$ defined by
	\begin{equation}
		\Delta_n = c_n - c_{n-m}\,.
	\end{equation}
	By continuing the recurrence for $c_n$ to $n\leq0$, we can use this definition for all $n\geq 1$. We now establish some key properties of these sequences.
	
	\textsc{Positivity/Negativity.} Assume there exists a $N\in\mathbb{N}$ such that $\Delta_n \geq 1$ for all $n\geq N$. By construction, we can write $c_n = \Delta_n + c_{n-m}$ for any $n$ and thus get for any $j \geq 0$ and $i=0,\dots,m-1$ that
	\begin{equation}
		c_{N+j\cdot m + i} = \sum_{l=1}^{j} \Delta_{N+l\cdot m+i} + c_{N+i} \geq j + c_{N+i}\,.
	\end{equation}
	Note that we have included the shift by $i$ because $N+j\cdot m + i = n$ for any $n\geq N$ and appropriate choice of $i$ and $j$. Hence, for $j\geq \max(0,-c_{N+i})$ we conclude that $c_{N+j\cdot m+i} \geq 0$. To summarize, this implies that if there is a $N\in\mathbb{N}$ such that $\Delta_n\geq 1$ for all $n\geq N$, then
	\begin{equation}
		c_n\geq 0\,,\quad\forall n\geq \max_{i=0,\dots,m-1}\left\{N+\max\left(0,-c_{N+i}\right)\cdot m + i\right\}\,.
	\end{equation}
	If we instead assume that there exists a $N\in\mathbb{N}$ such that $\Delta_n \leq -1$ for all $n\leq N$, we get by the same reasoning as before that
	\begin{equation}
		c_n < 0\,,\quad\forall n\geq \max_{i=0,\dots,m-1}\left\{N+\max\left(0,1+c_{N+i}\right)\cdot m + i\right\}\,.
	\end{equation}
	
	\textsc{Monotonicity.} From the recursion relation of $c_n$, eq.~\eqref{eq:cProofRec}, we obtain a corresponding relation for the sequence $\Delta_n$, which is given by
	\begin{equation}
		\Delta_{n+1} = \Delta_n + \left[-c_n\right]_+ + \left[-c_{n-m+1}\right]\,,
	\end{equation}
	whereas we used $x = [x]_+ - [-x]_+$ to arrive at this result. Since $[x]_+ \geq 0$, this relation implies that
	\begin{equation}
		\label{eq:cProofMon}
		\Delta_{n+1} \geq \Delta_n\,,
	\end{equation}
	that is, $\Delta_n$ is a monotonically increasing sequence. Further to that, the relation for $\Delta_{n+1}$ also gives us the following \emph{extended monotonicity property}
	\begin{equation}
		\label{eq:cProofDEqu}
		\Delta_{n+1} = \Delta_n \iff c_n \geq 0 \,\wedge\, c_{n-m+1}\geq 0\,,
	\end{equation}
	which follows because $[-x]_+$ is positive and zero if and only if $x$ is positive.
	
	\textsc{Boundedness.} We now prove that $\Delta_n$ is bounded. For this, assume $\Delta_n$ to not be bounded. Since $\Delta_n$ is a monotonically increasing sequence, this implies that there exists some $N\in\mathbb{N}$ such that $\Delta_n\geq 1$ for all $n\geq N$. Hence, due to the positivity property proven above, this implies that there also exists some $N'\in\mathbb{N}$ such that $c_n\geq 0$ for all $n\geq N'$. Thus, by the extended monoticity property, eq.~\eqref{eq:cProofDEqu}, this also implies that $\Delta_{n+1}=\Delta_n$ for all $n\geq N'$, which is a contradiction to the assumption of $\Delta_n$ to not be bounded.
	
	\textsc{Convergence.} Since $\Delta_n$ is monotonically increasing and bounded, it converges to some constant $K$. Because $c_1,\dots,c_{m+1}\in\Z$ we also have $c_n\in\Z$ and thus $\Delta_n\in\Z$ for all $n$, such that $K\in\Z$ and, together with monotonicity, $\Delta_n = \Delta_N \equiv K$ for all $n\geq N$ and some $N\in\mathbb{N}$. This implies that $\Delta_N \geq 0$. To see why, assume the opposite. Since $\Delta_n\in\Z$, this means we assume $\Delta_n\leq-1$. By the negativity property, this would imply that $c_n < 0$ for all $n\geq N'$ and some $N'\in\mathbb{N}$. However, by the extended monotonicity property, we also have $c_n\geq 0$ for all $n\geq N$, which is a contradiction.
	
	\textsc{Summary.} Taking all this together, we see from the convergence property that there exists a $N\in\mathbb{N}$ such that $\Delta_n = \Delta_N \geq 0$ and thus $c_n\geq 0$ for all $n\geq N$ by the extended monotonicity property. Furthermore, since $c_n = \Delta_n + c_{n-m}$, we also see that $c_n \geq c_{n-m} \geq 0$ for all $n\geq N+m$.
\end{proof}

Consider now the consequence of this lemma on the sequence $\gamma_j$. Rewriting it in terms of the $c^i_j$, we get
\begin{equation}
	\gamma_j =  1\cltrad y_{1;j} \cltrad y_{1;j}\left(y_{1;j-m}\right)^{-1} = \prod_{i=1}^{M} z_i^{\min\left(0,c^i_j + \min\left(0,-c^i_{j-m}\right)\right)}\,.
\end{equation}
From the previous lemma, we know that for some $N\in\mathbb{N}$ we have $c_n \geq c_{n-m} \geq 0$ for all $n\geq N$ and thus for $j \geq N$
\begin{equation}
	\min\left(0,c^i_j + \min\left(0,-c^i_{j-m}\right)\right) = \min\left(0,c^i_j-c^i_{j-m}\right) = 0\,,
\end{equation}
proving that $\gamma_j = 1$ for $j\geq N$. 
 
With this property proven, we have established all parts and, together with the arguments in the main text, obtain the most general solution to the recurrence~\eqref{eq:Am1AlphaRec} as
\begin{equation}
	\label{eq:Am1SolutionSource}
	\alpha_j = \left[c_0^+ + c_1^+\eta_m^j + \cdots + c_{m-1}^+\eta_m^{(m-1)j}\right]\left(\beta_+\right)^{j/m} + \left[c_0^- + c_1^-\eta_m^j + \cdots + c_{m-1}^-\eta_m^{(m-1)j}\right]\left(\beta_-\right)^{j/m}\,,
\end{equation}
which we have repeated here for the further discussion of the associated non-rational letters. We denote the overall coefficients of $\beta_\pm$ in this equation as
\begin{equation}
	C_\pm(j) = c_0^\pm + c_1^\pm\eta_m^j + \cdots + c_{m-1}^\pm\eta_m^{(m-1)j}\,.
\end{equation}
As discussed before, due to the peridocity of the m-th root of unity, these satisfy $C_\pm(j+m) = C_\pm(j)$. Since in the limit $j\to\infty$, the term of $\beta_+$ dominates that of $\beta_-$, we associate the $m$ quantities
\begin{equation}
	\label{eq:Am1PhiDef}
	\phi_i = \frac{C_+(i)}{C_-(i)}
\end{equation}
for $0 \leq i \leq m-1$ to this sequence. These coefficients can be obtained in terms of the variables of the initial cluster by the initial conditions of the sequence $\alpha_j$. They are, again for $0 \leq i \leq m-1$, given by
\begin{equation}
	\label{eq:Am1InitialSource}
	\alpha_i = a_{i+1;0}\,,\quad \alpha_{m+i} = a_{i+1;0}\cdot \gamma_0^{-1}\beta_0F_i\,,
\end{equation}
whereas the $F_i$ are rational functions in the initial $\X$-variables. In order to express them in a convenient way, we define analogs for $K_1$ and $K_2$ by
\begin{equation}
	K_i = x_{1;0}\cdots x_{i;0}\left(1+x_{i+1;0}\left(1+\cdots x_{m;0}\left(1+x_{m+1;0}\right)\right)\right)\,.
\end{equation}
In terms of these expressions, the functions $F_i$ are given by
\begin{equation}
	F_0 = 1\,,\quad F_i = K_1 - K_{i+1}\,.
\end{equation}
These initial conditions together with the general solution, eq.~\eqref{eq:Am1SolutionSource}, form a system of two linear equations for the coefficients $C_\pm(i)$ for any $0\leq i\leq m-1$. It is solved by
\begin{equation}
	C_\pm(i) = a_{i+1;0}\left(\beta_\pm\right)^{-i/m}\frac{\pm 2F_i \mp K_1 + \sqrt{K_1^2-4K_2}}{2\sqrt{K_1^2-4K_2}}\,.
\end{equation}
Using the definition~\eqref{eq:Am1PhiDef} this proves eq.~\eqref{eq:Am1Phi} for the non-rational expressions associated to this sequence.

It remains to prove the initial conditions, eqs.~\eqref{eq:Am1InitialSource}. While the first of these can be seen by noting that $\alpha_i = a_{1;i}$ and using eq.~\eqref{eq:Am1ARels}, to prove the second condition -- and to determine the $F_i$ -- we first observe that $\alpha_{m+i} = \gamma_i^{-1}a_{1;m+i}$ and hence again by eq.~\eqref{eq:Am1ARels} that $\alpha_{m+i} = \gamma_i^{-1}a_{m+1;i}$. For $i=0$, this immediately implies that $\alpha_m = \gamma_0^{-1}a_{m+1;0}$ and hence $F_0 = 1$. For $i\geq 1$, we can use the mutation rule~\eqref{eq:Am1ASequence} and eq.~\eqref{eq:Am1GammaMutation} to get
\begin{equation}
	\label{eq:initCondIntermediate}
	\alpha_{m+i} = a_{2;i-1}\gamma_{i-1}^{-1}\beta_{i-1}\left(1+x_{1;i-1}\right) = a_{i+1;0}\gamma_0^{-1}\beta_0 \cdot \prod_{j=1}^{i} \left(1+x_{1;i-j}\right)\,,
\end{equation}
whereas we have repeatedly applied the relation $\gamma_{j+1}^{-1}\beta_{j+1} = (1+x_{1;j})\gamma_j^{-1}\beta_j$ in the last step, proving the second initial condition with $F_i$ being the product over the $\X$-variables. To express $F_i$ in terms of the variables of the initial cluster, we observe that by the mutation rules, eqs.~\eqref{eq:Am1XSequence} and~\eqref{eq:Am1XRels}, we have
\begin{align}
	1+x_{1;i-1} &= \left(1+x_{1;i-2}\right)^{-1}\left(1+x_{1;i-2}\left(1+x_{2;i-2}\right)\right) \\
	&= \left(1+x_{1;i-2}\right)^{-1}\left(1+x_{1;i-2}\left(1+x_{i;0}\right)\right) \\
	&= \left(1+x_{1;i-2}\right)^{-1}\left(1+x_{1;i-3}\right)^{-1}\left(1+x_{1;i-3}\left(1+x_{i-1;0}\left(1+x_{i;0}\right)\right)\right) \\
	&= \dots \\
	&= \prod_{j=2}^{i} \left(1+x_{1;i-j}\right)^{-1} \cdot \left(1 + x_{1;0}\left(1+x_{2;0}\left(1+\cdots x_{i-1;0}\left(1+x_{i;0}\right)\right)\right)\right)\,.
\end{align}
Using this relation in eq.~\eqref{eq:initCondIntermediate} and noting that $\left(1 + x_{1;0}\left(1+\cdots x_{i-1;0}\left(1+x_{i;0}\right)\right)\right) = K_1 - K_{i+1}$ completes the proof.

\subsection{The sink direction}
In the previous section and the main text, we have analysed the infinite mutation sequence obtained by repeatedly mutating $a_{1;j}$ in the $\operatorname{A}^{(1)}_m$ cluster algebra. However, this corresponds to only one of the two possible directions. As discussed before, the mutation of the sink-variable $a_{m+1;j+1}$ is the inverse to the mutation of $a_{1;j}$ and thus takes us from cluster $j+1$ to $j$ along the sequence, ie. the opposite direction. We now discuss its solution.

First of all, note that using the mutation relations~\eqref{eq:Am1ARels} we can rephrase the linearised recursion relation~\eqref{eq:Am1LinearARec} in terms of the sink variable as
\begin{equation}
	\label{eq:Am1BackwardsRel}
	\gamma_{j+m}^{-1}\gamma_{j+2m}^{-1}a_{m+1;j+2m} - \gamma_0^{-1}\beta_0 K_1\cdot \gamma_{j+m}^{-1}a_{m+1;j+m} + \gamma_0^{-2}\beta_0^2K_2\cdot a_{m+1;j} = 0\,.
\end{equation}
Since all relations required to arrive at this equation are valid for all $j\in\mathbb{N}$, so is this recurrence. In theory, we could now go on and apply the same techniques as for the source direction to solve this. However, we now have to consider the limit $j\to-\infty$, since the sink direction takes cluster $j+1$ to $j$. Accordingly, we define the new variable $\tilde{\alpha}_j$ via
\begin{equation}
	\label{eq:Am1AlphaTilde}
	\tilde{\alpha}_j = \gamma_{-(j\,\text{mod}\,m)}\gamma_{-(j\,\text{mod}\,m) - m}\cdots\gamma_{-j+2m}\gamma_{-j+m}\cdot a_{m+1;-j}\,,
\end{equation}
such that for this variable, the limit $j\to \infty$ is the correct one to consider. In terms of this sequence, the recurrence~\eqref{eq:Am1AlphaRec} can be expressed as
\begin{equation}
	\label{eq:Am1AlphaTildeRec}
	\gamma_0^{-2}\beta_0^2K_2\cdot \tilde{\alpha}_{j+2m} - \gamma_0^{-1}\beta_0 K_1\cdot \tilde{\alpha}_{j+m} + \tilde{\alpha}_j = 0\,,
\end{equation}
with the initial values $\tilde{\alpha}_0,\dots,\tilde{\alpha}_{2m-1}$. 

Before we obtain the solution of this recurrence via its characteristic polynomial, let us first discuss $\gamma_{-j}$ and its limit as $j$ goes to infinity. This sequence is again governed by eq.~\eqref{eq:cRec}, and is given in terms of the variable $d_n = c_{-n}$, which describes $\gamma_{-j}$, by
\begin{equation}
	d_{n-1} + d_{n+m} = \left[d_{n}\right]_+ +  \left[d_{n+m-1}\right]_+\,.
\end{equation}
Since this holds for all $n$, we may shift the index by $n' = n + m -1$, which, due to the symmetry, reduces this equation to the original form of eq.~\eqref{eq:cRec}. Hence, we may apply lemma~\ref{lem:CProps} to this case as well, proving that $\gamma_{-j} = 1$ for $j\geq J$ for some $J\in\mathbb{N}$.

The characteristic polynomial of the recurrence~\eqref{eq:Am1AlphaTildeRec} is given by
\begin{equation}
	\tilde{P}_m(t) = t^{2m} - \gamma_0\beta_0^{-1}\frac{K_1}{K_2}\cdot t^m + \gamma_0^{2}\beta_0^{-2}K_2^{-1}\,.
\end{equation}
Its roots are given by $\tilde{\beta}_\pm^{1/m}\eta_m^{i}$ for $i=0,\dots,m-1$, with $\eta_m$ again being the $m$-th root of unity and whereas
\begin{equation}
	\tilde{\beta}_\pm = \gamma_0\frac{K_1\pm\sqrt{K_1^2-4K_2}}{2\beta_0K_2}\,.
\end{equation}
Note that we have $\tilde{\beta}_\pm = \beta_\mp^{-1}$. Similar to before, we may use the roots to write down the most general solution for $\tilde{\alpha}_j$ and thus get
\begin{equation}
	\label{eq:Am1SolutionSink}
	\tilde{\alpha}_j = \left[\tilde{c}_0^+ + \tilde{c}_1^+\eta_m^j + \cdots + \tilde{c}_{m-1}^+\eta_m^{(m-1)j}\right]\left(\beta_-\right)^{-\frac{j}{m}} + \left[\tilde{c}_0^- + \tilde{c}_1^-\eta_m^j + \cdots + \tilde{c}_{m-1}^-\eta_m^{(m-1)j}\right]\left(\beta_+\right)^{-\frac{j}{m}}\,,
\end{equation}
whereas we have expressed this in terms of $\beta_\pm$ but have labelled the coefficients $\tilde{c}^\pm_i$ in terms of $\tilde{\beta}_\pm$. The same analysis as before applies to the overall coefficients $\tilde{C}_\pm(j)$, which again can be obtained from the initial conditions. For this direction they are given by
\begin{equation}
	\label{eq:Am1InitialSink}
	\tilde{\alpha}_i = a_{m+1-i;0}\,,\quad \tilde{\alpha}_{m+i} = a_{m+1-i;0}\cdot\gamma_0\beta_0^{-1}\tilde{F}_i\,,
\end{equation}
whereas the $\tilde{F}_i$ are now given by
\begin{equation}
	\tilde{F}_i = K_{m+1-i}/K_{m+1}\,.
\end{equation}
With these initial conditions, we again obtain a system of two linear equations from the general solution, eq.~\eqref{eq:Am1SolutionSink}, which we can solve in terms of the $\tilde{C}_\pm(i)$, resulting in
\begin{equation}
	\tilde{C}_\pm(i) = a_{m+1-i} \left(\beta_\mp\right)^{i/m} \frac{\pm 2K_2F_i \mp K_1 + \sqrt{K_1^2-4K_2}}{2\sqrt{K_1^2-4K_2}}\,.
\end{equation}
Using the definition $\tilde{\phi} = \tilde{C}_+/\tilde{C}_-$ this proves eq.~\eqref{eq:Am1PhiTilde} for the non-rational expressions associated to this sequence.

Having obtained the general solution for the infinite mutation sequence of type $A_m^{(1)}$ in the sink direction, it remains to prove the initial conditions, eqs.~\eqref{eq:Am1InitialSink}. The first follows directly from $\tilde{\alpha}_i = a_{m+1;-i} = a_{m+1-i;0}$, as can be seen from eq.~\eqref{eq:Am1ARels}. For the other condition, we first observe that $\tilde{\alpha}_{m+i} = \gamma_{-i}a_{m+1;-i-m}$ and hence by eq.~\eqref{eq:Am1ARels} that $\tilde{\alpha}_{m+i} = \gamma_{-i}\beta_{-i}^{-1}\cdot a_{m+1;-i} =\gamma_{-i}\beta_{-i}^{-1}\cdot a_{m+1-i;0}$. For $i=0$ we can immediately conclude that $\tilde{F}_i = 1$. For $i\geq 1$, by again using the relation $\gamma_{j+1}^{-1}\beta_{j+1} = (1+x_{1;j})\gamma_j^{-1}\beta_j$, we arrive at
\begin{align}
	\tilde{\alpha}_{m+i} &= a_{m+1-i;0}\gamma_0\beta_0^{-1} \left(1+x_{1;-i}\right)\cdots \left(1+x_{1;-1}\right) \\
	&= a_{m+1-i;0}\gamma_0\beta_0^{-1} \frac{\left(1+x_{m+1;1-i}\right)\cdots \left(1+x_{m+1;0}\right)}{x_{m+1;1-i}\cdots x_{m+1;0}}\,,
\end{align}
whereas we have used eq.~\eqref{eq:Am1XReverse} for the last step. This already proves eq.~\eqref{eq:Am1InitialSink}, with $\tilde{F}_i$ being the fraction of the $\X$-variables. To obtain an expression in terms of the variables of the initial cluster, we note that $x_{m+1;j-i} = x_{m+j-i;0}\left(1+x_{m+1;j-i+1}\right)$ such that we get
\begin{equation}
	\tilde{F}_i = \prod_{j=0}^{i-1}\left(x_{m+1-j;0}\right)^{-1} \cdot \left(1+x_{m+1;1-i}\right)\,.
\end{equation}
Similar to eq.~\eqref{eq:Am1XShift}, we obtain from the mutation relations, eq.~\eqref{eq:Am1XReverse}, that
\begin{equation}
	1+x_{m+1;1-i} = \prod_{j=i}^{m}\left(x_{m+1-j;0}\right)^{-1} \cdot K_{m+1-i}\,,
\end{equation}
such that we can conclude that
\begin{equation}
	\tilde{F}_i = \frac{K_{m+1-i}}{K_{m+1}}\,,
\end{equation}
completing our analysis of the infinite mutation sequences in cluster algebras of type $\operatorname{A}^{(1)}_m$.

\section{Full non-rational alphabet of eight-particle scattering}
\label{sec:EightParticleAlgebraicAlphabet}
In this section, for completeness, we present the entire non-rational alphabet of eight-particle scattering obtained from the perspective of scattering diagrams, see~\cite{Herderschee:2021dez} or section~\ref{sec:SDAlphabet}. We begin with the initial cluster of the cluster algebra of $\G{4,8}$, which is depicted in figure~\ref{fig:gr48Seed}. In there, we included our convention for the unfrozen variables $a^I_i$ of this cluster.

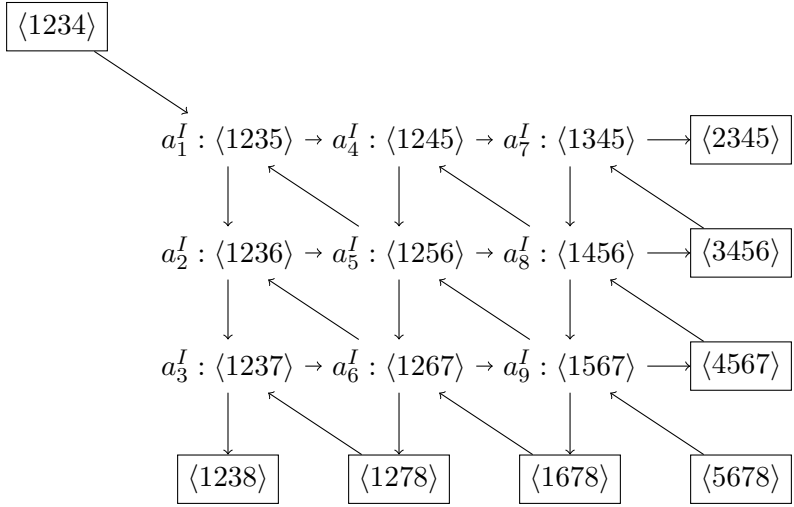
\begin{figure}[ht]
	\centering
	\begin{tikzpicture}[scale=1.5]
		\node at (0,0) (a1) {$a^I_1 : \pl{1235}$};
		\node at (1.5,0) (a2) {$a^I_4 : \pl{1245}$};
		\node at (3,0) (a3) {$a^I_7 : \pl{1345}$};
		\node at (4.5,0) [draw, rectangle] (a4) {$\pl{2345}$};
		
		\node at (0,-1) (b1) {$a^I_2 : \pl{1236}$};
		\node at (1.5,-1) (b2) {$a^I_5 : \pl{1256}$};
		\node at (3,-1) (b3) {$a^I_8 : \pl{1456}$};
		\node at (4.5,-1) [draw, rectangle] (b4) {$\pl{3456}$};
		
		\node at (0,-2) (c1) {$a^I_3 : \pl{1237}$};
		\node at (1.5,-2) (c2) {$a^I_6 : \pl{1267}$};
		\node at (3,-2) (c3) {$a^I_9 : \pl{1567}$};
		\node at (4.5,-2) [draw, rectangle] (c4) {$\pl{4567}$};
		
		\node at (0,-3) [draw, rectangle] (d1) {$\pl{1238}$};
		\node at (1.5,-3) [draw, rectangle] (d2) {$\pl{1278}$};
		\node at (3,-3) [draw, rectangle] (d3) {$\pl{1678}$};
		\node at (4.5,-3) [draw, rectangle] (d4) {$\pl{5678}$};
		
		\node at (-1.5,1) [draw, rectangle] (f) {$\pl{1234}$};
		
		\draw[->] (f) -- (a1);
		\draw[->] (a1) -- (b1);
		\draw[->] (b1) -- (c1);
		\draw[->] (c1) -- (d1);
		
		\draw[->] (a2) -- (b2);
		\draw[->] (b2) -- (c2);
		\draw[->] (c2) -- (d2);
		
		\draw[->] (a3) -- (b3);
		\draw[->] (b3) -- (c3);
		\draw[->] (c3) -- (d3);
		
		\draw[->] (a1) -- (a2);
		\draw[->] (a2) -- (a3);
		\draw[->] (a3) -- (a4);
		
		\draw[->] (b1) -- (b2);
		\draw[->] (b2) -- (b3);
		\draw[->] (b3) -- (b4);
		
		\draw[->] (c1) -- (c2);
		\draw[->] (c2) -- (c3);
		\draw[->] (c3) -- (c4);
		
		\draw[->] (b2) -- (a1);
		\draw[->] (c3) -- (b2);
		\draw[->] (d4) -- (c3);
		\draw[->] (b3) -- (a2);
		\draw[->] (b4) -- (a3);
		\draw[->] (c2) -- (b1);
		\draw[->] (c4) -- (b3);
		\draw[->] (d2) -- (c1);
		\draw[->] (d3) -- (c2);
		
	\end{tikzpicture}
	\caption{Initial seed of the cluster algebra of $\G{4,8}$.}
	\label{fig:gr48Seed}
\end{figure}

\pagebreak

We can use eqs.~\eqref{eq:coeffs} and~\eqref{eq:XVars} to immediately read off the $\X$-variables associated to each $\A$-variable. They are given by
\begin{align}
	x^I_1 &= \frac{\pl{1234}\pl{1256}}{\pl{1245}\pl{1236}}\,, & x^I_4 &= \frac{\pl{1235}\pl{1456}}{\pl{1345}\pl{1256}}\,, & x^I_7 &= \frac{\pl{1245}\pl{3456}}{\pl{2345}\pl{1456}}\,, \\
	x^I_2 &= \frac{\pl{1235}\pl{1267}}{\pl{1256}\pl{1237}}\,, & x^I_5 &= \frac{\pl{1236}\pl{1245}\pl{1567}}{\pl{1235}\pl{1456}\pl{1267}}\,, & x^I_8 &= \frac{\pl{1256}\pl{1345}\pl{4567}}{\pl{1245}\pl{3456}\pl{1567}}\,, \\
	x^I_3 &= \frac{\pl{1236}\pl{1278}}{\pl{1267}\pl{1238}}\,, & x^I_6 &= \frac{\pl{1237}\pl{1256}\pl{1678}}{\pl{1236}\pl{1567}\pl{1278}}\,, & x^I_9 &= \frac{\pl{1267}\pl{1456}\pl{5678}}{\pl{1256}\pl{4567}\pl{1678}}\,.
\end{align}

Mutating along the mutation sequence $\{1,2,4,1,6,8\}$, that is sequentially mutating the nodes with the corresponding index, we arrive at the $\operatorname{A}^{(1)}_1$ origin quiver depicted in figure~\ref{fig:A11OriginQuiver}. Note that since we only require the $\X$-coordinates for the computation of the non-rational alphabet, in the quiver we only show those, labelled by $x_i$ in the origin quiver, and omit the (frozen and unfrozen) $\A$-variables.
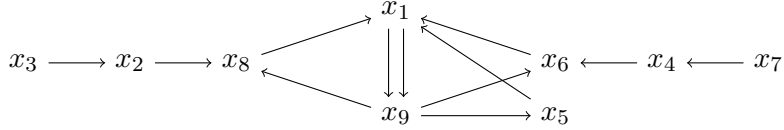
\begin{figure}[h]
	\centering
	\begin{tikzpicture}[scale=1.4]
		\node at (0,0.5) (6) {$x_3$};
		\node at (1,0.5) (9) {$x_2$};
		\node at (2,0.5) (2) {$x_8$};
		\node at (3.5,1) (5) {$x_1$};
		\node at (5,0.5) (1) {$x_6$};
		\node at (6,0.5) (7) {$x_4$};
		\node at (7,0.5) (8) {$x_7$};
		
		\node at (3.5,0) (3) {$x_9$};
		\node at (5,0) (4) {$x_5$};
		
		\draw[->] (6) -- (9);
		\draw[->] (9) -- (2);
		\draw[->] (2) -- (5);
		\draw[->] (1) -- (5);
		\draw[->] (7) -- (1);
		\draw[->] (8) -- (7);
		
		\draw[->] (3) -- (2);
		\draw[->] (3) -- (1);
		\draw[->] (4) -- (5);
		\draw[->] (3) -- (4);
		
		\draw[->] ([xshift=2pt]5.south) -- ([xshift=2pt]3.north);
		\draw[->] ([xshift=-2pt]5.south) -- ([xshift=-2pt]3.north);
	\end{tikzpicture}
	\caption{Principal part of the origin cluster in $\G{4,8}$ utilized to find the square-root letters.}
	\label{fig:A11OriginQuiver}
\end{figure}

Performing the sequence of mutations given above, we land in the origin quiver whose $\X$-variables $x_i$ are rational functions in the original variables $x^I_i$ of the initial cluster. These functions are given by

\begin{align}
	x_1 &= \frac{\left(1+ x^I_6 + x^I_1\left(1+x^I_4\right)\left(1+x^I_6\left(1+x^I_2\right)\right)\right)\left(1+ x^I_8 + x^I_1\left(1+x^I_2\right)\left(1+x^I_8\left(1+x^I_4\right)\right)\right)}{x^I_1x^I_2x^I_4}\,, \label{eq:originXVars1}\\
	x_9 &= \frac{\left(1+x^I_1\left(1+x^I_2\right)\left(1+x^I_4\right)\right)x^I_6x^I_8x^I_9}{\left(1+ x^I_6 + x^I_1\left(1+x^I_4\right)\left(1+x^I_6\left(1+x^I_2\right)\right)\right)\left(1+ x^I_8 + x^I_1\left(1+x^I_2\right)\left(1+x^I_8\left(1+x^I_4\right)\right)\right)}\,, \\
	x_5 &= \frac{x^I_1x^I_2x^I_4x^I_5}{1+x^I_1\left(1+x^I_2\right)\left(1+x^I_4\right)}\,, \\
	x_8 &= \frac{1+x^I_1\left(1+x^I_2\right)}{x^I_8\left(1+x^I_1\left(1+x^I_2\right)\left(1+x^I_4\right)\right)}\,,\\
	x_6 &= \frac{1+x^I_1\left(1+x^I_4\right)}{x^I_6\left(1+x^I_1\left(1+x^I_2\right)\left(1+x^I_4\right)\right)}\,,\\
	x_2 &= \frac{x^I_4x^I_8}{1+x^I_8+x^I_1\left(1+x^I_2\right)\left(1+x^I_8\left(1+x^I_4\right)\right)}\,, \\
	x_4 &= \frac{x^I_2x^I_6}{1+x^I_6+x^I_1\left(1+x^I_4\right)\left(1+x^I_6\left(1+x^I_2\right)\right)}\,, \\
	x_3 &= \frac{x^I_1x^I_2x^I_3}{1+x^I_1\left(1+x^I_2\right)}\,,\\
	x_7 &= \frac{x^I_1x^I_4x^I_7}{1+x^I_1\left(1+x^I_4\right)}\,.\label{eq:originXVars7}
\end{align}

As is discussed in section~\ref{sec:SDAlphabet} and~\cite{Herderschee:2021dez}, from this origin quiver we perform the limit of the infinite $\operatorname{A}^{(1)}_1$ mutation sequence. Working within the framework of scattering diagrams, we first construct the cone variables and take their limit, which is well-defined and finite, so that we land in an asymptotic chamber around the limit ray\footnote{This is one of the two tropical rays of $\pTPTC{4,8}$ that is not contained in the fan of the truncated cluster algebra, the other being $\left(0, 1, 0, 1, 0, -1, 0, -1, 0\right)$. Since the variables obtained from this limit ray can be obtained by a cyclic shift $\pl{ijkl}\to\pl{i+1\,j+1\,k+1\,l+1}$ we limit our analysis to the quantities around the first limit ray only.}
\begin{equation}
	\label{eq:limitRay}
	r_\infty = \left(1, -1, 0, -1, 0, 1, 0, 1, -1\right)\,.
\end{equation}
The limits of the cone variables along the sequence, ie. the cone variables of the asymptotic chamber, can be 
\begin{align}
	x_{\gamma_i}^0 &= x_i\,\quad\text{for}\quad i \in \{2,3,4,7\}\,, \label{eq:obviouslyRational} \\
	x_{\gamma_i}^0 &= \frac{x_i}{2}\left(1+x_1\left(1+x_9\right) + \sqrt{\Delta'}\right)\,\quad\text{for}\quad i \in \{5,6,8\}\,,\\
	x_{\gamma_1}^0 &= \frac{4x_1\Delta'}{\left(1+x_1-x_1x_9+\sqrt{\Delta'}\right)^2}\,,\quad 	x_{\gamma_9}^0 = \frac{x_9}{4}\left(1+\frac{1-x_1(1+x_9)}{\sqrt{\Delta'}}\right)^2\,, \label{eq:xG01and9}\\
	\Delta' &= \left(1+x_1(1+x_9)\right)^2 - 4x_1x_9\,.
\end{align}

The entire non-rational alphabet is obtained from the variables of \emph{all} asymptotic chambers around the limit ray. As is outlined in~\cite{Herderschee:2021dez}, a computer search yields a basis of $36$ multiplicatively independent polynomials. It consists of the $20$ polynomials given by

\begin{align}
	\tilde{f}_1&=x_{\gamma_1}^0\,, & \tilde{f}_2&=x_{\gamma_9}^0\,, & \tilde{f}_3&=1-x_{\gamma_1}^0x_{\gamma_9}^0\,, \label{eq:20LB1}\\
	\tilde{f}_4&=x_{\gamma_5}^0\,, & \tilde{f}_5&=1+x_{\gamma_5}^0\,, & \tilde{f}_6&=1+x_{\gamma_5}^0x_{\gamma_1}^0x_{\gamma_9}^0\,, \\
	\tilde{f}_7&=x_{\gamma_8}^0\,, & & \\
	\tilde{f}_8&=1+x_{\gamma_8}^0\,, & \tilde{f}_9&=1+x_{\gamma_2}^0\tilde{f}_8\,, & \tilde{f}_{10}&=1+x_{\gamma_3}^0\tilde{f}_9\,, \\ 
	\tilde{f}_{11}&=1+x_{\gamma_8}^0x_{\gamma_1}^0x_{\gamma_9}^0\,, & \tilde{f}_{12}&=1+x_{\gamma_2}^0\tilde{f}_{11}\,, & \tilde{f}_{13}&=1+x_{\gamma_3}^0\tilde{f}_{12}\,, \\
	\tilde{f}_{14}&=x_{\gamma_6}^0\,, & & \\
	\tilde{f}_{15}&=1+x_{\gamma_6}^0\,, & \tilde{f}_{16}&=1+x_{\gamma_4}^0\tilde{f}_{15}\,, & \tilde{f}_{17}&=1+x_{\gamma_7}^0\tilde{f}_{16}\,, \\ 
	\tilde{f}_{18}&=1+x_{\gamma_6}^0x_{\gamma_1}^0x_{\gamma_9}^0\,, & \tilde{f}_{19}&=1+x_{\gamma_4}^0\tilde{f}_{18}\,, & \tilde{f}_{20}&=1+x_{\gamma_7}^0\tilde{f}_{19}\,.\label{eq:20LB4} \\
	\intertext{as well as $16$ more polynomials given by}
	\tilde{f}_{21} &= x_{\gamma_2}^0\,, & \tilde{f}_{22} &= x_{\gamma_3}^0\,,	& \\
	\tilde{f}_{23} &= 1 + x_{\gamma_2}^0\,, & \tilde{f}_{24} &= 1+ x_{\gamma_3}^0\,,	& \\
	\tilde{f}_{25} &= 1 + x_{\gamma_3}^0\tilde{f}_{23}\,, & \tilde{f}_{26} &= 1 + x_{\gamma_2}^0\tilde{f}_{8}\tilde{f}_{11}\,, & \tilde{f}_{27} &= 1 + x_{\gamma_3}^0 \tilde{f}_{26}\,, \\
	\tilde{f}_{28} &= \mathrlap{1 + x_{\gamma_2}^0 \tilde{f}_{27} + x_{\gamma_3}^0\left(1 + x_{\gamma_2}^0\left(\tilde{f}_7 + \tilde{f}_{11}\right)\right)\,,} \\
	\tilde{f}_{29} &= x_{\gamma_4}^0\,, & \tilde{f}_{30} &= x_{\gamma_7}^0\,,	& \\
	\tilde{f}_{31} &= 1 + x_{\gamma_4}^0\,, & \tilde{f}_{32} &= 1+ x_{\gamma_7}^0\,,	& \\
	\tilde{f}_{33} &= 1 + x_{\gamma_7}^0\tilde{f}_{31}\,, & \tilde{f}_{34} &= 1 + x_{\gamma_4}^0\tilde{f}_{15}\tilde{f}_{18}\,, & \tilde{f}_{35} &= 1 + x_{\gamma_7}^0 \tilde{f}_{34}\,, \\
	\tilde{f}_{36} &= \mathrlap{1 + x_{\gamma_4}^0 \tilde{f}_{35} + x_{\gamma_7}^0\left(1 + x_{\gamma_4}^0\left(\tilde{f}_{14} + \tilde{f}_{18}\right)\right)\,.}
\end{align}

As can be seen from eq.~\eqref{eq:obviouslyRational} and eqs.~\eqref{eq:originXVars1}--\eqref{eq:originXVars7}, the variables $x_{\gamma_i}^0$ are rational for $i \in \{2,3,4,7\}$ and hence so are $10$ of the polynomials of the above basis. In fact, by parameterising the Plücker variables in terms of the web-parameterisation and evaluating the web-variables with prime values, it is easy to see that the polynomials $\tilde{f}_{21}$ to $\tilde{f}_{36}$ are rational, that is the square-roots cancel. Even more than that, these $16$ polynomials are actually contained in the $272$-letter rational alphabet of~\cite{Henke:2019hve,Drummond:2019cxm,Arkani-Hamed:2019rds} and can be expressed as
\begin{align}
	\tilde{f}_{21} &= \frac{R_{157}}{R_{163}}\,, & \tilde{f}_{22} &= \frac{R_9}{R_2}\,, & \tilde{f}_{23} &= \frac{R_2R_{143}}{R_{163}}\,, & \tilde{f}_{24} &= \frac{R_3}{R_2}\,, \\
	\tilde{f}_{25} &= \frac{R_{164}}{R_{163}}\,, & \tilde{f}_{26} &= \frac{R_2R_{194}}{R_9R_{157}}\,, & \tilde{f}_{27} &= \frac{R_{192}}{R_{157}}\,, & \tilde{f}_{28} &= \frac{R_2R_{146}}{R_{163}}\,, \\
	\tilde{f}_{29} &= \frac{R_{36}}{R_{42}}\,, & \tilde{f}_{30} &= \frac{R_{97}}{R_{10}}\,, & \tilde{f}_{31} &= \frac{R_{10}R_{34}}{R_{42}}\,, & \tilde{f}_{32} &= \frac{R_{43}}{R_{10}}\,, \\
	\tilde{f}_{33} &= \frac{R_{44}}{R_{42}}\,, & \tilde{f}_{34} &= \frac{R_{10}R_{93}}{R_{36}R_{97}}\,, & \tilde{f}_{35} &= \frac{R_{91}}{R_{36}}\,, & \tilde{f}_{36} &= \frac{R_{10}R_{89}}{R_{42}}\,,
\end{align}
whereas $R_i$ refers to the $i$-th rational letter in the alphabet provided in the attached file~\texttt{Gr48Alphabet.m}.

Since $16$ of these polynomials are contained in the $272$-letter rational alphabet, we are left with the $20$ letters given by $\tilde{f}_1$ to $\tilde{f}_{20}$. While these $20$ letters can be numerically checked to actually contain square-roots, we find $10$ multiplicative combinations that are contained in the rational alphabet. They are given by
\begin{align}
	\tilde{f}_1\tilde{f}_2\tilde{f}_4^2 &= \frac{R_{222}}{R_9R_{97}}\,,& \frac{\tilde{f}_7}{\tilde{f}_4} &= \frac{R_2R_{97}}{R_{157}}\,, & \frac{\tilde{f}_{14}}{\tilde{f}_4} &= \frac{R_9R_{10}}{R_{36}}\,,\\
	\tilde{f}_5 \tilde{f}_6 &= \frac{R_{4}R_{131}}{R_{9}R_{97}} \,, & \tilde{f}_8\tilde{f}_{11} &= \frac{R_{163}R_{197}}{R_{9}(R_{157})^2}\,, &	\tilde{f}_{15}\tilde{f}_{18} &= \frac{R_{42}R_{94}}{(R_{36})^2R_{97}}\,,\\
	\tilde{f}_{9}\tilde{f}_{12} &= \frac{(R_{2})^2R_{147}}{R_{9}R_{163}}\,, & \tilde{f}_{16}\tilde{f}_{19} &= \frac{(R_{10})^2R_{90}}{R_{42}R_{97}}\,,\\
	\tilde{f}_{10}\tilde{f}_{13} &= \frac{R_{196}}{R_{163}}\,, & \tilde{f}_{17}\tilde{f}_{20} &= \frac{R_{96}}{R_{42}}\,, 
\end{align}
whereas again $R_i$ corresponds to the $i$-th rational letter in the rational alphabet. Using these $10$ relations, we can further reduce the square-root letters to the basis of $10$ multiplicatively independent letters given by
\begin{align}
	f_1 &= \left(x_{\gamma_1}^0\right)^{-1}\left(1-x_{\gamma_1}^0x_{\gamma_9}^0\right)^2\,, &  f_2 &=x_{\gamma_9}^0\left(1-x_{\gamma_1}^0x_{\gamma_9}^0\right)^2 \,,& f_3 &= \frac{1+x_{\gamma_5}^0x_{\gamma_1}^0x_{\gamma_9}^0}{1+x_{\gamma_5}^0}\,, \nonumber\\
	f_4 &= \frac{1+x_{\gamma_8}^0x_{\gamma_1}^0x_{\gamma_9}^0}{1+x_{\gamma_8}^0}\,, & f_5 &= \frac{1+x_{\gamma_2}^0\left(1+x_{\gamma_8}^0x_{\gamma_1}^0x_{\gamma_9}^0\right)}{1+x_{\gamma_2}^0\left(1+x_{\gamma_8}^0\right)}\,, \nonumber\\
	f_6 &= \mathrlap{\frac{1+x_{\gamma_3}\left(1+x_{\gamma_2}^0\left(1+x_{\gamma_8}^0x_{\gamma_1}^0x_{\gamma_9}^0\right)\right)}{1+x_{\gamma_3}\left(1+x_{\gamma_2}^0\left(1+x_{\gamma_8}^0\right)\right)}\,,} \label{eq:full10SDbasis} \\
 	f_7 &= \frac{1+x_{\gamma_6}^0x_{\gamma_1}^0x_{\gamma_9}^0}{1+x_{\gamma_6}^0}\,, & f_8 &= \frac{1+x_{\gamma_4}^0\left(1+x_{\gamma_6}^0x_{\gamma_1}^0x_{\gamma_9}^0\right)}{1+x_{\gamma_4}^0\left(1+x_{\gamma_6}^0\right)}\,, \nonumber\\
	f_{9} &= \mathrlap{\frac{1+x_{\gamma_7}\left(1+x_{\gamma_4}^0\left(1+x_{\gamma_6}^0x_{\gamma_1}^0x_{\gamma_9}^0\right)\right)}{1+x_{\gamma_7}\left(1+x_{\gamma_4}^0\left(1+x_{\gamma_6}^0\right)\right)}\,,} \nonumber \\
	f_{10} &= x_{\gamma_5}^0\left(1-x_{\gamma_1}^0x_{\gamma_9}^0\right)\,. \nonumber
\end{align}

It can be easily demonstrated that the set of $112$ $\operatorname{A}^{(1)}_1$-letters obtained in section~\ref{sec:A11Application} for the limit ray $r_\infty$, eq.~\eqref{eq:limitRay}, is equivalently described by the basis of the 9 multiplicatively independent square-root letters given by $f_1$ to $f_{9}$ of eqs.~\eqref{eq:full10SDbasis}.

In summary, we see that the non-rational alphabet obtained from the scattering diagram adds one further letter $f_{10} = x_{\gamma_5}^0\left(1-x_{\gamma_1}^0x_{\gamma_9}^0\right)$ per limit ray compared to the previously known 9 letters, see sec.~\ref{sec:A11Application} or \cite{Drummond:2019cxm,Zhang:2019vnm}. Using eqs.~\eqref{eq:obviouslyRational} and~\eqref{eq:xG01and9}, we can simplify this letter and find that
\begin{equation}
	f_{10} = x_5 \sqrt{\Delta'}\,.
\end{equation}
With $x_5$ being one of the $\X$-variables of the origin quiver, this already demonstrates that $f_1$ corresponds to the square-root up to a monomial in the rational alphabet. In fact, this square-root is proportional to the square-root $\Delta_{1,3,5,7}$ of one of the two eight-particle four-mass boxes. In terms of the four-mass box, we can write the additional letter as
\begin{equation}
	f_{10} = \frac{\pl{1256}\pl{3478}}{\pl{1278}\pl{3456}}\sqrt{\Delta_{1,3,5,7}}\,,
\end{equation}
whereas we have
\begin{align}
	\Delta_{1,3,5,7} =& \left(1 - \frac{\pl{1234}\pl{5678}}{\pl{1256}\pl{3478}} - \frac{\pl{1278}\pl{3456}}{\pl{1256}\pl{3478}}\right)^2 - 4 \frac{\pl{1278}\pl{1234}\pl{3456}\pl{5678}}{\left(\pl{1256}\pl{3478}\right)^2}\,.
\end{align}
The square-root $\Delta_{2,4,6,8}$ of the other eight-particle four-mass box appears in a similar way in the non-rational alphabet of the other limit ray, which is obtained by the cyclic shift $i \to i + 1$ on the indices of the Plücker variables.

\section{Web-parameterisation of $\G{4,n}$}
\label{sec:WebParam}
As we have reviewed in section~\ref{sec:BackgroundGrConf}, the totally positive configuration space $\PConf{k,n}$ can be constructed as the space of all Plücker variables, restricted to non-negative values, up to some scalings. The space has dimension $d= (k-1)(n-k-1)$ and can be parameterised in terms of $d$ independent parameters by the web-parameterisation~\cite{Speyer2005}. In this appendix we briefly discuss how this parameterisation is constructed and present it explicitly for $\PConf{4,8}$ and $\PConf{4,9}$.

Instead of parameterising the individual Plücker variables, it is more convenient to instead obtain a parameterisation of the momentum twistors instead, see e.g.~\cite{Drummond:2019cxm}. The Plücker variables can in turn be obtained as the minors of the $k\times n$ matrix $Z$ whose columns are the momentum twistors. When parameterised, this matrix is of the form
\begin{equation}
	Z = \left(\mathbf{1}_k | M\right)\,
\end{equation}
with the $k\times k$ identity matrix $\mathbf{1}_k$. The entries $m_{ij}$ of the $k\times (n-k)$ matrix $M$ are given by
\begin{equation}
	m_{ij} = \left(-1\right)^i\sum_{\underline{\lambda}\in Y_{ij}}\prod_{m=1}^{k-i}\prod_{l=1}^{\lambda_m}x_{ml}\,,
\end{equation}
whereas $Y_{ij}$ denotes the multi-dimensional range $0 \leq \lambda_{k-i} \leq \dots \leq \lambda_1 \leq j-1$ and the $x_{ml}$ are an alternative labeling of the web-parameters.

The explicit web-parameterisation of $\PConf{4,8}$ and $\PConf{4,9}$ can be found in the ancillary file \texttt{WebParameterisation.m} attached to the \texttt{arXiv} submission of this article. For $n=8$, the columns $\underline{m}_1,\underline{m}_2,\underline{m}_3$ of the $4\times 4$ matrix $M$ are given by
\begin{equation}
	\underline{m}_1 = \begin{pmatrix}
		1 \\ -1 \\ 1 \\ -1
	\end{pmatrix}\,,\quad 	
	\underline{m}_2 =  
	\begin{pmatrix}
	- 1 - x_1\left(1+x_2\left(1+x_3\right)\right) \\
	1 + x_1\left(1+x_2\right) \\
	-1 - x_1 \\
	1
	\end{pmatrix}\,,
\end{equation}
\begin{equation}
	\underline{m}_3 =  
	\begin{pmatrix}
		-1-x_1\left(1+x_4+x_2\left(1+x_4\left(1+x_5\right)\right)+x_3\left(1+x_4\left(1+x_5\left(1+x_6\right)\right)\right)\right) \\
		1+x_1\left(1+x_4+x_2\left(1+x_4\left(1+x_5\right)\right)\right)\\
		-1-x_1\left(1+x_4\right) \\
		1
	\end{pmatrix}\,,
\end{equation}
and the components of the fourth column $\underline{m}_4$ are given by
\begin{align}
	m_{41} =& -1 - x_1 - x_1 x_2 - x_1 x_2 x_3 - x_1 x_4 - x_1 x_2 x_4 - x_1 x_2 x_3 x_4 - x_1 x_2 x_4 x_5 - x_1 x_2 x_3 x_4 x_5 \nonumber \\
	&- x_1 x_2 x_3 x_4 x_5 x_6 - x_1 x_4 x_7 - x_1 x_2 x_4 x_7 - x_1 x_2 x_3 x_4 x_7 - x_1 x_2 x_4 x_5 x_7 \nonumber \\
	&- x_1 x_2 x_3 x_4 x_5 x_7 - x_1 x_2 x_3 x_4 x_5 x_6 x_7 - x_1 x_2 x_4 x_5 x_7 x_8 - x_1 x_2 x_3 x_4 x_5 x_7 x_8 \nonumber \\ 
	&- x_1 x_2 x_3 x_4 x_5 x_6 x_7 x_8 - x_1 x_2 x_3 x_4 x_5 x_6 x_7 x_8 x_9\,,
\\	
	m_{42} =&\, 1 + x_1 + x_1 x_2 + x_1 x_4 + x_1 x_2 x_4 + x_1 x_2 x_4 x_5 + x_1 x_4 x_7 + x_1 x_2 x_4 x_7 + x_1 x_2 x_4 x_5 x_7 \nonumber \\
	&+ x_1 x_2 x_4 x_5 x_7 x_8\,,
\\	
	m_{43} =& -1 - x_1 - x_1 x_4 - x_1 x_4 x_7\,,
\\	
	m_{44} =&\, 1\,.
\end{align}
The first four columns of the parameterised $4\times 5$ matrix $M$ for $n=9$ are the same as that for $n=8$. The components of the last column $\underline{m}_5$ are given by
\begin{align}
	m_{51} =&\, m_{41} - x_1 x_4 x_7 x_{10} - x_1 x_2 x_4 x_7 x_{10} - x_1 x_2 x_3 x_4 x_7 x_{10} - x_1 x_2 x_4 x_5 x_7 x_{10} \nonumber \\
	&- x_1 x_2 x_3 x_4 x_5 x_7 x_{10} - x_1 x_2 x_3 x_4 x_5 x_6 x_7 x_{10} - x_1 x_2 x_4 x_5 x_7 x_8 x_{10} \nonumber \\
	&- x_1 x_2 x_3 x_4 x_5 x_7 x_8 x_{10} - x_1 x_2 x_3 x_4 x_5 x_6 x_7 x_8 x_{10} - x_1 x_2 x_3 x_4 x_5 x_6 x_7 x_8 x_9 x_{10} \nonumber \\
	&- x_1 x_2 x_4 x_5 x_7 x_8 x_{10} x_{11} - x_1 x_2 x_3 x_4 x_5 x_7 x_8 x_{10} x_{11} - x_1 x_2 x_3 x_4 x_5 x_6 x_7 x_8 x_{10} x_{11} \nonumber \\
	&- x_1 x_2 x_3 x_4 x_5 x_6 x_7 x_8 x_9 x_{10} x_{11} - x_1 x_2 x_3 x_4 x_5 x_6 x_7 x_8 x_9 x_{10} x_{11} x_{12}\,,
\\
	m_{52} =&\, m_{42} + x_1 x_4 x_7 x_{10} + x_1 x_2 x_4 x_7 x_{10} + x_1 x_2 x_4 x_5 x_7 x_{10} + x_1 x_2 x_4 x_5 x_7 x_8 x_{10} \nonumber \\
	&+ x_1 x_2 x_4 x_5 x_7 x_8 x_{10} x_{11}\,,
\\
	m_{53} =&\, m_{43} - x_1 x_4 x_7 x_{10}\,,
\\
	m_{54} =&\, 1\,.
\end{align}

\bibliographystyle{SupportingFiles/JHEP}
\bibliography{SupportingFiles/references}

\end{document}